\newcommand{\real}{\mathbb{R}}
\newcommand{\nat}{\mathbb{N}}
\newcommand{\cahy}{\mathcal{S}}
\newcommand{\mes}{\mu}
\newcommand{\tll}{\ll}
\newcommand{\supp}{\operatorname{supp}}
\newcommand{\thck}{I}
\newcommand{\vol}{V}
\newcommand{\tf}{t}
\newcommand{\Lip}{\operatorname{Lip}}
\newcommand{\ima}{\operatorname{Im}}
\newcommand{\ball}{B}
\newcommand{\cpq}{\mathcal{C}}
\theoremstyle{plain}
\newtheorem{thm}{Theorem}[section]
\newtheorem{conj[thm]}{Conjecture}
\newtheorem{lem}[thm]{Lemma}
\newtheorem{prop}[thm]{Proposition}
\newtheorem{cor}[thm]{Corollary}
\theoremstyle{definition}
\newtheorem{defn}[thm]{Definition}
\newtheorem{ex}[thm]{Example}
\newtheorem{rem}[thm]{Remark}
\begin{document}

\title{Time functions on Lorentzian length spaces}
\author[A.~Burtscher and L.~Garc\'ia-Heveling]{Annegret Burtscher and Leonardo Garc\'ia-Heveling}

\maketitle

\begin{abstract}
 In general relativity, time functions are crucial objects whose existence and properties are intimately tied to the causal structure of a spacetime and also to the initial value formulation of the Einstein equations. In this work we establish all fundamental classical existence results on time functions in the setting of Lorentzian (pre-)length spaces (including causally plain continuous spacetimes, closed cone fields and even more singular spaces). More precisely, we characterize the existence of time functions by $K$-causality, show that a modified notion of Geroch's volume functions are time functions if and only if the space is causally continuous, and lastly, characterize global hyperbolicity by the existence of Cauchy time functions, and Cauchy sets. Our results thus inevitably show that no manifold structure is needed in order to obtain suitable time functions.
%\end{abstract}

%%% CLASSIFICATION %%%

\bigskip
{\footnotesize
\noindent 
\emph{\textup{2020} Mathematics Subject Classification:} 
53C23 (primary), %Global geometric and topological methods (à la Gromov); differential geometric analysis on metric spaces
53C50, %Lorentz manifolds, manifolds with indefinite metrics (GLOBAL)
53C75, %Geometric orders, order geometry (in DG)
53C80, %Applications of global differential geometry to the sciences
83C05 (secondary). %Einstein's equations (general structure, canonical formalism, Cauchy problems)

\medskip
\noindent
\emph{Keywords:} general relativity, causality theory, time functions, volume functions, spacetimes, Lorentzian length spaces.}
\end{abstract}

%%%%%%%%%%%%%%%%%%%%%%%%%%%% INTRODUCTION %%%%%%%%%%%%%%%%%%%%%%%%%%%%%%%%%%%%%%%%%%%%%%%%%%%%%

\section{Introduction}

On a smooth spacetime $(M,g)$ a continuous function $t\colon M \to \real$ is called a \emph{time function} if it satisfies
\[
 p < q \implies t(p) < t(q) \ \text{ for all } \ p,q \in M,
\]
where $p < q$ means that there exists a future-directed causal curve from $p$ to $q$, and that $p \neq q$. Time functions play a crucial role in Lorentzian causality theory and Einstein's general theory of relativity.

The study of time functions has a long history in general relativity. Their origin can be traced back to the works of Geroch and Hawking in the late 1960s. Geroch introduced \emph{volume time functions} by normalizing the volume of a spacetime to one, and by defining the time of a point $p$ as the volume of its chronological past $I^-(p)$. In his seminal work \cite{Ger} from 1970, Geroch used these volume functions to characterize global hyperbolicity by the existence of Cauchy surfaces and to obtain a topological splitting. Global hyperbolicity is the strongest and most important causality condition in general relativity. Cauchy surfaces represent the natural sets to pose initial conditions for the Einstein equations (for a self-contained exposition see \cite{Ri}). Moreover, global hyperbolicity and its different characterizations play a crucial role in the singularity theorems of Penrose and Hawking (see \cite[Section 8.2]{HaEl}), the Lorentzian splitting theorems (see \cite{Ba,Es,Ga} and follow-up work) and the formulation of Quantum Field Theory on curved backgrounds \cite[Chapter 4]{BGP}.

Building upon Geroch's idea, Hawking~\cite{Haw} showed that volume functions can be ``smeared out'' to obtain time functions at a significantly lower step on the causal ladder, namely stable causality (see also the work of Minguzzi~\cite{Min,Min2} for the same result via the equivalent notion of $K$-causality and \cite[Figure 20]{Min3} for a depiction of the complete causal ladder). This result contributed to Hawking's program to find the minimal causality conditions that one should impose on a spacetime in order to consider it as physically reasonable. As an in-between result between stable causality and global hyperbolicity, Hawking and Sachs~\cite{HaSa} showed in 1974 that Geroch's volume functions are continuous themselves (that is, without the use of an averaging procedure) precisely when the spacetime is causally continuous. Their proof, however, contained a loophole that was later filled by Dieckmann~\cite{Die2,Die}.

After these foundational works, the question remained whether time functions and Cauchy surfaces can be chosen smooth, rather than just continuous. Despite several attempts by Seifert~\cite{Sei}, Sachs and Wu~\cite{SaWu} and Dieckmann~\cite{Die2,Die3}, this problem remained open for decades. Only in the early 2000s it was firmly established by Bernal and S\'anchez~\cite{BeSa1,BeSa,BeSa3} that a spacetime that admits a continuous time function also admits a smooth one, and that Geroch's topological splitting of globally hyperbolic spacetimes can be promoted to a smooth, orthogonal splitting. Building on these results, S\"amann~\cite{Sae} showed that even if the metric tensor is merely continuous, global hyperbolicity still implies the existence of smooth Cauchy surfaces and time functions (but no orthogonal splitting). This is in spite of the pathological behavior that continuous metrics exhibit, first discovered by Chru\'sciel and Grant~\cite{ChGr}, such as ``causal bubbles'' (failure of the push-up property) and not necessarily open chronological futures/pasts~\cite{GH,GKSS,Li,SoWo}. While their work differs significantly from previous approaches, it was recently established by Chru\'{s}ciel, Grant and Minguzzi~\cite{ChGrMi} that also a family of Geroch's time functions are continuously differentiable for globally hyperbolic $C^{2,1}$-metrics, and that Hawking's time functions can be smoothed out.

A radically different approach to show smoothness of time functions is that of Fathi and Siconolfi~\cite{Fa,FaSi}, which uses weak KAM theory. It has the added advantage that it is formulated for smooth manifolds equipped with a continuous field of closed convex tangent cones, of which Lorentzian manifolds are just one class of examples. Their approach has been developed further in very recent works of Bernard and Suhr~\cite{BeSu,BeSu2} (extending Conley theory to this setting) and Minguzzi~\cite{Min4} (using traditional arguments). Besides the elimination of the need of a Lorentzian metric in the theory of closed (causal) cones, also the smooth manifold structure should be removable to some degree. Indeed, it was already pointed out by Minguzzi~\cite{Min5} that sufficient conditions for the existence of time functions can be obtained for more general topological spaces through Nachbin's theory of closed ordered spaces \cite{Na}.

It remained open, however, if and to what extent general (and in particular, all finer) results about the existence and properties of time functions known for smooth spacetimes rely on the causal and topological structure alone. Our present work answers this question fully in the abstract framework of Lorentzian (pre-) length spaces by characterizing the existence of several types of time functions with different steps on the causal ladder.

The framework of Lorentzian (pre-)length spaces is widely applicable as a broad range of singular and regular ``spacetimes'' appearing in the literature are encompassed, including the above-mentioned closed cone fields and causally plain continuous spacetimes. Introduced by Kunzinger and S\"amann~\cite{KuSa} in 2018, the notion of Lorentzian length spaces makes explicit what is already evident in the early works of Weyl, Penrose and what has also been proposed by various other authors (see, for instance, \cite{BLMS,BoSe,BoSe2,Bus,EPS,KrPe,We,Wo} and \cite[Section 4.2.4]{Min3}), namely that one should treat the causal structure as the most fundamental geometric object in the general theory of relativity. At the same time, an attempt has been made to translate metric geometric techniques \`{a} la Gromov that have revolutionized Riemannian geometry to Lorentzian geometry. In this spirit, Lorentzian (pre-)length spaces are defined as essentially metric spaces equipped with a chronological and a causal order satisfying basic order-theoretic properties such as open chronological futures/pasts and the push-up property (see Section~\ref{prelim} for precise definitions). This general theory of ``spacetime-like'' spaces encompasses basically all previously mentioned low-regularity settings. Despite its youth the Lorentzian (pre-)length framework has already celebrated important successes in the context of causality theory~\cite{ACS,KuSa}, inextendibility results~\cite{AGKS,GKS}, synthetic curvature bounds \cite{CaMo,McC,MoSu} (related to energy conditions in general relativity) and stability \cite{AlBu,KuSt} with respect to the null distance.

Nonetheless, the question of when a time function on Lorentzian (pre-)length space exists has been neglected until now, the only exception being the recent work of Kunzinger and Steinbauer~\cite{KuSt} where it is shown that the existence of certain time functions implies strong causality (the converse, however, is false even in the manifold setting). In the present work, we fill this gap by establishing several sharp existence results. The statements and a discussion of our main results follows.

\subsection*{Main results}
 
In this paper we establish three major results relating the existence (and properties) of time functions to three different steps on the causal ladder, starting from the optimal condition for existence ($K$-causality) and building it up to the top one (global hyperbolicity). All our results in the main body of the paper are obtained for Lorentzian pre-length spaces obeying milder axioms than those required of a Lorentzian length space (all definitions are presented in Section~\ref{prelim} in a self-contained way). Establishing our results in the context of Lorentzian pre-length spaces is important because they are more widely applicable and often sufficient (see also \cite{CaMo,KuSt}). In this introduction, however, we state a simplified version of our theorems in the setting of Lorentzian length spaces (at the end we briefly comment on the pre-length case). Generally it is useful to recall that on a smooth spacetime, the local causal and topological structure is very rigid (all neighborhoods look the same). Our proofs, on the other hand, rely almost exclusively on global arguments. This way, we can reduce the local assumptions to the bare minimum, doing away completely with the manifold structure.

Our first result characterizes the mere existence of time functions on Lorentzian (pre-)length spaces by $K$-causality, generalizing a result of Hawking~\cite{Haw} and Minguzzi~\cite{Min2} for smooth spacetimes (note that $K$-causality is equivalent to stable causality in the smooth setting \cite{Min}).

\begin{thm}
 Suppose $X$ is a second countable, locally compact Lorentzian length space. Then $X$ is $K$-causal if and only if $X$ admits a time function. \label{thm1}
\end{thm}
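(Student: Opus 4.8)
The plan is to prove the two implications separately; the substantial work is in manufacturing a time function out of $K$-causality, which I would do through a countable supply of continuous $K^+$-monotone Urysohn functions.

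\noindent\emph{Admitting a time function implies $K$-causality.} If $t$ is a time function, then $R_t:=\{(p,q)\in X\times X: t(p)\le t(q)\}$ is a closed (by continuity of $t$), reflexive and transitive relation containing the causal relation $\le$, so $K^+\subseteq R_t$ because $K^+$ is the smallest such relation containing $\le$. Consequently any $K^+$-loop pins two distinct points $p\neq q$ to a common level $t(p)=t(q)=:c$. I would then unwind the loop via the inductive description of $K^+$ (iterating ``transitive hull, then closure'' starting from $\le$): all intermediate points stay in $t^{-1}(c)$ since $t$ is $K^+$-monotone and takes the value $c$ at both endpoints. A genuine $\le$-step between distinct points immediately violates $p<q\Rightarrow t(p)<t(q)$, so the loop must be built entirely from limit steps; ruling this last case out requires a limit-curve argument (local compactness, second countability) producing an honest nonconstant causal curve joining distinct points at level $c$, the same contradiction. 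Hence $K^+$ is antisymmetric.

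\noindent\emph{$K$-causality implies a time function exists.} Now $K^+$ is a closed partial order on the second countable, locally compact Hausdorff space $X$. I would build $t=\sum_n 2^{-n}f_n$ from continuous $K^+$-monotone functions $f_n\colon X\to[0,1]$; uniform convergence makes $t$ continuous and $K^+$-monotone, hence monotone for $\le$, so it only remains to arrange strictness, i.e.\ that for every $p<q$ some $f_n$ has $f_n(p)<f_n(q)$. The separating data: for $p\neq q$ with $(p,q)\in K^+$, the sections $K^-(p)=\{x:(x,p)\in K^+\}$ and $K^+(q)=\{x:(q,x)\in K^+\}$ are closed (sections of a closed relation), respectively $K^+$-decreasing and $K^+$-increasing, and disjoint (a common point yields $(q,p)\in K^+$ by transitivity, whence $p=q$ by antisymmetry). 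An Urysohn lemma for ordered spaces in the spirit of Nachbin (cf.\ \cite{Na,Min5}), applicable since a locally compact second countable Hausdorff space is normally ordered for the closed order $K^+$, then produces a continuous $K^+$-monotone function that is $0$ on the decreasing set and $1$ on the increasing one. To obtain a single countable family catching \emph{every} causal pair $p<q$ -- including purely null ones with $p\not\ll q$, which are not reachable by a timelike-subdivision trick -- I would index the $f_n$ by pairs $U,V$ from a countable base of relatively compact open sets with $(\overline V\times\overline U)\cap K^+=\varnothing$, using the closed decreasing set $\{x:\exists\, b\in\overline U,\ (x,b)\in K^+\}$ and the closed increasing set $\{x:\exists\, a\in\overline V,\ (a,x)\in K^+\}$ (closedness of these projections needs compactness of $\overline U,\overline V$). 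Given $p<q$, pick a future-directed causal curve $\gamma$ from $p$ to $q$; since $(q,p)\notin K^+$ while $K^+$ is closed, some $\gamma(s)$ near $p$ and $\gamma(s')$ near $q$ have $(\gamma(s'),\gamma(s))\notin K^+$, and shrinking to a suitable index $(U,V)$ puts $p$ in the decreasing set and $q$ in the increasing set, so $f_{U,V}(p)=0<1=f_{U,V}(q)$.

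\noindent\emph{Main obstacle.} I expect the strictness/countability step above to be the crux: assembling one countable family of ordered-Urysohn functions that detects all causally related pairs, treating null pairs separately through compactly-closed neighborhoods, and -- more importantly -- verifying that the Nachbin-type ordered Urysohn/Tietze machinery really applies in the merely locally compact (non-compact) setting, which is precisely where the order-topological hypotheses have to be married to the Lorentzian length space axioms (openness of $I^\pm$, push-up, existence of causal limit curves). A secondary delicate point is the limit-curve argument in the first implication, needed to convert a hypothetical $K^+$-loop into a genuine causal curve inside a level set of $t$.
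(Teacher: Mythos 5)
Your overall division of labor is reasonable, but both halves contain genuine gaps. For the direction ``time function $\Rightarrow$ $K$-causality'', the step you describe as ``unwinding the loop via the inductive description of $K^+$'' does not work as stated: in the transfinite generation of $K^+$ (alternating transitive hull and topological closure), the closure steps produce a pair $(p,q)$ with $t(p)=t(q)=c$ as a \emph{limit} of pairs $(p_n,q_n)$ whose $t$-values need not equal $c$, so the claim that ``all intermediate points stay in $t^{-1}(c)$'' fails, and there is no direct way to extract ``an honest nonconstant causal curve joining distinct points at level $c$'' -- $K^+$-related pairs need not be joined by any causal curve. This is precisely why the paper (following Minguzzi) proceeds differently: it first proves that the mere existence of a time function forces non-total imprisonment (Lemma~\ref{lem2}, itself a limit-curve argument you do not mention), then establishes the structural dichotomy for $K^+$ (Lemma~\ref{lem3}: either $p\leq q$, or from every relatively compact neighborhood $B$ of $p$ one can escape to some $r\in\partial B$ with $p<r$ and $(r,q)\in K^+$), and only then shows that any time function is automatically strictly increasing on $K^+$ (Lemma~\ref{lem4}), which gives antisymmetry. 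Your sketch gestures at the right tools (limit curves, local compactness) but omits the two lemmas that carry all the weight, and the induction scheme you propose in their place is not sound.

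For the converse direction your plan is a genuinely different route from the paper's: the paper simply invokes Levin's theorem (Theorem~\ref{Levin}), whose conclusion ($f(x)\leq f(y)$ on the closed preorder, with equality only for $x=y$) already yields strictness, whereas you propose to rebuild such a utility by hand from countably many Nachbin-type monotone Urysohn functions. Your separation scheme (closed down-set of $\overline U$ and up-set of $\overline V$ for basis pairs with $(\overline V\times\overline U)\cap K^+=\emptyset$, closedness via compactness of $\overline U,\overline V$, disjointness via transitivity, strictness at every $p<q$ via closedness of $K^+$ and local compactness -- the causal-curve detour is unnecessary) is fine in outline. The gap is exactly the point you flag yourself: the assertion that a locally compact, second countable Hausdorff space with a closed (pre)order is normally ordered is \emph{not} Nachbin's theorem, which covers the compact case; for locally compact $\sigma$-compact ($k_\omega$) spaces this is a substantive later result (due to Minguzzi), and without it your ordered Urysohn lemma has no justification. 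So as written this half also rests on an unproved ingredient; either you prove normal preorderedness in this setting or you cite Levin's theorem, as the paper does, in which case the whole countable-family construction becomes unnecessary.
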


Here it is crucial that the $K$-relation is closed and transitive (by Definition \ref{Kdef}). Then, as already pointed out by Minguzzi~\cite{Min5}, the general theory of topological ordered spaces yields time functions on $K$-causal spaces. To prove the converse statement, namely that existence of a time function implies $K$-causality, we closely follow the approach of Minguzzi~\cite{Min2}, which requires the use of limit curve theorems.

Our second result is concerned with an explicit construction of time functions on Lorentzian (pre-)length spaces that can be equipped with Borel probability measures. Here, we are influenced by Geroch's notion of volume functions \cite{Ger} as well as Hawking's averaging procedure \cite{Haw}. Since the boundaries of light cones, however, are no longer hypersurfaces with measure zero as in the smooth manifold setting, the definition of our \emph{averaged volume functions} as well as the proof of their causal properties and continuity are significantly more involved. The result we obtain is essentially a generalization of a theorem by Hawking and Sachs~\cite{HaSa} (and Dieckmann's rigorous follow-up work \cite{Die}), and thus the corresponding step on the causal ladder is that of causal continuity.

\begin{thm} \label{thm2}
 Let $X$ be a second countable, locally compact Lorentzian length space. Then $X$ is causally continuous if and only if the averaged volume functions on $X$ are time functions.
\end{thm}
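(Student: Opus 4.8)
The plan is to establish both implications by reducing everything to the interplay between continuity of the averaged volume functions and outer continuity of the causal pasts and futures. Denote by $\hat t^{-}$ and $\hat t^{+}$ the averaged past and future volume functions. Directly from their construction, and with no causality hypothesis, they are monotone along causal curves in the non-strict sense ($p \le q$ implies $\hat t^{-}(p) \le \hat t^{-}(q)$), and $\hat t^{-}$ is lower semicontinuous while $\hat t^{+}$ is upper semicontinuous: indeed $\hat t^{-}(p)$ is an integral against $\mu$ (and, if present, the averaging parameter) of integrands that are lower semicontinuous in $p$ — by openness of the chronological relation, respectively lower semicontinuity of the time separation, depending on the precise form of the averaging — so Fatou's lemma gives $\liminf_{n} \hat t^{-}(p_n) \ge \hat t^{-}(p)$ whenever $p_n \to p$. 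Consequently the theorem reduces to two assertions: (i) under causal continuity the functions $\hat t^{-}$ are also upper semicontinuous, $\hat t^{+}$ lower semicontinuous, and both are strictly monotone; and (ii) conversely, if $\hat t^{\pm}$ are continuous and strictly monotone, then $X$ is causally continuous.

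For the forward direction (i), recall that causal continuity means that $X$ is distinguishing and that the maps $p \mapsto \overline{I^{\pm}(p)}$ are outer continuous (equivalently, $X$ is distinguishing and reflecting). Strict monotonicity is then a routine adaptation of Geroch's argument: for $p < q$ one exhibits a nonempty open subset of $(I^{-}(q) \setminus \overline{I^{-}(p)}) \cup (I^{+}(p) \setminus \overline{I^{+}(q)})$ — in the chronological case via an interior point of a timelike curve from $p$ to $q$, and in the purely null case by invoking distinguishing (to force a strict inclusion of pasts or of futures) together with reflectivity (to prevent that inclusion from being confined to the respective boundary) — whence full support of $\mu$ and the classical device of combining the past and future volume functions yield a strict increase. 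Upper semicontinuity of $\hat t^{-}$ is the crux and the main obstacle of the theorem. For $p_n \to p$, outer continuity controls the sets $I^{-}(p_n)$ only up to closure — in effect $\bigcap_{m} \overline{\bigcup_{n \ge m} I^{-}(p_n)} \subseteq \overline{I^{-}(p)}$ — so a plain dominated-convergence estimate only bounds $\limsup_{n} \hat t^{-}(p_n)$ by the averaged mass of $\overline{I^{-}(p)}$, which exceeds $\hat t^{-}(p)$ by the $\mu$-mass, possibly positive, carried by $\partial I^{-}(p)$. Killing this defect is precisely what the averaging achieves: after interchanging the averaging integral with $\mu$ by Fubini, the contribution of the offending boundary set is itself integrated against the averaging parameter, and because that set is swept out as the parameter ranges it is charged for only a null set of parameter values. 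Turning this into a proof — identifying the correct monotonicity of the averaged integrand in the parameter and verifying that $\partial I^{-}(p)$ is genuinely traversed — is where the bulk of the technical work lies.

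For the converse (ii), the hypothesized averaged volume functions, being continuous and strictly monotone, are time functions, so Theorem~\ref{thm1} applies and $X$ is $K$-causal, hence in particular distinguishing; it thus remains to prove outer continuity of $p \mapsto \overline{I^{\pm}(p)}$, and I argue by contraposition. If, say, $p \mapsto \overline{I^{-}(p)}$ fails to be outer continuous at $p_0$, there are $p_n \to p_0$ and points $x_n \in I^{-}(p_n)$ with a cluster point $x \notin \overline{I^{-}(p_0)}$ (the cluster point existing after localizing within a compact neighborhood of $p_0$, by local compactness); applying the limit curve theorem — available by second countability and local compactness — to timelike curves from $x_n$ to $p_n$, one produces, as in the Hawking, Sachs and Dieckmann reasoning, a nonempty open set $W$ that lies uniformly in the timelike past of $p_n$ for infinitely many $n$ and satisfies $\overline{W} \cap \overline{I^{-}(p_0)} = \emptyset$. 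Since $\hat t^{-}$ was designed to dominate a fixed positive multiple of the raw $\mu$-volume of any such region, full support of $\mu$ gives $\liminf_{n} \hat t^{-}(p_n) \ge \hat t^{-}(p_0) + \delta$ with $\delta \sim \mu(W) > 0$, contradicting the upper semicontinuity implied by continuity of $\hat t^{-}$. The same argument applied to $p \mapsto \overline{I^{+}(p)}$ completes the proof that $X$ is reflecting, and together with the distinguishing property this is exactly causal continuity. The two spots where the averaged functions (rather than Geroch's raw ones) are indispensable are thus the boundary estimate in (i) and the matching uniform lower bound in (ii); ensuring the averaging weight stays bounded below on uniformly-timelike-past regions while still annihilating boundary mass is the delicate balance the construction must strike.
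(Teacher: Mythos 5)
Your proposal is a strategy outline rather than a proof, and the place where it stops is exactly the crux of the theorem. The heart of the forward direction is showing that reflectivity forces continuity of the averaged volume functions, and your treatment of it is a heuristic — the boundary mass of $\partial I^{-}(p)$ is ``swept out'' by the averaging parameter — which you yourself flag as ``where the bulk of the technical work lies''. Besides being unproven, it is not quite aimed at the right object: with the paper's Definition~\ref{volfct}, $t^{\pm}(p)=\mp\int_0^1\mes\bigl(\thck^{\pm}_r(p)\bigr)\,dr$ where $\thck^{\pm}_r(p)$ is the metric $r$-thickening of $I^{\pm}(p)$, and what matters is not $\partial I^{\pm}(p)$ but the boundaries $\partial\thck^{\pm}_r(p)$ of the thickened sets; these are pairwise disjoint as $r$ varies, so only countably many values of $r$ can carry positive $\mes$-mass (Lemma~\ref{sameae}), and continuity at $q$ then follows by sandwiching $V^{-}_r$ between its values along chronologically approximating sequences from the past and the future, using the reflectivity-equivalent condition $\bigcap_{q\tll x}I^{-}(x)\subseteq\overline{I^{-}(q)}$ and dominated convergence (Lemma~\ref{refl}). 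Likewise, in your converse the key open set $W$ is only asserted (``as in the Hawking, Sachs and Dieckmann reasoning''), and you silently identify reflectivity with outer continuity of $p\mapsto\overline{I^{\pm}(p)}$ along arbitrary sequences — an equivalence that is classical for smooth spacetimes but is not available off the shelf here and would itself need proof. The paper needs neither limit curve theorems nor that identification in this part: from a violation of reflectivity it produces the required open set purely order-theoretically, namely $B=I(\tilde p,p)\setminus\overline{\thck^{-}_{r_0}(q)}$ for a violating pair $\tilde p\tll p$, and tests $t^{-}$ along a chronologically future-approximating sequence of $q$ (which also sidesteps the fact that for your arbitrary $p_n\to p_0$ there is no causal monotonicity to compare $t^{-}(p_n)$ with $t^{-}(p_0)$; that comparison has to be argued, e.g.\ via the Fatou bound you set up, and you do not do so).

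There is a second genuine gap in the strict monotonicity step. The theorem asserts that each of $t^{+}$ and $t^{-}$ is a time function, but your Geroch-style argument only exhibits a nonempty open subset of the union $(I^{-}(q)\setminus\overline{I^{-}(p)})\cup(I^{+}(p)\setminus\overline{I^{+}(q)})$ and then invokes ``the classical device of combining the past and future volume functions''; at best this makes a combination strictly increasing, not $t^{-}$ and $t^{+}$ separately. The paper proves the individual statements, and with weaker hypotheses (no reflectivity): for $p<q$, causal past-distinguishing gives $I^{-}(p)\subsetneq I^{-}(q)$, an openness/push-up argument then shows $q\notin\overline{I^{-}(p)}$, and normality of the metric together with the approximating property produces an open piece of $I^{-}(q)$ at positive distance $r_0$ from $I^{-}(p)$ — positive distance, not mere disjointness from the closure, being what the thickened volumes detect (Proposition~\ref{dist}). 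Finally, your route to the distinguishing property via Theorem~\ref{thm1} and ``$K$-causal, hence distinguishing'' relies on causal-ladder steps for Lorentzian length spaces that you would have to justify; the direct argument — equal pasts force equal averaged volume functions, contradicting strict monotonicity, followed by the upgrade from causally distinguishing to distinguishing using local compactness (Proposition~\ref{distingprop}) — avoids that dependence and is what the paper does.
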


While the assumption on second countability of the underlying metric space is crucial in order to have a suitable measure at hand, local compactness can be removed by using a weaker (but in the smooth case equivalent) notion of causal continuity.

Our third result characterizes globally hyperbolic Lorentzian (pre-)length spaces by the existence of Cauchy time functions, whose level sets are Cauchy sets that are intersected by every inextendible causal curve exactly once. The smooth spacetime analogue is the seminal 1970 result of Geroch~\cite{Ger}.

\begin{thm} \label{thm3}
 Let $X$ be a second countable Lorentzian length space with a proper metric structure. Then the following are equivalent:
 \begin{enumerate}
  \item $X$ is globally hyperbolic,
  \item $X$ is non-totally imprisoning and the set of causal curves between any two points is compact,
  \item $X$ admits a Cauchy set,
  \item $X$ admits a Cauchy time function.
 \end{enumerate}
\end{thm}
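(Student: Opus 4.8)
The plan is to establish the cyclic chain of implications $(i)\Rightarrow(ii)\Rightarrow(i)$ together with $(i)\Rightarrow(iv)\Rightarrow(iii)\Rightarrow(i)$, which gives all four equivalences. Throughout I would use that a proper metric structure provides local compactness (and completeness), so that the limit curve theorems for Lorentzian (pre-)length spaces are available; this is essentially the only place where the ``proper'' hypothesis enters, and it is what makes the whole argument go through without a manifold structure.

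For $(i)\Leftrightarrow(ii)$: global hyperbolicity contains (or at once implies, via strong causality) non-total imprisonment, so the content of $(i)\Rightarrow(ii)$ is the compactness of the space $\mathcal C(p,q)$ of causal curves from $p$ to $q$ in the uniform topology. When $p\not\le q$ this space is empty; otherwise all its elements have image inside the compact diamond $J^+(p)\cap J^-(q)$, and the limit curve theorem for causal curves with prescribed endpoints lying in a compact set yields, from any sequence, a subsequential limit that is again a causal curve from $p$ to $q$. For $(ii)\Rightarrow(i)$, non-total imprisonment rules out closed causal curves, hence gives causality, and since every point of $J^+(p)\cap J^-(q)$ lies on a causal curve from $p$ to $q$ (concatenate), the diamond equals $\bigcup_{\gamma\in\mathcal C(p,q)}\ima\gamma$, the image of $\mathcal C(p,q)\times[0,1]$ under the continuous evaluation map, and is therefore compact.

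For $(i)\Rightarrow(iv)$ I would follow Geroch. Since a globally hyperbolic space sits above causal continuity on the causal ladder, Theorem~\ref{thm2} applies, so the averaged volume functions $\hat V^{\pm}$ attached to a fixed finite Borel measure $\mu$ of full support are continuous time functions; from them I build a Cauchy time function $\tau$ by a suitably normalized combination (as in Geroch's construction, e.g.\ a ratio). The point is that each level set of $\tau$ is a Cauchy set: being a time function, $\tau$ is strictly increasing along causal curves, so an inextendible causal curve $\gamma$ meets a level set at most once, and it then remains to see that $\tau\circ\gamma$ runs through its whole range, for which it suffices that $\hat V^-\!\circ\gamma$ tends to its infimum at the past end of $\gamma$ and $\hat V^+\!\circ\gamma$ at the future end. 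This rests on the lemma that a past-inextendible causal curve $\gamma$ satisfies $\bigcap_s J^-(\gamma(s))=\emptyset$: a point $x$ in this intersection would, together with any fixed parameter value $s_0$, confine the past-inextendible piece $\gamma|_{(a,s_0]}$ to the compact diamond $J^+(x)\cap J^-(\gamma(s_0))$, contradicting non-total imprisonment; continuity from above of the finite measure $\mu$ (and the corresponding monotone-limit property of the sets entering the averaged volume functions, already set up in the proof of Theorem~\ref{thm2}) gives the limits. The implication $(iv)\Rightarrow(iii)$ is then immediate, since a Cauchy time function is continuous and real-valued on the nonempty space $X$, so it has a nonempty level set, which is a Cauchy set by definition.

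The main obstacle is $(iii)\Rightarrow(i)$. Given a Cauchy set $S$, causality is easy — a closed causal curve traversed repeatedly would be an inextendible causal curve meeting $S$ more than once — but non-total imprisonment and, above all, the compactness of the diamonds $J^+(p)\cap J^-(q)$ have to be extracted from the single property that every inextendible causal curve crosses $S$ exactly once. Here I would argue along the lines of the classical fact that domains of dependence are globally hyperbolic: for a sequence $r_n\in J^+(p)\cap J^-(q)$, take causal curves through $p,r_n,q$, extend them to inextendible causal curves $\gamma_n$, each crossing $S$ once, and use the Cauchy set to prevent the limit curve produced by the limit curve theorem from breaking up into escaping inextendible pieces, so that it is a genuine causal curve from $p$ to $q$ through a limit point of the $r_n$. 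Carrying this confinement argument out with nothing but the metric and causal data of a Lorentzian length space — controlling the limit curve near $S$ and excluding imprisonment — is the technical heart of the proof.
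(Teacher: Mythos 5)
Your outline reproduces the paper's route for most of the cycle: global hyperbolicity yields causal continuity, hence the averaged volume functions are time functions (Theorem~\ref{ccthm}), and a Geroch-type ratio gives a Cauchy time function whose level sets are Cauchy sets; the converse direction (ii)~$\Rightarrow$~(i) and your direct proof of (i)~$\Rightarrow$~(ii) are also fine. But there is a genuine gap at the implication you yourself flag as ``the technical heart'': Cauchy set $\Rightarrow$ global hyperbolicity is never actually proved. Neither non-total imprisonment nor compactness of the diamonds (equivalently of $\cpq(p,q)$) is derived from the single hypothesis that every doubly-inextendible causal curve meets $\cahy$ exactly once; the sentence about ``preventing the limit curve from breaking up into escaping inextendible pieces'' is not an argument. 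The paper's mechanism is concrete: one first proves structural properties of Cauchy sets (Proposition~\ref{cahyprop}) --- acausality, the decomposition $X=I^-(\cahy)\sqcup\cahy\sqcup I^+(\cahy)$, and closedness of $\cahy$ and of $J^\pm(\cahy)$ --- all of which rest on the existence of doubly-inextendible extensions of causal curves (Proposition~\ref{exisinex}, which is where properness and the limit curve theorems enter). With these in hand, non-total imprisonment follows by a dedicated argument (Lemma~\ref{nti}: an imprisoned curve would produce, via Theorem~\ref{thm314}, a past-inextendible limit curve trapped in $J^+(\cahy)\cap J^-(\cahy)=\cahy$, contradicting acausality), and compactness of $\cpq(p,q)$ follows by splitting each curve at its unique crossing of $\cahy$ and applying the limit curve theorem on each side, with closedness of $J^\pm(\cahy)$ and acausality of $\cahy$ ruling out inextendible limits (Lemmas~\ref{CpS} and~\ref{chcpqcompact}). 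None of these steps is routine, and without them your chain (iii)~$\Rightarrow$~(i) is open, so the equivalence is not established. (Your quick causality argument via a periodically traversed closed causal curve also needs care, since such a curve may meet $\cahy$ in a single \emph{point} though at many parameters; the paper instead gets causality from non-total imprisonment via Proposition~\ref{cau-nti}.)

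A smaller but real imprecision occurs in your (i)~$\Rightarrow$ Cauchy time function step. The key lemma you state, $\bigcap_s J^-(\gamma(s))=\emptyset$ for a past-inextendible causal curve, is proved correctly by the imprisonment argument, but it does not by itself combine with ``continuity from above'' to give $t^-(\gamma(s))\to 0$: the averaged volume functions are built from the $r$-thickenings $\thck^-_r(\gamma(s))$, and emptiness of $\bigcap_s J^-(\gamma(s))$ does not immediately force $\bigcap_s \thck^-_r(\gamma(s))$ to be $\mu$-null. The paper bridges exactly this point in Lemma~\ref{ghvolfcts}: a point of $\bigcap_s \thck^-_r(\gamma(s))$ yields, by properness of $d$, a limit point lying in $\overline{I^-(\gamma(s))}=J^-(\gamma(s))$ for all $s$ (causal simplicity, Proposition~\ref{gh-cc}), which then contradicts non-total imprisonment. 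Your sketch should include this reduction explicitly.
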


To establish Theorem~\ref{thm3} we utilize our averaged volume functions introduced already for Theorem~\ref{thm2}, as well as the behavior of inextendible causal curves. The use of \emph{averaged} volume functions poses an additional difficulty in our proofs, compared to the smooth case. The other main challenge is the fact that our Cauchy sets are not hypersurfaces, and in fact very little can be deduced about their topology (hence the name Cauchy \emph{set} instead of surface, see also our discussion below).

The above theorems follow immediately from their sharper versions, Theorems \ref{Kthm}, \ref{ccthm} and \ref{ghthm}, which are obtained for Lorentzian pre-length spaces. Some essential conditions (which are part of the axioms of Lorentzian length spaces), however, still need to be assumed. In Theorem~\ref{Kthm} (generalizing Theorem~\ref{thm1}), for instance, we need to additionally impose the existence of causal curves and their limit curves. The chronological relation, however, is not needed in the proof at all. The proof of Theorem~\ref{ccthm} (corresponding to Theorem~\ref{thm2}), on the other hand, does not require causal curves, but does make use of both the causal and chronological relation. Furthermore, the two relations need to satisfy a compatibility condition that we call ``approximating'', which simply means that the causal futures and pasts are contained in the closure of the chronological ones. Finally, Theorem~\ref{ghthm} (the pre-length version of Theorem~\ref{thm3}) builds upon Theorem~\ref{ccthm}, hence the ``approximating'' condition is also needed here. Moreover, causal curves are used already in the definition of Cauchy set and Cauchy time function, and the limit curve theorems will be important again.

\subsection*{Discussion and outlook}

Since its inception, the framework of Lorentzian length spaces has seen a rapid expansion \cite{ACS,AGKS,CaMo,GKS,KuSa,KuSt}. Notably, Cavalletti and Mondino~\cite{CaMo} recently introduced a notion of Ricci curvature bounds for Lorentzian pre-length spaces, based on optimal transport theory, that mimics the strong energy condition of general relativity, and implies a version of Hawking's singularity theorem. Our work completes another important milestone in establishing the potential of this non-smooth causal theory by fully characterizing the existence of time functions in terms of the causal ladder. As an immediate application, we have now unambiguously established when one can make use of time functions to define the null distance of Sormani and Vega~\cite{SoVe}. Very recently, this notion (as well as convergence) has been investigated by Kunzinger and Steinbauer~\cite{KuSt} in the context of Lorentzian pre-length spaces (certain limits of examples in \cite[Section 5]{AlBu} are also of this type). With our new characterizations, in particular, of global hyperbolicity via Cauchy time functions, more refined convergence/stability results may be obtained. Moreover, it should be straightforward to carry over the cosmological time function of Andersson, Galloway and Howard~\cite{AGH} to the Lorentzian length space framework using the time separation function in place of the Lorentzian distance.

While in broad terms we show that the classical results about time functions admit direct generalizations for Lorentzian length spaces, we also find interesting and not so subtle differences. In particular, although global hyperbolicity is also characterized by the existence of Cauchy sets, these Cauchy sets need not be homeomorphic to each other. This is in stark contrast to the case of spacetimes, where Geroch's celebrated splitting theorem~\cite{Ger} (later refined by Bernal and S\'anchez~\cite{BeSa1}) shows that all Cauchy surfaces on a given spacetime must have the same topology. In fact, Geroch already showed in \cite{GerTopo} that transitions between compact spatial topologies not only contradict global hyperbolicity, but in fact even violate the most basic of all assumptions, chronology. While time travel is a no-go in any physically sound theory, Sorkin~\cite{Sor} argues that topology change is a necessary feature of \emph{any} convincing candidate theory of quantum gravity. Going beyond the setting of smooth Lorentzian manifolds is thus a necessity in order to admit topology change without violating chronology, a common approach being the use of degenerate Lorentzian metrics \cite{BDGSS,Hor}. Current proposals for quantum gravity also predict that physical spacetimes are represented by non-manifold-like structures at small scales, such as causal sets \cite{Surya}, causal dynamical triangulations \cite{Loll}, causal fermion systems \cite{Fin}, or spin foams \cite{NiPe}. An interesting and important next step will be to see if and how the framework of Lorentzian length spaces fits into these quantum gravitational theories.

\subsection*{Outline}

The paper is structured as follows. In Section~\ref{prelim}, we give a self-contained account of the relevant aspects of the theory of Lorentzian (pre-)length spaces, drawing from the existing literature but also introducing new material. We then prove Theorems~\ref{thm1}, \ref{thm2} and \ref{thm3} (more precisely, the corresponding sharper Lorentzian pre-length space versions) in Sections~\ref{Kcausality}, \ref{volfcts} and \ref{ghsec}, respectively.

%%%%%%%%%%%%%%%%%%%%%%%%%%%%%% LORENTZIAN PRE-LENGTH SPACES %%%%%%%%%%%%%%%%%%%%%%%%%%%%%%%%%%%%%%%

\section{Lorentzian pre-length spaces} \label{prelim}

In this section we recall, and partly refine, the definition of Lorentzian (pre-)length spaces, their causality conditions and the limit curve theorems. We use the notation and results of \cite{ACS,KuSa}. A reader familiar with the smooth case will find that most classical concepts are defined in the same way for Lorentzian pre-length spaces (with the difference that some important properties do not follow automatically but have to be imposed separately, such as causal curves themselves).

In Section~\ref{ssec:def} we recall the definition of Lorentzian pre-length space, and of causal curve. This is standard material, except that we use a more precise nomenclature for inextendible causal curves (Definition~\ref{definex}). In Section~\ref{ssec:limitcurves} we revisit the limit curve theorems of Kunzinger and S\"amann~\cite{KuSa} in the slightly weaker framework of ``local weak causal closedness'' following a suggestion of Ak\'e et al.~\cite{ACS}. We also introduce the new notion of \emph{Lorentzian pre-length spaces with limit curves}, which encompasses all necessary assumptions needed for the application of the limit curve theorems. In Section~\ref{ssec:approx} we introduce the notion of \emph{approximating Lorentzian pre-length space}, which can be seen as a much weaker version of Kunzinger and S\"amann's ``localizability''. Most importantly, we show that our newly introduced properties are, in particular, satisfied by all Lorentzian length spaces. Finally, in Section~\ref{ssec:causalitytime}, we define time functions and introduce some elements of causality theory for Lorentzian pre-length spaces. Notably, we give some new characterizations of non-total imprisonment, both in terms of causal curves and of time functions. Additional causality conditions, including $K$-causality and global hyperbolicity, are introduced in later sections when needed.

\subsection{Basic definitions and properties}\label{ssec:def}

\begin{defn}[{\cite[Definition 2.1]{KuSa}}] \label{causet}
 A \emph{causal set} is a set $X$ equipped with a preorder $\leq$ (called \emph{causal} relation) and a transitive relation $\tll$ (called \emph{chronological} or \emph{timelike} relation) contained in $\leq$. 
\end{defn}

The following notation for the timelike/causal future or past of a point is standard
\begin{align*}
 I^+(p) &:= \left\{ x \in X \mid p \tll x \right\}, & J^+(p) &:= \left\{ x \in X \mid p \leq x \right\}, \\
 I^-(p) &:= \left\{ x \in X \mid x \tll p \right\}, & J^-(p) &:= \left\{ x \in X \mid x \leq p \right\}, \\
 I(p,q) &:= I^+(p) \cap I^-(q), & J(p,q) &:= J^+(p) \cap J^-(q),
\end{align*}
and we write $p < q$ if $p \leq q$ and $p \neq q$.

\begin{defn}[{\cite[Definition 2.8]{KuSa}}]\label{lpls}
 A \emph{Lorentzian pre-length space} is a causal set $(X,\tll,\leq)$ equipped with a metric $d$ and a lower semicontinuous function $\tau\colon X \times X \to [0,\infty]$ satisfying, for all $x,y,z \in X$
 \begin{enumerate}
  \item $\tau(x,y) > 0$ $\Longleftrightarrow$ $x \tll y$,
  \item $\tau(x,y) = 0$ if $x \not\leq y$,
  \item $\tau(x,z) \geq \tau(x,y) + \tau(y,z)$ if $x \leq y \leq z$.
 \end{enumerate}
\end{defn}

Occasionally we denote a Lorentzian pre-length space simply by $X$. It follows from Definition~\ref{lpls} that the sets $I^\pm(p)$ are open for all $p \in X$, a fact that we will also refer to as the \emph{openness of $\tll$}. The crucial push-up property extends from the smooth situation.

\begin{lem}[{Push-up \cite[Lemma 2.10]{KuSa}}] \label{push-up}
 Let $(X,d,\tll,\leq,\tau)$ be a Lorentzian pre-length space and $x,y,z \in X$ with $x \tll y \leq z$ or $x \leq y \tll z$. Then $x \tll z$.
\end{lem}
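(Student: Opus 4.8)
The plan is to prove the lemma by a short order-theoretic computation with the time separation function $\tau$, using only the reverse triangle inequality; unlike in the smooth setting, where push-up is a genuine curve-deformation statement, here it will be a purely formal consequence of the axioms. I would handle the two hypotheses in parallel, in the following order of steps: (1) reduce both cases to the chain $x \leq y \leq z$; (2) apply the reverse triangle inequality; (3) use positivity of $\tau$ from property~(i); (4) conclude via the converse of~(i).

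First I would note that in either case one has $x \leq y \leq z$. Indeed, by Definition~\ref{causet} the chronological relation $\tll$ is contained in the causal relation $\leq$, so $x \tll y$ gives $x \leq y$ and $y \tll z$ gives $y \leq z$; together with the remaining causal hypothesis this yields $x \leq y \leq z$ in both cases. Hence property~(iii) of Definition~\ref{lpls} applies and gives $\tau(x,z) \geq \tau(x,y) + \tau(y,z)$, an inequality valid in $[0,\infty]$. Next I would feed in property~(i): if $x \tll y \leq z$, then $\tau(x,y) > 0$, whence $\tau(x,z) \geq \tau(x,y) + \tau(y,z) \geq \tau(x,y) > 0$; if instead $x \leq y \tll z$, then $\tau(y,z) > 0$ and similarly $\tau(x,z) \geq \tau(y,z) > 0$. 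In both cases $\tau(x,z) > 0$, and I would conclude $x \tll z$ by the characterization of the chronological relation through strict positivity of $\tau$, i.e.\ the converse of~(i) (the same fact responsible for the openness of $I^\pm(p)$ noted just above).

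I do not expect a real obstacle here: the content of the lemma is precisely that Definition~\ref{lpls} is set up so that push-up follows from the reverse triangle inequality together with the two-sided equivalence $x \tll y \iff \tau(x,y) > 0$. The only thing to be careful about is to resist reaching for chronological curves or local neighbourhoods, as one would on a smooth manifold — such tools are neither needed nor available at this level of generality.
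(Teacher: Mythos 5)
Your proof is correct and is essentially the argument of \cite[Lemma 2.10]{KuSa}, which the paper cites rather than reproving: reduce both cases to $x \leq y \leq z$, apply the reverse triangle inequality, and observe that one of the two summands is strictly positive. The only point worth making explicit is that your final step uses the implication $\tau(x,z)>0 \Rightarrow x \ll z$, which in Kunzinger--S\"amann's original Definition 2.8 is stated as an equivalence, whereas condition (i) in this paper's restatement literally gives only the forward direction; as you yourself note, this converse is the same fact needed for the openness of $I^\pm(p)$, so it is clearly part of the intended definition and your argument matches the cited proof.
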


 The function $\tau$ in Definition~\ref{lpls} is often called \emph{time separation function} or \emph{Lorentzian distance function}. We will not use this terminology. In fact, we never need the function $\tau$ by itself but just the openness of $\tll$ and the push-up property (we could also trivially set $\tau(p,q) = \infty$ if $p \tll q$ and $=0$ otherwise).

\medskip
In smooth Lorentzian geometry the causal character of curves determines the timelike and causal future and past, that is $I^\pm$ and $J^\pm$, respectively. In Lorentzian pre-length spaces it is the other way round.

% \comment[id=A]{Think about curves vs.\ paths, could be improved but should be consistent with literature.}

\begin{defn}[{\cite[Definition 2.18]{KuSa}}] \label{defcauscurv}
 Let $(X,d,\tll,\leq,\tau)$ be a Lorentzian pre-length space and $I$ be any (open, half-open, or closed) interval in $\real$. A non-constant locally Lipschitz path $\gamma \colon I \to X$ is called a
 \begin{enumerate}
  \item \emph{future-directed causal curve} if $\gamma(s_1) \leq \gamma(s_2)$ for all $s_1 < s_2 \in I$.
  \item \emph{past-directed causal curve} if $\gamma(s_2) \leq \gamma(s_1)$ for all $s_1 < s_2 \in I$.
 \end{enumerate}
 Future- and past-directed \emph{timelike} curves are defined analogously by replacing $\leq$ with $\tll$.
\end{defn}

\begin{rem}
 By a result in metric geometry (see, for instance, \cite[Proposition 2.5.9]{BBI}), we can parametrize any causal curve by $d$-arclength (the reference is for closed intervals only, but the proof is in fact valid for any interval). Recall that $\gamma\colon I \to X$ is parametrized by $d$-arclength iff
 \[ L^d(\gamma \vert_{[a,b]}) = b - a \text{ for all } [a,b] \subseteq I.\]
Since a curve that is parametrized by $d$-arclength is automatically $1$-Lipschitz continuous, causal curves remain causal when parametrizing them by $d$-arclength. Hence, without loss of generality, we can assume that causal curves are parametrized in a way so that they are not locally constant, i.e., not constant on any open subinterval of $\real$.
\end{rem}

\begin{defn}[{\cite[Definition 3.1]{KuSa}}] \label{path-connected}
 A Lorentzian pre-length space $X$ is called \emph{causally path-connected} if for every $p < q$ there exists a future-directed causal curve connecting $p$ and $q$, and for every $p \tll q$ a future-directed timelike curve connecting $p$ and $q$.
\end{defn}

\begin{defn}[{\cite[Definition 2.19]{ACS}}]\label{lwcc}
For a subset $U$ of a Lorentzian pre-length space $X$ we define the relation $\leq_U$ by
 \[
  p \leq_U q :\Longleftrightarrow \text{there is a future-directed causal curve from $p$ to $q$ in $U$.} 
 \]
 A neighborhood $U$ is called \emph{weakly causally closed} if $\leq_U$ is closed, and the Lorentzian pre-length space $X$ is called \emph{locally weakly causally closed} if every point $p \in X$ is contained in a weakly causally closed neighborhood $U$.
\end{defn}

Definition~\ref{lwcc} is satisfied on \emph{any} smooth Lorentzian manifold, and thus acts as a replacement for regularity on a Lorentzian pre-length space. In contrast, the ``local causal closedness'' condition of Kunzinger and S\"amann~\cite[Definition 3.4]{KuSa} is stronger than Definition~\ref{lwcc} because it requires that $\leq$ restricted to $\overline U \times \overline U$ is closed. For instance, in the smooth case the latter notion would only be satisfied on strongly causal spacetimes (see \cite[p.~6]{ACS} for a detailed discussion). Note also that weak causal closedness is most natural on causally path-connected spaces (as it would be a void condition on spaces with no causal curves at all), so one may even include causal path-connectedness in the definition, as done implicitly in \cite{ACS}.

\medskip
Finally, we refine the concept of an inextendible curve\footnote{Doubly-inextendible causal curves are often simply called ``inextendible''. On the other hand, Kunzinger and S\"amann \cite{KuSa} call a curve inextendible if it is either future- or past-inextendible (or both), and have no need for the concept of double-inextendibility. We will be more precise when needed.}.

\begin{defn} \label{definex}
 Let $X$ be a Lorentzian pre-length space and $\gamma\colon (a,b) \to X$ be a future-directed causal curve. If there exists a causal curve $\bar \gamma\colon (a,b] \to X$ such that $\bar \gamma \vert_{(a,b)} = \gamma$, we say that $\gamma$ is \emph{future-extendible}. If there exists a causal curve $\tilde \gamma\colon [a,b) \to X$ such that $\tilde \gamma \vert_{(a,b)} = \gamma$, we say that $\gamma$ is \emph{past-extendible}. We say that $\gamma$ is \emph{future-(past-)inextendible} if it is not future-(past-)extendible, and \emph{doubly-inextendible} if it is neither future- nor past-extendible.
\end{defn}

The analogous definition for past-directed causal curves is obtained by interchanging future and past in Definition~\ref{definex}. The definition applies accordingly to half-open intervals. If a path is defined on all of $\real$, we mean extendibility to $\pm \infty$. Alternatively we can parametrize it by arclength and apply the following lemma (which, of course, admits also a past version).

\begin{lem} \label{lem312}
 Let $X$ be a locally weakly causally closed Lorentzian pre-length space, let $-\infty < a < b \leq \infty$ and let $\gamma \colon [a, b) \to X$ be a future-directed causal curve parametrized with respect to $d$-arclength. If $(X, d)$ is a proper metric space or the curve $\gamma$ is contained in a compact set, then $\gamma$ is future-inextendible if and only if $b = \infty$. In this case $L^d (\gamma) = \infty$. Moreover, $\gamma$ is future-inextendible if and only if $\lim_{t \nearrow b} \gamma(t)$ does not exist.
\end{lem}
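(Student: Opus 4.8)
The plan is to handle the two regimes $b<\infty$ and $b=\infty$ separately and then read off all three assertions. Throughout one uses that, since $\gamma$ is parametrized by $d$-arclength, $L^d(\gamma\vert_{[a,t]})=t-a$ for every $t\in[a,b)$; in particular $\gamma$ is $1$-Lipschitz and $L^d(\gamma)=b-a$.

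First suppose $b<\infty$. Then $\gamma$ is a $1$-Lipschitz map out of a bounded interval, hence uniformly continuous, so $(\gamma(t_n))_n$ is Cauchy for every sequence $t_n\nearrow b$. A proper metric space is complete, and a compact subset of a metric space is complete, so under either hypothesis the limit $p:=\lim_{t\nearrow b}\gamma(t)$ exists in $X$. I then claim that $\bar\gamma\colon[a,b]\to X$ with $\bar\gamma\vert_{[a,b)}=\gamma$ and $\bar\gamma(b)=p$ is a future-directed causal curve, which shows $\gamma$ is future-extendible. It is $1$-Lipschitz (the bound passes to the limit), hence locally Lipschitz, and non-constant because $\gamma$ is; the only causal relations left to verify are $\gamma(s)\leq p$ for $s\in[a,b)$. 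This is where local weak causal closedness enters: choose a weakly causally closed neighborhood $U$ of $p$; since $p\in\Int U$ and $\gamma(t)\to p$, there is $c\in(s,b)$ with $\gamma([c,b))\subseteq U$, whence $\gamma(c)\leq_U\gamma(t)$ for $c\leq t<b$, and letting $t\nearrow b$ with $\leq_U$ closed yields $\gamma(c)\leq_U p$, so $\gamma(c)\leq p$, and finally $\gamma(s)\leq\gamma(c)\leq p$ by transitivity. (Causal path-connectedness is not needed for this step but belongs to the standing hypotheses.)

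Now suppose $b=\infty$. Then $L^d(\gamma)=\infty$, which is the stated consequence. Moreover $\gamma$ is future-inextendible: a future extension would, after identifying its domain with a compact interval, be a causal curve on a compact connected set, hence globally Lipschitz, hence of finite $d$-length, contradicting that its length is at least $L^d(\gamma)=\infty$. Combining the two regimes gives the first equivalence: if $\gamma$ is future-inextendible then $b=\infty$ (the contrapositive of the construction in the case $b<\infty$), and $b=\infty$ forces future-inextendibility.

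For the second equivalence, note that if $\gamma$ is future-extendible its extension is continuous up to the endpoint, so $\lim_{t\nearrow b}\gamma(t)$ exists; and conversely, if this limit exists and $b<\infty$, the construction of the second paragraph makes $\gamma$ future-extendible. The remaining --- and, I expect, the genuinely delicate --- point is to exclude the scenario that $b=\infty$ while $\lim_{t\to\infty}\gamma(t)=p$ nonetheless exists; ruling this out makes ``$b=\infty$'' and ``$\lim_{t\nearrow b}\gamma(t)$ does not exist'' coincide, after which the second equivalence follows from the first. This is where I expect local compactness and local weak causal closedness to be needed in tandem: one would confine the tail $\gamma\vert_{[c,\infty)}$ to a weakly causally closed neighborhood $U$ of $p$ and, using closedness of $\leq_U$ together with $L^d(\gamma\vert_{[c,\infty)})=\infty$, derive a contradiction. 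Everything else in the lemma is soft; this last exclusion --- that an arclength-parametrized causal curve defined on $[a,\infty)$ cannot converge --- is the crux, and is the step I would expect to cost the most work.
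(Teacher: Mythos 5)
Your handling of the case $b<\infty$ is sound and is in substance the argument the paper relies on: $1$-Lipschitzness plus properness (or compact containment) gives the endpoint $p=\lim_{t\nearrow b}\gamma(t)$, and closedness of $\leq_U$ for a weakly causally closed neighborhood $U$ of $p$ yields $\gamma(s)\leq p$, hence a causal extension; note that you use causal closedness only at a point on the closure of $\gamma$, which is precisely the observation the paper makes to justify replacing \cite{KuSa}'s ``strong'' local causal closedness by Definition~\ref{lwcc} while otherwise invoking the proof of \cite[Lemma 3.12]{KuSa} verbatim. The implications $b=\infty\Rightarrow L^d(\gamma)=\infty\Rightarrow$ future-inextendible (any extension is locally Lipschitz up to the added endpoint, hence of finite length there, while every tail of $\gamma$ has infinite length) are also fine, modulo the paper's convention on what extendibility means for curves defined up to $\infty$ (remark after Definition~\ref{definex}).

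The genuine gap is exactly the step you flag but do not carry out: the implication that future-inextendibility (equivalently $b=\infty$) forces $\lim_{t\nearrow b}\gamma(t)$ not to exist. You only say you ``expect'' a contradiction from confining the tail in a weakly causally closed neighborhood $U$ and using closedness of $\leq_U$ together with $L^d(\gamma\vert_{[c,\infty)})=\infty$. As sketched this does not close: if $\gamma(t)\to p$, closedness of $\leq_U$ applied to pairs $(\gamma(s),\gamma(t_n))$ and $(\gamma(s_n),\gamma(t_n))$ only gives $\gamma(s)\leq_U p$ and, at best, a closed causal relation at $p$; since the lemma assumes no causality condition and no $d$-compatibility (Definition~\ref{defdcomp}), neither is contradictory, and the infinite length of the tail never enters through $\leq_U$. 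Excluding an arclength-parametrized causal curve of infinite length that converges to a point is the actual content of the ``moreover'' clause and needs a genuine argument (the paper does not spell one out either; it defers entirely to the proof of \cite[Lemma 3.12]{KuSa}). A minor further slip: you appeal to ``local compactness'', which is not among the hypotheses of the lemma --- what is assumed is properness of $(X,d)$ or that $\gamma$ lies in a compact set.
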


\begin{proof}
 First, assume that $b = \infty$. If $\gamma$ admitted an extension $\bar\gamma$ as in Definition~\ref{definex}, then $\bar\gamma$ would be locally Lipschitz and have two endpoints. Therefore, $L^d(\bar\gamma) < \infty$, but also $L^d(\bar\gamma)=L^d(\gamma) = b-a$ (since we have only added one point), a contradiction.
 The rest of the proof is the same as \cite[Lemma 3.12]{KuSa}. Note that there, the assumption of local ``strong'' causal closedness is only applied to points which lie on $\gamma$. Hence that proof also works with our notion of weakly causally closed neighborhood.
\end{proof}

In the remaining subsection we recall the definition of Lorentzian length space (including necessary preliminary notions) as introduced in \cite{KuSa}. While we will not directly work with Lorentzian length spaces in the main body of this paper, our results about pre-length spaces immediately also lead to useful Corollaries in this setting (see Introduction). A Lorentzian length space is, in essence, just the Lorentzian analogue of length metric spaces generalizing Riemannian manifolds where the time separation function $\tau$ is used in place of a distance function to measure lengths and the admissible class of curves respects causality. More precisely, if $\gamma \colon [a,b] \to X$ is a future-directed causal curve, then its \emph{$\tau$-length} $L_{\tau}(\gamma)$ is defined by (see \cite[Definition 2.24]{KuSa})
\[
 L_{\tau}(\gamma) := \inf \left\{ \sum_{i=0}^{N-1} \tau(\gamma(t_i), \gamma( t_{i+1})) \; \bigg| \; %\mid  
 a = t_0 < t_1 < ... < t_N = b, N \in \nat \right\}.
\]
In addition, the notion of localizability is needed.

\begin{defn}[{\cite[Definition 3.16]{KuSa} and \cite[Definition 2.22]{ACS}\footnote{Similarly to the difference between weak and ``strong'' local causal closedness, there is a difference between the notions of localizability in \cite[Definition 3.16]{KuSa} and \cite[Definition 2.22]{ACS} regarding the meaning of $\tll_{U_p}, \leq_{U_p}$.}}] \label{defloc}
 We call a Lorentzian pre-length space $(X,d,\tll,\leq,\tau)$ \emph{localizable} if for every point $p \in X$, there exists a neighborhood $U_p$ of $p$ such that
 \begin{enumerate}
  \item There exists a constant $C > 0$ such that for  all causal curves contained in $U_p$ we have $L^d(\gamma)\leq C$.
  \item For every $q \in U_p$ we have $I^\pm(q) \cap U_p \neq \emptyset$.
  \item There exists a continuous function $\omega_p \colon U_p \times U_p \to [0,\infty)$ such that $(U_p,d \vert_{U_p \times U_p}, \\ \tll_{U_p}, \leq_{U_p}, \omega_p)$ is a Lorentzian pre-length space. Moreover, for all $x,y \in U_p$ with $x < y$, it holds that
  \[
   \omega_p(x,y) = \max \{L_\tau(\gamma) \mid \gamma \colon [a,b] \to U_p \text{ future-dir. causal from $x$ to $y$ } \},
  \]
  so in particular there exists a maximizing causal curve between $x$ and $y$.
 \end{enumerate}
\end{defn}

 In the main sections of this paper, only assumptions (i) and (ii) of Definition \ref{defloc} are needed, thus we restate them separately in the upcoming sections (see Definitions~\ref{defdcomp} and \ref{approx}, Lemma~\ref{pathapprox} and Proposition~\ref{LLSareapprox}). By further assuming that also $\tau$ is given by length-maximization (but without necessarily requiring the existence of global maximizers), one obtains a Lorentzian length space.

\begin{defn}[{\cite[Definition 3.22]{KuSa}}] \label{def:lls}
 A causally path-connected, locally (weakly) causally closed and localizable Lorentzian pre-length space $(X,d,\tll,\leq,\tau)$ is called a \emph{Lorentzian length space} if for all $p,q \in X$
 \[
 \tau(p,q) = \sup \{L_\tau(\gamma) \mid \gamma \text{ future-directed causal from $p$ to $q$} \}.
 \]
\end{defn}

\subsection{Limit curve theorems}\label{ssec:limitcurves}

We revisit the limit curves theorems of Kunzinger and S\"amann~\cite[Section 3.2]{KuSa} and relax their assumption of local causal closedness to local weak causal closedness (see Definition~\ref{lwcc}). That this extension is possible was already pointed out by Ak\'e et al.~\cite[p.\ 8]{ACS}. The limit curve theorems are crucial for Sections~\ref{Kcausality} and \ref{ghsec}.

\medskip
We start with \cite[Lemma 3.6]{KuSa} where, instead of pointwise convergence, we need to assume locally uniform convergence.

\begin{lem} \label{lem36}
 Let $X$ be a causally path-connected locally weakly causally closed Lorentzian pre-length space and let $(\gamma_n)_{n}$ be a sequence of future-directed causal curves $\gamma_n \colon I \to X$ converging locally uniformly to a non-constant locally Lipschitz curve $\gamma \colon I \to X$. Then $\gamma$ is future-directed causal.
\end{lem}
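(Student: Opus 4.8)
The plan is to run the localization argument from the proof of \cite[Lemma 3.6]{KuSa}, with the one modification that limits are taken against the \emph{closed} relations $\leq_U$ on weakly causally closed neighborhoods, since the global relation $\leq$ need not be closed here. Since causal curves may be taken $1$-Lipschitz (Remark after Definition~\ref{defcauscurv}), the locally uniform limit $\gamma$ is locally Lipschitz; being also non-constant by hypothesis, it only remains to show the order property $\gamma(s_1)\leq\gamma(s_2)$ for all $s_1<s_2$ in $I$.

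Fix $s_1<s_2\in I$. The image $\gamma([s_1,s_2])$ is compact, hence, by local weak causal closedness, contained in a finite union $\Int(U_1)\cup\dots\cup\Int(U_m)$ of interiors of weakly causally closed neighborhoods. The preimages $\gamma^{-1}(\Int(U_j))$ form an open cover of the compact interval $[s_1,s_2]$; picking a Lebesgue number and a partition $s_1=t_0<t_1<\dots<t_N=s_2$ of sufficiently small mesh, each sub-arc $\gamma([t_i,t_{i+1}])$ lies in a single $\Int(U_{j(i)})$. As each $\gamma([t_i,t_{i+1}])$ is a compact subset of the open set $\Int(U_{j(i)})$, there is a common $\varepsilon>0$ such that the $\varepsilon$-neighborhood of each $\gamma([t_i,t_{i+1}])$ still lies in $\Int(U_{j(i)})$. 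Since $\gamma_n\to\gamma$ uniformly on the compact set $[s_1,s_2]$, for all large $n$ we get $\gamma_n([t_i,t_{i+1}])\subseteq\Int(U_{j(i)})\subseteq U_{j(i)}$ for every $i$.

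Thus, for all large $n$ and each $i$, the restriction $\gamma_n|_{[t_i,t_{i+1}]}$ is a future-directed causal curve contained in $U_{j(i)}$ (or constant, in which case $\gamma_n(t_i)=\gamma_n(t_{i+1})$), so that $\gamma_n(t_i)\leq_{U_{j(i)}}\gamma_n(t_{i+1})$ — or the endpoints coincide. Passing to a subsequence if necessary and using that $\leq_{U_{j(i)}}$ is closed, that $\leq_{U_{j(i)}}\subseteq\leq$, and that $\leq$ is reflexive, we obtain $\gamma(t_i)\leq\gamma(t_{i+1})$ for each $i$. Transitivity of $\leq$ then chains these relations into $\gamma(s_1)=\gamma(t_0)\leq\gamma(t_1)\leq\dots\leq\gamma(t_N)=\gamma(s_2)$, as required.

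The only real subtlety — and precisely the reason that weakening local causal closedness to local \emph{weak} causal closedness forces one to upgrade pointwise convergence to locally uniform convergence — is that one cannot pass to the limit in $\gamma_n(s_1)\leq\gamma_n(s_2)$ directly (as $\leq$ is not assumed closed); instead one must realize the relation $\gamma_n(t_i)\leq_{U}\gamma_n(t_{i+1})$ by a causal curve lying \emph{entirely} within a single weakly causally closed neighborhood $U$, which requires controlling whole sub-arcs $\gamma_n([t_i,t_{i+1}])$ rather than just their endpoints. The handling of constant sub-restrictions is a harmless bookkeeping point, valid because $\leq$ is a preorder.
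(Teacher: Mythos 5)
Your argument is correct and follows essentially the same route as the paper's own proof: localize into weakly causally closed neighborhoods, use (locally) uniform convergence on compact parameter intervals to force the approximating sub-arcs inside them, pass to the limit via closedness of $\leq_U$ (handling constant restrictions by reflexivity), and chain by transitivity of $\leq$ --- you merely make explicit, via the Lebesgue-number/partition step, what the paper phrases as decomposing $\gamma$ into a concatenation of causal pieces. One small caveat: your opening claim that the locally uniform limit is locally Lipschitz ``since causal curves may be taken $1$-Lipschitz'' does not follow (the given $\gamma_n$ cannot be reparametrized without destroying the convergence), but this is harmless because the regularity of $\gamma$ is part of the hypothesis that it is a curve, exactly as in the paper's proof, which likewise only verifies the order property.
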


\begin{proof}
 For every $s \in I$ there exists a weakly causally closed neighborhood $U_{\gamma(s)}$. By continuity, we can pick $s_1 < s < s_2$ such that $\gamma([s_1,s_2]) \subseteq U_{\gamma(s)}$ (if $s$ is a boundary point of $I$, then $s_1=s$ or $s_2=s$ is chosen). Assume additionally that we choose $s_1, s_2$ close enough such that $(\gamma_n)_{n}$ converges uniformly on $[s_1,s_2]$. This implies that there exists $n_0 \in \nat$ such that for all $n > n_0$, $\gamma_n([s_1,s_2]) \subseteq U_{\gamma(s)}$. Now it follows from the definition of weakly causally closed neighborhood that $\gamma$ restricted to $[s_1,s_2]$ is future-directed causal. Since $s$ was arbitrary, we can decompose $\gamma$ as a concatenation of future-directed causal curves, hence by transitivity of $\leq$, $\gamma$ is future-directed causal on $I$.
\end{proof}

\begin{thm}[Limit curve theorem] \label{thm37}
 Let X be a causally path-connected locally weakly causally closed Lorentzian pre-length space. Let $(\gamma_n)_n$ be a sequence of future-directed causal curves $\gamma_n \colon [a, b] \to X$ that are uniformly Lipschitz continuous, i.e., there is an $L > 0$ such that $\Lip(\gamma_n ) \leq L$ for all $n \in \nat$. Suppose that there exists a compact set that contains every $\gamma_n$ or that $d$ is proper and that the curves $(\gamma_n)_n$ accumulate at some point, i.e., there is a $t_0 \in [a, b]$ such that $\gamma_n (t_0 ) \to x_0 \in X$. Then there exists a subsequence $(\gamma_{n_k})_k$ of $(\gamma_n)_n$ and a Lipschitz continuous curve $\gamma \colon [a, b] \to X$ such that $\gamma_{n_k} \to \gamma$ uniformly. If $\gamma$ is non-constant, then $\gamma$ is a future-directed causal curve.
\end{thm}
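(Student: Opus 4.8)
The plan is to prove this as a version of the classical Arzel\`a--Ascoli theorem adapted to causal curves, with the causality statement following from the preceding Lemma~\ref{lem36}. First I would reduce to the case where $d$ is proper and the curves accumulate: if all $\gamma_n$ lie in a fixed compact set $K$, then since $K$ is a metric space, every closed ball around a point of $K$ is totally bounded, and one may argue directly; alternatively, one notes that the $\gamma_n(t_0)$ all lie in $K$ and pass to a convergent subsequence, so the accumulation hypothesis is automatically satisfied with $x_0 \in K$. In the proper case, I would first check that the images $\gamma_n(I)$ stay in a fixed compact ball: since $\Lip(\gamma_n) \le L$ and $\gamma_n(t_0) \to x_0$, we have $d(\gamma_n(t), x_0) \le d(\gamma_n(t), \gamma_n(t_0)) + d(\gamma_n(t_0), x_0) \le L|b-a| + 1$ for $n$ large, so every $\gamma_n$ (with finitely many exceptions, which we discard) maps into the closed ball $\bar B(x_0, L|b-a|+1)$, which is compact because $d$ is proper. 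This puts us in the compact-image situation regardless.

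Next I would run the standard Arzel\`a--Ascoli argument. The family $(\gamma_n)_n$ is equicontinuous (uniformly $L$-Lipschitz) and pointwise precompact (all values in a fixed compact set). Pick a countable dense subset $\{q_1, q_2, \dots\}$ of $[a,b]$; by a diagonal argument extract a subsequence $(\gamma_{n_k})_k$ such that $\gamma_{n_k}(q_j)$ converges for every $j$. Equicontinuity then upgrades this to uniform convergence on $[a,b]$: given $\varepsilon > 0$, choose $\delta = \varepsilon/(3L)$, cover $[a,b]$ by finitely many $\delta$-balls centered at points $q_{j_1}, \dots, q_{j_m}$, pick $K_0$ so that $d(\gamma_{n_k}(q_{j_i}), \gamma_{n_l}(q_{j_i})) < \varepsilon/3$ for all $k,l \ge K_0$ and all $i$, and conclude $d(\gamma_{n_k}(t), \gamma_{n_l}(t)) < \varepsilon$ for all $t$ by the triangle inequality through the nearest center. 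Hence $(\gamma_{n_k})_k$ is uniformly Cauchy in the sup metric; since the target is a complete metric space (closed subset of the proper, hence complete, space $X$ — or just note $X$ is complete as a length space / one restricts to the compact set), it converges uniformly to some $\gamma \colon [a,b] \to X$, and $\gamma$ is itself $L$-Lipschitz as a uniform limit of $L$-Lipschitz maps.

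Finally, if $\gamma$ is non-constant, then uniform convergence on $[a,b]$ is in particular locally uniform convergence, so Lemma~\ref{lem36} applies verbatim and $\gamma$ is a future-directed causal curve. I expect the main obstacle to be purely bookkeeping: making sure the passage between the ``contained in a compact set'' hypothesis and the ``$d$ proper plus accumulation'' hypothesis is handled cleanly (and in particular that after discarding finitely many $\gamma_n$ and replacing the domain's target by a compact ball one has not changed the conclusion), and verifying completeness of the relevant target space so that the uniform Cauchy sequence actually converges. The causality part is not an obstacle at all, since it has been isolated in Lemma~\ref{lem36}; the only subtlety there is that Lemma~\ref{lem36} requires the limit to be non-constant, which is exactly why the theorem is stated with that caveat.
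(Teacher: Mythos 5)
Your proposal is correct and is essentially the paper's own argument: the paper simply observes that the Arzel\`a--Ascoli-type proof of Kunzinger--S\"amann's Theorem~3.7 goes through verbatim, with their causality lemma replaced by Lemma~\ref{lem36}, which is exactly the reduction-to-a-compact-set plus equicontinuity-diagonal argument plus Lemma~\ref{lem36} that you write out. The only slip is the aside that $X$ is complete (neither properness of $d$ nor the length-space structure is assumed in general, and a pre-length space need not be complete), but this is harmless since, as you also note, in both cases the values lie in a fixed compact set, which suffices for the uniform Cauchy sequence to converge.
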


\begin{proof}
 The proof of \cite[Theorem 3.7]{KuSa} goes through. The assumption of local ``strong'' causal closedness is only used to invoke \cite[Lemma 3.6]{KuSa}, but since the convergence is uniform, we can replace it by our Lemma \ref{lem36}.
\end{proof}

This first limit curve theorem is already very useful. In order to formulate our second limit curve theorem, we need a better control over the $d$-length of causal curves.

\begin{defn}[{\cite[Definition 3.13]{KuSa}}] \label{defdcomp}
 A Lorentzian pre-length space $(X,d,\allowbreak \tll,\leq,\tau)$ is called \emph{$d$-compatible} if for every $p \in X$ there exists a neighborhood $U$ of $p$ and a constant $C > 0$ such that $L^d(\gamma) \leq C$ for all causal curves $\gamma$ contained in $U$.
\end{defn}

To ease the nomenclature, we group some of our previous assumptions into the following definition.

\begin{defn} \label{llslimitcurves}
 A \emph{Lorentzian pre-length space with limit curves} is a causally path-connected, locally weakly causally closed, and $d$-compatible Lorentzian pre-length space.
\end{defn}

It is an easy consequence that Lorentzian length spaces are particular cases of Lorentzian pre-length spaces with limit curves (see also Proposition~\ref{LLSareapprox} below).

\begin{thm}[Limit curve theorem for inextendible curves] \label{thm314}
 Let $X$ be a Lorentzian pre-length space with limit curves. Let $(\gamma_n )_n$ be a sequence of future-directed causal curves $\gamma_n \colon [0, L_n ] \to X$ which are parametrized with respect to $d$-arclength and satisfy $L_n := L^d (\gamma_n ) \to \infty$. If there exists a compact set that contains every curve $\gamma_n ([0, L_n ])$ or if $d$ is proper and $\gamma_n (0) \to x$ for some $x \in X$, then there exists a subsequence $(\gamma_{n_k})_k$ of $(\gamma_n )_n$ and a future-directed causal curve $\gamma \colon [0, \infty) \to X$ such that $\gamma_{n_k} \to \gamma$ locally uniformly. Moreover, $\gamma$ is future-inextendible.
\end{thm}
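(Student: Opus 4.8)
The plan is to combine a diagonal extraction based on the first limit curve theorem (Theorem~\ref{thm37}) with the inextendibility criterion of Lemma~\ref{lem312}, using $d$-compatibility to prevent the limit curve from degenerating.

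First I would carry out the extraction. Fix $m \in \nat$ and discard the finitely many $\gamma_n$ with $L_n < m$; the restrictions $\gamma_n\vert_{[0,m]}$ are then $1$-Lipschitz, being parametrized by $d$-arclength, and hence uniformly Lipschitz. In the compact case these restrictions lie in the given compact set; in the proper case, since $\gamma_n(0) \to x$ the points $\gamma_n(0)$ all lie in some closed ball $\bar B(x,R)$, so $\gamma_n([0,m]) \subseteq \bar B(x, R+m)$, which is compact because $d$ is proper. Either way Theorem~\ref{thm37} applies on $[0,m]$, and applying it successively on $[0,1], [0,2], \dots$ with nested subsequences and passing to a diagonal subsequence produces $(\gamma_{n_k})_k$ together with a $1$-Lipschitz curve $\gamma \colon [0,\infty) \to X$ such that $\gamma_{n_k} \to \gamma$ uniformly on every compact subinterval (and $\gamma$ takes values in the given compact set in the first case).

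The key step is to show $L^d(\gamma) = \infty$; in particular this will force $\gamma$ to be non-constant. Suppose instead $L^d(\gamma) < \infty$. Then for $t < t'$ we have $d(\gamma(t),\gamma(t')) \leq L^d(\gamma\vert_{[t,t']}) = L^d(\gamma\vert_{[0,t']}) - L^d(\gamma\vert_{[0,t]}) \to 0$ as $t,t' \to \infty$, so $\gamma(t)$ is Cauchy and, since $X$ is complete (it is proper, or $\gamma$ is valued in a compact set), $p := \lim_{t\to\infty}\gamma(t)$ exists. Let $U$ be a neighborhood of $p$ and $C > 0$ a constant as in the definition of $d$-compatibility (Definition~\ref{defdcomp}), and pick $T$ with $\gamma([T,\infty)) \subseteq \Int U$. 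By uniform convergence on $[T, T+C+1]$ we get $\gamma_{n_k}([T,T+C+1]) \subseteq U$ for all large $k$; but $\gamma_{n_k}\vert_{[T,T+C+1]}$ is then a causal curve contained in $U$ of $d$-length $C+1 > C$, contradicting $d$-compatibility. (The same argument applied to the hypothetical constant value of $\gamma$, with the interval $[0,C+1]$, rules out $\gamma$ being constant directly.) Hence $L^d(\gamma) = \infty$, so $\gamma$ is non-constant; then there is $m$ with $\gamma\vert_{[0,m]}$ non-constant, and Theorem~\ref{thm37} gives that $\gamma\vert_{[0,m']}$ is future-directed causal for every $m' \geq m$, whence $\gamma$ is a future-directed causal curve on $[0,\infty)$.

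Finally I would reparametrize $\gamma$ by $d$-arclength; since $L^d(\gamma) = \infty$ this yields a future-directed causal curve $\tilde\gamma \colon [0,\infty) \to X$ parametrized by $d$-arclength, valued in a compact set or with $d$ proper, so Lemma~\ref{lem312} applies. Its domain being $[0,\infty)$, Lemma~\ref{lem312} gives that $\tilde\gamma$ is future-inextendible, and hence so is $\gamma$, inextendibility being independent of the parametrization (cf.\ the discussion following Definition~\ref{definex}). The main obstacle is precisely the key step: a priori a locally uniform limit of causal curves parametrized by $d$-arclength could collapse to a point or converge at infinity, and it is exactly the $d$-compatibility hypothesis (packaged into ``Lorentzian pre-length space with limit curves'', Definition~\ref{llslimitcurves}) that excludes this; once it is excluded, the remaining ingredients (Theorem~\ref{thm37}, Lemma~\ref{lem312}, and the routine diagonal argument) go through without further trouble.
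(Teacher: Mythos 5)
Your proof is correct and follows essentially the same route as the paper, whose proof of Theorem~\ref{thm314} is simply a pointer to the argument of Kunzinger--S\"amann with \cite[Lemmas 3.6, 3.12, Theorem 3.7]{KuSa} replaced by Lemma~\ref{lem36}, Lemma~\ref{lem312} and Theorem~\ref{thm37}: you reconstruct exactly that argument (diagonal extraction via Theorem~\ref{thm37}, $d$-compatibility to exclude the limit curve collapsing or acquiring a future endpoint, Lemma~\ref{lem312} for inextendibility). The only cosmetic difference is that you first prove $L^d(\gamma)=\infty$ and then reparametrize by arclength before invoking Lemma~\ref{lem312}, rather than deriving the $d$-compatibility contradiction directly from a hypothetical future endpoint, which is a harmless repackaging of the same idea.
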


\begin{proof}
 The proof of \cite[Theorem 3.14]{KuSa} goes through. The assumption of local ``strong'' causal closedness is only used to invoke \cite[Lemmas 3.6 and 3.12, Theorem 3.7]{KuSa}, so we can replace them by our Lemma \ref{lem36}, Lemma \ref{lem312} and Theorem \ref{thm37} respectively. 
\end{proof}

While the limit curve theorems are stated for future-directed curves, they of course also hold for past-directed ones.

\subsection{Approximating Lorentzian pre-length spaces}\label{ssec:approx}

In this subsection we introduce our new ``approximating'' condition relating the causal structure and the topology on $X$. It is satisfied on all spacetimes regardless of their place in the causal ladder, and will be crucial in Section~\ref{volfcts}. In Proposition~\ref{LLSareapprox} we show that all Lorentzian length spaces automatically fulfill the ``approximating'' condition, and also our earlier Definition~\ref{llslimitcurves}.

\begin{defn} \label{approx}
 A Lorentzian pre-length space $(X,d,\tll,\leq,\tau)$ is called \emph{approximating} if for all points $p \in X$ it holds that $J^\pm(p) \subseteq \overline{I^\pm (p)}$.
 
 It is called \emph{future-(past-)approximating} if the approximating property holds for $+$($-$).
\end{defn}

The approximating property can equivalently be characterized via sequences as follows.

\begin{lem}\label{lemapprox}
 Let $(X,d,\tll,\leq,\tau)$ be a Lorentzian pre-length space. Then $X$ is (future-/past-)approximating if and only if for every point $p \in X$ there exists a sequence $(p^\pm_n)_n$ in $I^\pm(p)$ such that $p^\pm_n \to p$ as $n\to\infty$.
\end{lem}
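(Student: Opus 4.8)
The plan is to prove both implications of the equivalence in Lemma~\ref{lemapprox}, working pointwise at a fixed $p \in X$ and treating only the future case $+$ (the past case being symmetric, and the full ``approximating'' case following by combining the two).

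For the forward direction, suppose $X$ is future-approximating, so $J^+(p) \subseteq \overline{I^+(p)}$. Since $p \leq p$ (the causal relation $\leq$ is a preorder, hence reflexive), we have $p \in J^+(p) \subseteq \overline{I^+(p)}$. Thus $p$ lies in the closure of $I^+(p)$, and because $(X,d)$ is a metric space, this means there is a sequence $(p^+_n)_n$ in $I^+(p)$ with $p^+_n \to p$. (If $I^+(p) = \emptyset$ then $\overline{I^+(p)} = \emptyset$, contradicting $p \in \overline{I^+(p)}$, so $I^+(p)$ is nonempty and the sequence genuinely exists.)

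For the converse, suppose that for every point there is such an approximating sequence; I must show $J^+(p) \subseteq \overline{I^+(p)}$ for every $p$. Let $q \in J^+(p)$, i.e.\ $p \leq q$. Apply the hypothesis at the point $q$: there is a sequence $(q^+_n)_n$ in $I^+(q)$ with $q^+_n \to q$. For each $n$ we have $q \ll q^+_n$, and since $p \leq q \ll q^+_n$, the push-up property (Lemma~\ref{push-up}) yields $p \ll q^+_n$, i.e.\ $q^+_n \in I^+(p)$. Hence $q$ is the limit of a sequence in $I^+(p)$, so $q \in \overline{I^+(p)}$. Since $q \in J^+(p)$ was arbitrary, $J^+(p) \subseteq \overline{I^+(p)}$, which is the future-approximating property. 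The past case is identical with the roles of future and past interchanged (using the other half of push-up, $x \leq y \ll z \Rightarrow x \ll z$ in its mirror form), and the two-sided statement is the conjunction of the two.

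I do not expect any serious obstacle here: the only nontrivial ingredient is the push-up property, which is exactly what converts ``$q^+_n \in I^+(q)$'' into ``$q^+_n \in I^+(p)$'' in the converse direction, and this is available as Lemma~\ref{push-up}. The one point requiring a little care is making sure the equivalence is phrased so that ``there exists a sequence'' does not vacuously hold — but as noted, $p \in \overline{I^+(p)}$ already forces $I^+(p) \neq \emptyset$, so the two formulations match exactly.
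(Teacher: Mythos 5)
Your proof is correct and follows essentially the same route as the paper: the forward direction uses reflexivity of $\leq$ to get $p \in J^+(p) \subseteq \overline{I^+(p)}$, and the converse applies the approximating sequence at $q$ together with the push-up Lemma~\ref{push-up} to place it inside $I^+(p)$. No gaps; your extra remark on nonemptiness of $I^+(p)$ is harmless and not needed beyond what you state.
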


We say that the sequence $(p^+_n)_n$ approximates $p$ from the future, and that the sequence $(p^-_n)_n$ approximates $p$ from the past.

\begin{proof}
 That such sequences exist on approximating spaces is obvious, because $p \in J^\pm(p) \subseteq \overline{I^\pm(p)}$. To show the converse, suppose $q \in J^+(p)$ and $(q_n^+)$ is a sequence in $I^+(q)$ approximating $q$ from the future. By the push-up Lemma~\ref{push-up}, $q_n^+ \in I^+(p)$ for all $n \in \nat$, hence $q \in \overline{I^+(p)}$.
\end{proof}

Assuming that $X$ is causally path-connected, we get even more characterizations, which in the smooth case are in fact the most widely used ones.

\begin{lem} \label{pathapprox}
 Let $(X,d,\tll,\leq,\tau)$ be a causally path-connected Lorentzian pre-length space. Then the following are equivalent:
 \begin{enumerate}
     \item $X$ is future-(past-)approximating,
     \item $I^+(p) \neq \emptyset$ ($I^-(p) \neq \emptyset$) for all $p \in X$,
     \item for every point $p \in X$ there exists a future-(past-)directed timelike curve $\gamma \colon [a,b) \to X$ with $\gamma(a) = p$.
 \end{enumerate}
\end{lem}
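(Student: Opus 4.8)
The plan is to prove Lemma~\ref{pathapprox} by establishing the cycle of implications (i) $\Rightarrow$ (ii) $\Rightarrow$ (iii) $\Rightarrow$ (i), treating only the future case since the past case is symmetric. The statement links a topological condition (approximability) with a purely order-theoretic one (nonempty timelike future) and a curve-theoretic one (existence of a future-directed timelike curve emanating from $p$); the causal path-connectedness hypothesis is what lets us pass freely between the order and the curves.

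First I would prove (i) $\Rightarrow$ (ii). If $X$ is future-approximating, then by Lemma~\ref{lemapprox} there is a sequence $(p_n^+)_n$ in $I^+(p)$ converging to $p$; in particular $I^+(p)$ contains at least one point, so $I^+(p) \neq \emptyset$. (One should note $I^+(p)$ could a priori be empty only if no such sequence exists, so this direction is immediate.) Next, (ii) $\Rightarrow$ (iii): assume $I^+(p) \neq \emptyset$ and pick $q \in I^+(p)$, so $p \tll q$ with $p \neq q$ (since $\tll$ is irreflexive here — actually one should be slightly careful, but $p \tll q$ together with the fact that $\tau(p,q) > 0$ rules out degenerate issues; in any case if $p = q$ we would have a closed timelike loop and can instead argue directly). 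By causal path-connectedness (Definition~\ref{path-connected}), there exists a future-directed timelike curve $\gamma$ connecting $p$ and $q$; reparametrizing its domain as $[a,b]$ with $\gamma(a) = p$ and restricting to $[a,b)$ gives the desired curve $\gamma \colon [a,b) \to X$ with $\gamma(a) = p$.

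The implication (iii) $\Rightarrow$ (i) is where the small amount of real content lies. Suppose for every $p$ there is a future-directed timelike curve $\gamma \colon [a,b) \to X$ with $\gamma(a) = p$. Since $\gamma$ is non-constant and locally Lipschitz, hence continuous, and not locally constant (after reparametrizing by $d$-arclength if needed, as discussed in the remark after Definition~\ref{defcauscurv}), we can choose a decreasing sequence $t_n \searrow a$ with $t_n \in (a,b)$ and $\gamma(t_n) \neq p$; then $p = \gamma(a) \tll \gamma(t_n)$ because $\gamma$ is future-directed timelike, so $\gamma(t_n) \in I^+(p)$, and $\gamma(t_n) \to \gamma(a) = p$ by continuity. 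Thus $I^+(p)$ contains a sequence converging to $p$, and Lemma~\ref{lemapprox} yields that $X$ is future-approximating. The only subtlety to watch for is ensuring the points $\gamma(t_n)$ are genuinely in the timelike future and distinct from $p$ — this is guaranteed because a timelike curve satisfies $\gamma(s_1) \tll \gamma(s_2)$ for all $s_1 < s_2$, and the non-local-constancy of the (arclength-reparametrized) curve guarantees $\gamma(t_n) \neq p$ for $t_n$ sufficiently close to $a$. I expect the main (very mild) obstacle to be phrasing this last step cleanly so that it works uniformly regardless of whether $b$ is finite, infinite, or the curve's endpoint behavior; arclength reparametrization sidesteps all of it.
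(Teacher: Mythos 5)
Your proof is correct and follows essentially the same route as the paper: the cycle (i)~$\Rightarrow$~(ii)~$\Rightarrow$~(iii)~$\Rightarrow$~(i), using Lemma~\ref{lemapprox} and causal path-connectedness in exactly the places the paper does. The only cosmetic differences are that the paper settles the $p=q$ case in (ii)~$\Rightarrow$~(iii) simply by noting the connecting timelike curve is non-constant by Definition~\ref{defcauscurv} even when $p=q$, and that your insistence on $\gamma(t_n)\neq p$ in (iii)~$\Rightarrow$~(i) is unnecessary, since $p\tll\gamma(t_n)$ alone already puts $\gamma(t_n)\in I^+(p)$.
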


Note that if $X$ is approximating, we can always join the future- and past-directed curves from point (iii) to find a timelike curve $\gamma \colon (a,b) \to X$ through $p$.

\begin{proof}
 (i) $\implies$ (ii) Let $p \in X$ be any point. Then $p \in J^+(p)$, so if $X$ is future-approximating, we get that $\emptyset \neq J^+(p) \subseteq \overline{I^+(p)}$. This implies that $I^+(p) \neq \emptyset$.
 
 (ii) $\implies$ (iii) By assumption, there exists points $q \in I^+(p)$. By causal path-connectedness, there exists a future-directed timelike curve $\gamma$ from $p$ to $q$, which by Definition~\ref{defcauscurv} must be non-constant, even if $p=q$. We can then remove the appropriate endpoint of $\gamma$ to get the desired curve.
 
 (iii) $\implies$ (i) Let $p \in X$ and $\gamma \colon [a,b) \to X$ be a future-directed timelike curve with $\gamma(a) = p$. By continuity of $\gamma$, we have $p = \lim_{s \to a} \gamma(s)$, so $p \in \overline{I^+(p)}$.
 
 The past statements are proved analogously.
\end{proof}

We can now easily see how Lorentzian length spaces are particular cases of the more general pre-length spaces that we will be working with in the rest of the paper.

\begin{prop} \label{LLSareapprox}
 If $X$ is a localizable, causally path-connected Lorentzian pre-length space, then $X$ is approximating and $d$-compatible. If $X$ is a Lorentzian length space, then $X$ is an approximating Lorentzian pre-length space with limit curves.
\end{prop}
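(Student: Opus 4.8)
This proposition is essentially a matter of unwinding the definitions, so the proof will be short and no step is genuinely difficult. The plan is to argue the two assertions separately.

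For the first assertion, I would assume $X$ is localizing and causally path-connected and fix $p \in X$ with a localizing neighborhood $U_p$ as in Definition~\ref{defloc}. Condition~(i) of that definition is word for word the statement that $X$ is $d$-compatible in the sense of Definition~\ref{defdcomp}, so $d$-compatibility is immediate. For the approximating property I would use condition~(ii): applying it with $q = p$ (legitimate since $U_p$ is a neighborhood of $p$) gives $I^\pm(p) \cap U_p \neq \emptyset$, and in particular $I^\pm(p) \neq \emptyset$ for every $p \in X$. Since $X$ is causally path-connected, the equivalence of (i) and (ii) in Lemma~\ref{pathapprox}, used for both the future and the past sign, then yields that $X$ is future- and past-approximating, i.e.\ approximating.

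For the second assertion, recall that by definition a Lorentzian length space is causally path-connected, locally (weakly) causally closed, and localizing. In particular it is localizing and causally path-connected, so the first assertion applies and shows that it is approximating and $d$-compatible. Being in addition causally path-connected and locally weakly causally closed, it satisfies all three requirements of Definition~\ref{llslimitcurves} and is therefore a Lorentzian pre-length space with limit curves. Combining the two facts, $X$ is an approximating Lorentzian pre-length space with limit curves.

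The only point that needs any attention is bookkeeping around the two variants of the localizability axiom appearing in \cite{KuSa} and \cite{ACS}: I would simply note that in all versions the conditions~(i) and~(ii) used above have the same content (the variants differ only in the meaning of $\tll_{U_p}, \leq_{U_p}$ in condition~(iii), which is not invoked here), and that the ``locally (weakly) causally closed'' hypothesis in the definition of Lorentzian length space already supplies exactly the local weak causal closedness needed for Definition~\ref{llslimitcurves}. Beyond this, the proposition is a direct consequence of Definitions~\ref{defdcomp}, \ref{defloc}, \ref{llslimitcurves} and Lemma~\ref{pathapprox}, so I expect the ``hard part'' here to be purely a matter of citing the right definition rather than any real argument.
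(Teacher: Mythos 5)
Your proposal is correct and matches the paper's proof: both parts follow by citing property (i) of Definition~\ref{defloc} for $d$-compatibility, property (ii) together with Lemma~\ref{pathapprox} for the approximating property, and the definition of Lorentzian length space for the second assertion. No issues.
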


\begin{proof}
 If $X$ is localizable, then by property (i) in Definition~\ref{defloc}, $X$ is $d$-compatible. Furthermore, by property (ii), every point $q \in X$ has $I^\pm(q) \neq \emptyset$, and then by Lemma \ref{pathapprox}, $X$ is approximating. The second statement follows trivially from the definitions.
\end{proof}

\begin{rem}
In connection with the null distance on Lorentzian pre-length spaces, Kunzinger and Steinbauer~\cite[Definition 3.4]{KuSt} introduced the notion of \emph{sufficiently causally connectedness (scc)}. A Lorentzian pre-length space is scc if it is path-connected (in the sense of metric spaces), causally path-connected (Definition~\ref{path-connected}) and every point $p \in X$ lies on some timelike curve $\gamma$. While the last condition is reminiscent of property (iii) in our Lemma~\ref{pathapprox}, it is in fact weaker, since scc puts no restriction on whether $p$ should be a future (or past) endpoint of $\gamma$. On the other hand, we do not need to assume path-connectedness.
\end{rem}

Having established the existence of causal (even timelike) curves through every point in Lemma~\ref{pathapprox}, the question remains whether one can find a (doubly-)inextendible causal curve through every point (see Definition~\ref{definex}). The following proposition and corollary answer this question in the affirmative, which will be crucial in Section~\ref{ghsec} when studying Cauchy sets. We need to assume that $(X,d)$ is proper in order to invoke the limit curve theorem. The use of the latter is also the reason why we only prove the existence of intextendible causal (and not timelike) curves.

\begin{prop}[Existence of maximal extensions of causal curves] \label{exisinex}
 Let $X$ be an approximating Lorentzian pre-length space with limit curves. Suppose, in addition, that $(X,d)$ is proper. Then, for every future-(past-)directed causal curve $\gamma \colon [a,b) \to X$ with $b < \infty$, there exists $c \in [b,\infty]$ and a future-(past-)inextendible causal curve $\lambda : [a,c) \to X$ such that $\lambda \vert_{[a,b)} = \gamma$.
\end{prop}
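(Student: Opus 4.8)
The plan is to reduce the statement to the special case ``every point is the past endpoint of some future-inextendible causal curve'', and then to prove that case by a maximality argument whose engine is Lemma~\ref{lem312}. For the reduction: if $\gamma$ is already future-inextendible we are done with $c=b$ and $\lambda=\gamma$; otherwise $\gamma$ is future-extendible, so there is a causal curve $\bar\gamma\colon[a,b]\to X$ with $\bar\gamma|_{[a,b)}=\gamma$ and, by continuity, $\bar\gamma(b)=\lim_{t\nearrow b}\gamma(t)=:p$. It then suffices to find a future-inextendible future-directed causal curve $\sigma$ issuing from $p$: shifting its parameter so that $\sigma\colon[b,c)\to X$ with $\sigma(b)=p$, the concatenation $\lambda:=\bar\gamma*\sigma$ is locally Lipschitz, non-constant, future-directed causal by transitivity of $\leq$, restricts to $\gamma$ on $[a,b)$, and is future-inextendible because any prolongation of $\lambda$ past $c$ would prolong $\sigma$.

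To produce a future-inextendible causal curve from a given point $p$: since $X$ is approximating and causally path-connected, Lemma~\ref{pathapprox} gives both $I^+(p)\neq\emptyset$ and a future-directed timelike (hence causal) curve issuing from $p$, so the collection $\Sigma$ of future-directed causal curves issuing from $p$ that are parametrised by $d$-arclength, partially ordered by extension, is non-empty. Every chain in $\Sigma$ has an upper bound, namely the union of its members defined on the union of their domains, which is again $1$-Lipschitz, non-constant, future-directed causal and $d$-arclength-parametrised; so Zorn's lemma yields a maximal element $\sigma\colon[0,c)\to X$ (its domain must be half-open, for otherwise one could prolong $\sigma$ by a timelike curve issuing from $\sigma(c)$, possible because $I^+(\sigma(c))\neq\emptyset$). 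I would then show $c=\infty$: were $c<\infty$, Lemma~\ref{lem312} --- and this is where properness of $(X,d)$ is used --- would force $\sigma$ to be future-extendible, hence to extend to a $d$-arclength-parametrised causal curve on $[0,c]$, contradicting maximality; so $c=\infty$, and Lemma~\ref{lem312} then yields that $\sigma$ is future-inextendible, with $\sigma(0)=p$. The past version is entirely analogous.

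The step I expect to be the main obstacle is exactly where properness is indispensable: ruling out that a maximal $d$-arclength-parametrised causal curve is confined to a bounded parameter interval. Lemma~\ref{lem312} resolves this --- an arclength-parametrised future-inextendible causal curve in a proper space is defined on all of $[0,\infty)$ --- but it forces one to handle the parametrisation by hand, since the required identity $\lambda|_{[a,b)}=\gamma$ forbids reparametrising $\gamma$ itself and only the newly attached tail may be switched to arclength parametrisation. An alternative, probably closer to the intended route, avoids Zorn's lemma by showing through iterated near-maximal prolongation that from $p$ there exist $d$-arclength-parametrised causal curves of $d$-length tending to $\infty$, and then invokes the limit curve theorem for inextendible curves, Theorem~\ref{thm314}, whose hypotheses hold because $(X,d)$ is proper and all these curves start at $p$; this viewpoint also explains why one only obtains an inextendible \emph{causal}, rather than timelike, curve, since both the passage to a limit curve and the closing-up of a timelike curve at its limit point --- recall that $\tll$ is open, not closed --- turn chronology into causality.
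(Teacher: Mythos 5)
Your proof is correct, but it takes a genuinely different route from the paper's. The paper does not use a maximality argument: it concatenates a timelike curve emanating from the endpoint $\gamma(b)$ (via Lemma~\ref{pathapprox}) to obtain some proper extension, assumes for contradiction that every extension of $\gamma$ is again extendible, and then splits according to whether the $d$-arclengths of all extensions are bounded. In the bounded case it takes extensions whose lengths tend to the supremum $C$, applies the limit curve Theorem~\ref{thm37} to get a limit extension with an endpoint and arclength $C$, and prolongs it past that endpoint to exceed $C$; in the unbounded case it applies Theorem~\ref{thm314} directly to obtain an inextendible limit extension --- which is essentially the alternative you sketch in your last paragraph. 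Your main (Zorn) argument, by contrast, never invokes the limit curve theorems: properness enters only through Lemma~\ref{lem312}, which both rules out a maximal $d$-arclength-parametrized curve having a finite half-open domain and converts $c=\infty$ into future-inextendibility. What your route buys is that it sidesteps the delicate point in the paper's bounded case, namely that the uniform limit of the near-maximal extensions has arclength exactly $C$ (length is only lower semicontinuous under uniform convergence), at the modest price of invoking Zorn's lemma and of checking by hand that arclength parametrization survives unions along chains, adjoining an endpoint, and concatenation --- checks you correctly flag and which do go through, since the hypotheses of Lemmas~\ref{pathapprox} and \ref{lem312} (causal path-connectedness, local weak causal closedness, $d$-compatibility, approximating, properness) are all contained in the standing assumptions. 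Both arguments use the approximating property in the same way, through Lemma~\ref{pathapprox}, to prolong a curve past any endpoint it acquires.
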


\begin{proof}
 Consider, without loss of generality, the case that $\gamma$ is future-directed. If $\gamma$ is already inextendible, there is nothing to prove since we can just choose $c=b$. Hence we consider the case of $\gamma$ being extendible. Then $\gamma$ has an endpoint, which we will, by abuse of notation, denote as $\gamma(b)$. Since $X$ is approximating, by Lemma~\ref{pathapprox} there is a future-directed timelike curve starting at $\gamma(b)$. Concatenating it with $\gamma$, we get a proper extension $\tilde \gamma \colon [a,c) \to X$ of $\gamma$, where $c > b$. If we can choose $\tilde \gamma$ to be inextendible, we are done. Hence, suppose for the sake of contradiction that all extensions of $\gamma$ are themselves extendible. There are two possible cases:
 \begin{enumerate}
  \item[1.] There exists a constant $C>0$ such that the $d$-arclength of all extensions of $\gamma$ is bounded by $C$. Suppose that we have chosen $C$ as small as possible. Then there exists a sequence $(\gamma_n)_n$ of extensions such that $L^d(\gamma_n) \to C$. Since all the $\gamma_n$ are extendible (hence we can add their future-endpoints) and agree at the point $\gamma(b)$, by Theorem~\ref{thm37} a subsequence converges to a limit curve $\gamma_\infty \colon [a,c] \to X$ of arclength $L^d(\gamma_\infty) = C$. But then, by the above, $\gamma_\infty$ admits a future extension, which is then also an extension of $\gamma$ and has arclength greater than $C$, a contradiction.
  \item[2.] There exists a sequence $(\gamma_n)_n$ of extensions of $\gamma$ such that $L^d(\gamma_n) \to \infty$. In this case we can apply Theorem \ref{thm314} to find an inextendible limit curve $\gamma_\infty$ of a subsequence. This $\gamma_\infty$ is then the desired inextendible extension of $\gamma$. \qedhere
 \end{enumerate}
\end{proof}

Combining Lemma~\ref{pathapprox} and Proposition~\ref{exisinex} gives us an important conclusion.

\begin{cor} \label{decomp1}
 Let $X$ satisfy the assumptions of Proposition~\ref{exisinex}. Then, for every point $p \in X$, there exists a doubly-inextendible causal curve passing through $p$. \hfill\qed
\end{cor}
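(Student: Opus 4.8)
The plan is to build the curve through $p$ by gluing a future-inextendible causal curve issuing from $p$ to a past-inextendible causal curve terminating at $p$. First I would invoke Lemma~\ref{pathapprox}: a Lorentzian pre-length space with limit curves is in particular causally path-connected, and $X$ is assumed approximating, so there is a future-directed timelike curve issuing from $p$ and a past-directed timelike curve issuing from $p$. Reparametrizing by $d$-arclength (Remark following Definition~\ref{defcauscurv}) and reversing the parameter of the past one, I obtain future-directed causal curves $\gamma^+\colon[0,\ell^+)\to X$ and $\gamma^-\colon(-\ell^-,0]\to X$ with $\gamma^\pm(0)=p$, where $\ell^\pm\in(0,\infty]$.

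Next I would extend each half to a one-sided inextendible curve. For the future half: if $\ell^+=\infty$, then since $(X,d)$ is proper, Lemma~\ref{lem312} already guarantees that $\gamma^+$ is future-inextendible, so I set $\lambda^+:=\gamma^+$; if $\ell^+<\infty$, Proposition~\ref{exisinex} furnishes a future-inextendible causal curve $\lambda^+\colon[0,c^+)\to X$ with $\lambda^+|_{[0,\ell^+)}=\gamma^+$, so in particular $\lambda^+(0)=p$. Running the past analogues of Lemma~\ref{lem312} and Proposition~\ref{exisinex} on $\gamma^-$ produces a past-inextendible causal curve $\lambda^-\colon(c^-,0]\to X$ with $\lambda^-(0)=p$.

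Finally I would concatenate, setting $\lambda\colon(c^-,c^+)\to X$ equal to $\lambda^-$ on $(c^-,0]$ and to $\lambda^+$ on $[0,c^+)$. Since both pieces are locally Lipschitz and agree at $0$, $\lambda$ is locally Lipschitz and non-constant; given $s_1<s_2$, either both parameters lie in one piece (and causality is inherited) or $s_1\le 0\le s_2$, in which case $\lambda(s_1)\le p\le\lambda(s_2)$ and transitivity of $\le$ yields $\lambda(s_1)\le\lambda(s_2)$, so $\lambda$ is future-directed causal. It is doubly-inextendible because any future extension of $\lambda$ would restrict to a future extension of $\lambda^+$, contradicting the future-inextendibility of $\lambda^+$, and symmetrically on the past side; since $p=\lambda(0)$ lies on $\lambda$, this proves the corollary.

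The argument contains no genuinely hard step. The only points to watch are the case split $\ell^\pm=\infty$ versus $\ell^\pm<\infty$ — so that one invokes Lemma~\ref{lem312} (which needs properness) in the unbounded case rather than Proposition~\ref{exisinex} — the consistent bookkeeping of the $d$-arclength reparametrization and the reversal of the past-directed curve, and the verification that concatenation at $p$ preserves one-sided inextendibility.
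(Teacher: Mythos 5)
Your proof is correct and follows exactly the route the paper intends (the corollary is stated with its proof omitted precisely because it amounts to combining Lemma~\ref{pathapprox} with Proposition~\ref{exisinex} and concatenating at $p$). The extra care you take with the case $\ell^\pm=\infty$ via Lemma~\ref{lem312} and with checking causality and inextendibility of the concatenation is sound and only makes explicit what the paper leaves implicit.
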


\subsection{Causality conditions and time functions}\label{ssec:causalitytime}

The conditions in the previous subsections relating the topology and the causal structure (such as approximating) are satisfied automatically when the topology is that of a manifold, and the causal structure is induced by a Lorentzian metric. They can thus be thought of as making our Lorentzian pre-length spaces more ``manifold-like'', while still being much more general. In this subsection, on the other hand, we are going to discuss causality conditions, i.e., steps on the causal ladder, which are not satisfied by all smooth spacetimes and hence also not by all Lorentzian pre-length spaces. They should be thought of as criteria for physical reasonability.

In this section, we consider the notions of causality and non-total imprisonment, and the definition of time functions (for an in-depth treatment of the causal ladder for Lorentzian length spaces, see \cite{ACS}). Most of the material is standard, but Theorem~\ref{cor315} and Proposition~\ref{cau-nti} are new. The goal of this paper is to characterize the existence of (certain kinds of) time functions by suitable causality conditions, which will be introduced in the main sections. The causality conditions in this section are weaker, but also play an important role.

\medskip
A smooth spacetime is called causal if it contains no closed causal curves. The following equivalent definition is better suited for Lorentzian pre-length spaces.

\begin{defn}[{\cite[Definition 2.35]{KuSa}}]
 A Lorentzian pre-length space is called \emph{causal} if for any two points $p,q \in X$, $p < q$ implies $q \not< p$.
\end{defn}

Time functions too, can be defined either via causal curves (time functions are then required to be strictly increasing on future-directed causal curves), or in the following, more order-theoretic manner.

\begin{defn}
 A function $f \colon X \to \real$ on a Lorentzian pre-length space $(X,d,\tll,\leq,\tau)$ is called a \emph{generalized time function} if for all $p,q \in X$,
 \[
  p < q \implies f(p) < f(q).
 \]
 It is called a \emph{time function} if it is also continuous.
\end{defn}

Clearly, the existence of a (generalized) time function requires that the underlying space is at least causal.

\medskip
In the smooth case, non-total imprisonment is equivalent to the following definition (see, for instance, \cite[Theorem 4.39]{Min3}).

\begin{defn}[{\cite[Definition 2.35]{KuSa}}]
 A Lorentzian pre-length space $(X,d,\tll,\allowbreak \leq,\tau)$ is called \emph{non-totally imprisoning} if for every compact set $K \subseteq X$ there exists a constant $C>0$ such that for every causal curve $\gamma$ with image in $K$, $L^d(\gamma) \leq C$.
\end{defn}

As a corollary to the limit curve theorems, we obtain the following alternative characterizations.

\begin{thm} \label{cor315}
 Let $X$ be a Lorentzian pre-length space with limit curves. Then the following are equivalent.
 \begin{enumerate}
  \item $X$ is non-totally imprisoning.
  \item No compact set in $X$ contains a future-inextendible causal curve.
  \item No compact set in $X$ contains a past-inextendible causal curve.
  \item No compact set in $X$ contains a doubly-inextendible causal curve.
 \end{enumerate}
\end{thm}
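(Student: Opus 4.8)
The plan is to prove the cycle of implications (i) $\Rightarrow$ (ii) $\Rightarrow$ (iv) and (i) $\Rightarrow$ (iii) $\Rightarrow$ (iv), together with (iv) $\Rightarrow$ (i), using the limit curve theorems available in this setting; note that (ii) $\Leftrightarrow$ (iii) is immediate by reversing time direction, so effectively the real content is the equivalence of (i) with the ``no compact set imprisons an inextendible causal curve'' conditions. Throughout I work with a Lorentzian pre-length space with limit curves, so that Lemma~\ref{lem312} and Theorems~\ref{thm37} and \ref{thm314} apply.

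\textbf{From non-total imprisonment to the inextendibility statements.} Suppose $X$ is non-totally imprisoning and, for contradiction, that some compact set $K$ contains a future-inextendible causal curve $\gamma$. Parametrize $\gamma$ by $d$-arclength; since $\gamma$ is contained in the compact set $K$ and is future-inextendible, Lemma~\ref{lem312} gives that $\gamma$ is defined on $[a,\infty)$ with $L^d(\gamma) = \infty$. But then $\gamma$ is a causal curve with image in the compact set $K$ and infinite $d$-length, contradicting non-total imprisonment. This proves (i) $\Rightarrow$ (ii), and the same argument with past-directed curves gives (i) $\Rightarrow$ (iii). Since a doubly-inextendible causal curve is in particular future-inextendible, (ii) $\Rightarrow$ (iv) is trivial (and likewise (iii) $\Rightarrow$ (iv)).

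\textbf{From the inextendibility statement back to non-total imprisonment.} Assume (iv): no compact set contains a doubly-inextendible causal curve. Suppose for contradiction that $X$ is \emph{not} non-totally imprisoning, so there is a compact set $K$ and a sequence $(\gamma_n)_n$ of causal curves with image in $K$ and $L^d(\gamma_n) \to \infty$. Parametrize each $\gamma_n$ by $d$-arclength on $[0, L_n]$ with $L_n = L^d(\gamma_n) \to \infty$. Since all curves lie in the compact set $K$, Theorem~\ref{thm314} yields a subsequence converging locally uniformly to a future-inextendible causal curve $\gamma_\infty \colon [0,\infty) \to X$; moreover $\gamma_\infty$ has image in $K$ because $K$ is closed. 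Now I still need a \emph{doubly}-inextendible curve in $K$: applying the past version of Theorem~\ref{thm314} to the ``reversed'' curves $t \mapsto \gamma_{n_k}(L_{n_k} - t)$ (which are past-directed causal curves of $d$-length $L_{n_k} \to \infty$ contained in $K$), one extracts a further subsequence so that the resulting limit curve is past-inextendible as well; re-concatenating the forward and backward limit pieces (or, more cleanly, reparametrizing the $\gamma_n$ so that a fixed interior point is mapped to a fixed parameter value and taking a limit on all of $\real$) produces a doubly-inextendible causal curve with image in the compact set $K$, contradicting (iv). Hence $X$ is non-totally imprisoning, and likewise the same diagonal limit-curve argument shows directly that failure of non-total imprisonment already produces a future-inextendible (resp. past-inextendible) causal curve in $K$, so that (ii) $\Rightarrow$ (i) and (iii) $\Rightarrow$ (i) as well, closing all the loops.

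\textbf{Expected main obstacle.} The delicate point is upgrading a single future-inextendible limit curve to a \emph{doubly}-inextendible one with image still inside the compact set $K$: Theorem~\ref{thm314} as stated produces only future-inextendibility. The cleanest fix is, before passing to the limit, to reparametrize each $\gamma_n$ by $d$-arclength but shift the parameter so that $\gamma_n(0)$ is the midpoint in arclength; then $\gamma_n$ is defined on $[-L_n/2, L_n/2]$, one applies the forward limit curve theorem on the right and its time-dual on the left, and a diagonal argument over exhausting compact parameter intervals yields a limit curve defined on all of $\real$ which is inextendible in both directions and contained in $K$. One must check that the forward and backward limits agree on the overlap, which follows from uniqueness of locally uniform limits along a common subsequence. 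A minor auxiliary point is that one should confirm the limit curve is non-constant so that Theorem~\ref{thm314} indeed applies — this is guaranteed because it is parametrized by $d$-arclength on each compact subinterval, hence has positive $d$-length there.
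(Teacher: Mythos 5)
Your proposal is correct and takes essentially the same route as the paper: Lemma~\ref{lem312} handles the direction from non-total imprisonment to the non-existence of imprisoned inextendible curves, and the converse is the paper's exact argument of recentring the arclength parametrization at the midpoint, applying Theorem~\ref{thm314} forwards on the right half and its time-dual on the left half, and gluing the two limit curves at their common accumulation point (the paper cites \cite[Corollary 3.15]{KuSa} for (i)--(iii) rather than reproving them, but the content is the same). Note only that your first variant---reversing the curves as $t \mapsto \gamma_{n_k}(L_{n_k}-t)$ and ``re-concatenating''---would not work as stated, since that past-inextendible limit emanates from the accumulation point of the \emph{future} endpoints $\gamma_{n_k}(L_{n_k})$ and need not meet the future-inextendible limit starting at $\lim\gamma_{n_k}(0)$; the midpoint-centred version you describe in your ``main obstacle'' paragraph is the correct (and the paper's) fix.
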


\begin{proof}
 The equivalence between (i), (ii) and (iii) is shown in \cite[Corollary 3.15]{KuSa}. As a consequence of Lemma \ref{lem312}, any doubly-inextendible curve has infinite arclength. Thus (i) implies (iv).
 
 It remains to be shown that (iv) implies (i). Suppose $X$ is not non-totally imprisoning. Then there exists a compact set $K$ and a sequence of future-directed causal curves $\gamma_n \colon [0,L_n] \to X$, parametrized by arclength and contained in $K$, such that $L_n = L^d(\gamma_n) \to \infty$. Consider the sequence of future-directed causal curves $\bar\gamma_n \colon [0,L_n/2] \to X$ given by
 \[
  \bar\gamma_n(s) : = \gamma_n\left(\frac{L_n}{2} + s\right).
 \]
 Then also $L^d(\bar\gamma_n) = L_n/2 \to \infty$ and we can apply Theorem \ref{thm314} to find a converging subsequence $(\bar\gamma_{n_k})_k$ and a future-inextendible causal limit curve $\bar\gamma\colon [0,\infty) \to X$.  Next consider the sequence of past-directed causal curves $\tilde\gamma_k \colon [-L_n/2,0] \to X$ given by
 \[
  \tilde\gamma_k(s) := \gamma_{n_k}\left(\frac{L_n}{2}+s\right).
 \]
 Again we can apply Theorem~\ref{thm314} to find a converging subsequence $(\tilde\gamma_{k_m})_m$ and a past-inextendible limit curve $\tilde\gamma \colon (-\infty,0] \to X$. Note that
 \[
  \tilde\gamma(0) = \lim_{m \to \infty} \gamma_{n_{k_m}}(L_n/2) = \bar\gamma(0),
 \]
where the limit in the middle exists by compactness of $K$, a fact that we had already used implicitly when applying the Limit Curve Theorem~\ref{thm314}. Thus the curve $\tilde\gamma$ joined with $\bar\gamma$ is a doubly-inextendible causal curve $(-\infty,\infty) \to X$ contained in $K$. This contradicts our assumption (iv).
\end{proof}

The following result was shown by Kunzinger and S\"amann for Lorentzian length spaces, but only the assumption of causal path-connectedness is used in the proof.

\begin{prop}[{\cite[Theorem 3.26]{KuSa}}] \label{cau-nti}
 Suppose $(X,d,\tll,\leq,\tau)$ is a causally path-connected Lorentzian pre-length space. If $X$ is non-totally imprisoning, then $X$ is causal.
\end{prop}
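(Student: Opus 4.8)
The plan is to argue by contraposition: assuming $X$ is not causal, I will produce causal curves of arbitrarily large $d$-length sitting inside a single compact set, contradicting non-total imprisonment. No limit curve machinery is needed here; the argument is a direct concatenation-and-iteration construction, and it is precisely at the concatenation step that causal path-connectedness enters.

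So suppose $X$ is not causal. Then there exist $p,q\in X$ with $p<q$ and $q<p$; in particular $p\neq q$. By causal path-connectedness (Definition~\ref{path-connected}) there is a future-directed causal curve $\alpha$ from $p$ to $q$ and a future-directed causal curve $\beta$ from $q$ to $p$. Concatenating $\alpha$ followed by $\beta$, and then reparametrizing by $d$-arclength (as in the Remark following Definition~\ref{defcauscurv}), yields a future-directed causal curve $\gamma\colon[0,\ell]\to X$ with $\gamma(0)=\gamma(\ell)=p$ and $\ell:=L^d(\gamma)\geq d(p,q)>0$. The junction is handled by transitivity of $\leq$ (if $s_1$ lies in the $\alpha$-part and $s_2$ in the $\beta$-part, then $\gamma(s_1)\leq q\leq\gamma(s_2)$), and local Lipschitzness of the concatenation is routine (after arclength parametrization it is $1$-Lipschitz). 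The only thing here requiring genuine care is thus the verification that this concatenation really is a causal curve in the sense of Definition~\ref{defcauscurv}.

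Next, for each $n\in\nat$ let $\gamma_n\colon[0,n\ell]\to X$ be the curve that traverses $\gamma$ $n$ times in succession, i.e.\ $\gamma_n(s):=\gamma\bigl(s-(j-1)\ell\bigr)$ for $s\in[(j-1)\ell,j\ell]$. This is well defined because $\gamma(\ell)=\gamma(0)=p$, it is locally Lipschitz (again $1$-Lipschitz), and it is again future-directed causal: for $s_1<s_2$ lying in the $j$-th and $k$-th copies respectively, either $j=k$, in which case we use that $\gamma$ is causal, or $j<k$, in which case $\gamma_n(s_1)\leq\gamma(\ell)=p=\gamma(0)\leq\gamma_n(s_2)$, so $\gamma_n(s_1)\leq\gamma_n(s_2)$ by transitivity of $\leq$. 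Moreover $L^d(\gamma_n)=n\ell\to\infty$, while the image of every $\gamma_n$ equals the fixed compact set $K:=\gamma([0,\ell])$, a continuous image of a compact interval.

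Finally, $K$ is then a compact set containing causal curves $\gamma_n$ with $L^d(\gamma_n)=n\ell$ unbounded, which directly contradicts non-total imprisonment. Hence $X$ must be causal. I do not expect any real obstacle: the only points needing a modicum of attention are the bookkeeping facts that concatenations and finite iterates of causal curves are again causal curves (locally Lipschitz, and order-preserving via transitivity of $\leq$), which is exactly the place where the hypothesis of causal path-connectedness is used.
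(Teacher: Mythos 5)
Your proof is correct and follows essentially the same route as the argument the paper relies on (the cited \cite[Theorem 3.26]{KuSa}): from a causality violation one builds, via causal path-connectedness, a closed future-directed causal curve, and traversing this loop repeatedly produces causal curves of unbounded $d$-length inside its compact image, contradicting non-total imprisonment. The bookkeeping you flag (concatenations and iterates are causal by transitivity of $\leq$, arclength reparametrization) is exactly what the standard proof uses, so there is nothing to add.
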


For spacetimes, it is well-known that the existence of a time function implies strong causality, and that strong causality implies non-total imprisonment. This result has been shown by Kunzinger and Steinbauer for Lorentzian length spaces \cite[Theorem 3.13]{KuSt}, under the additional assumption that the time function must be topologically locally anti-Lipschitz. We instead give a direct proof of the fact that for Lorentzian pre-length spaces with limit curves, the existence of any kind of time function implies non-total imprisonment.

\begin{lem} \label{lem2}
 Let $(X,d,\tll,\leq,\tau)$ be a Lorentzian pre-length space with limit curves. If $X$ admits a time function, then $X$ is non-totally imprisoning.
\end{lem}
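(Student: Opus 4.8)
The plan is to argue by contradiction, combining the characterisations of non-total imprisonment in Theorem~\ref{cor315} with the limit curve theorem for inextendible curves (Theorem~\ref{thm314}). Assume $X$ is \emph{totally} imprisoning. Then by Theorem~\ref{cor315} there is a compact set $K \subseteq X$ containing a future-inextendible causal curve; reparametrising it by $d$-arclength and applying Lemma~\ref{lem312} (which is available since the curve lies in the compact set $K$), we may take it to be $\gamma \colon [0,\infty) \to X$ with image contained in $K$. Fix a time function $f$ on $X$.

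First I would record the behaviour of $f$ along $\gamma$. Since $\gamma$ is future-directed causal, $s_1 < s_2$ implies $\gamma(s_1) \leq \gamma(s_2)$, hence $f(\gamma(s_1)) \leq f(\gamma(s_2))$ (either $\gamma(s_1) = \gamma(s_2)$, giving equality, or $\gamma(s_1) < \gamma(s_2)$, giving strict inequality because $f$ is a time function). Thus $f \circ \gamma$ is continuous and non-decreasing, and it is bounded above by $\max_K f < \infty$ because $\gamma([0,\infty)) \subseteq K$. Therefore $\ell := \lim_{t \to \infty} f(\gamma(t))$ exists and is finite.

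Next I would run a shift-and-limit argument. Set $\gamma_n \colon [0,n] \to X$, $\gamma_n(s) := \gamma(n+s)$; these are future-directed causal curves parametrised by $d$-arclength with $L^d(\gamma_n) = n \to \infty$, all contained in $K$, so Theorem~\ref{thm314} produces a subsequence converging locally uniformly to a future-inextendible causal curve $\lambda \colon [0,\infty) \to X$. In particular $\lambda$ is non-constant (by Definition~\ref{defcauscurv}), and $\lambda(s) \in K$ for every $s$ since each $\gamma_{n_k}(s) \in K$ and $K$ is closed. For fixed $s$ we get $f(\lambda(s)) = \lim_k f(\gamma(n_k + s)) = \ell$ because $n_k + s \to \infty$, so $f \circ \lambda$ is constantly $\ell$. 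But as $\lambda$ is non-constant and future-directed causal there are parameters $a < b$ with $\lambda(a) \neq \lambda(b)$, hence $\lambda(a) < \lambda(b)$ and therefore $f(\lambda(a)) < f(\lambda(b))$, contradicting $f(\lambda(a)) = \ell = f(\lambda(b))$. This contradiction shows $X$ is non-totally imprisoning.

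The only delicate point — and the closest thing to an obstacle — is the interplay between the strict inequality in the definition of a time function and the fact that $\leq$ is merely a preorder: a causal curve could in principle revisit a point, so $f$ is only non-decreasing, not strictly increasing, along $\gamma$. This is precisely why one cannot conclude directly and instead needs the limit curve $\lambda$ to upgrade ``$f \circ \gamma$ has a finite limit'' into ``$f$ is \emph{constant} along a genuine non-constant causal curve''. One should also check that the hypotheses of Theorems~\ref{cor315} and~\ref{thm314} are met — they are, since $X$ is by assumption a Lorentzian pre-length space with limit curves and all the curves in play are trapped in the compact set $K$, so no properness of $d$ is needed.
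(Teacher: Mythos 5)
Your proof is correct, but it takes a genuinely different route from the paper. The paper also starts from Theorem~\ref{cor315} to obtain a future-inextendible causal curve $\gamma$ imprisoned in a compact set $K$, but then argues locally: since $\lim_{s\to\infty}\gamma(s)$ does not exist (Lemma~\ref{lem312}) while $K$ is compact, $\gamma$ has two distinct accumulation points $p\neq q$; interleaving the corresponding parameter sequences and looking at where $\gamma$ first crosses $\partial B_{\delta/2}(p)$ produces a point $q'\neq p$ with $p<q'$ (here weak causal closedness of a neighborhood of $p$ is used directly), whence $t(p)<t(q')$, while monotonicity of $t\circ\gamma$ plus continuity gives $t(q')\leq t(p)$ --- a contradiction. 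You instead exploit that $f\circ\gamma$ is non-decreasing and bounded on $K$, hence has a finite limit $\ell$, and then apply the inextendible limit curve Theorem~\ref{thm314} to the tail-shifted curves $\gamma(n+\cdot)$ to obtain a non-constant future-directed causal curve $\lambda$ on which $f\equiv\ell$, contradicting the strict increase of a time function along $p<q$. Both arguments are sound under the stated hypotheses (your curves stay in $K$, so no properness is needed, and the non-constancy of $\lambda$ is part of the statement of Theorem~\ref{thm314} via Definition~\ref{defcauscurv}); your version is arguably cleaner in that it avoids the interleaving/boundary-crossing construction and invokes weak causal closedness only implicitly through the limit curve machinery, whereas the paper's version stays closer to elementary accumulation-point arguments (mirroring Minguzzi's Lemma~2 in the smooth case) and uses Theorem~\ref{thm314} only through Theorem~\ref{cor315}. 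Both crucially use continuity of the time function, so neither extends to generalized time functions.
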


\begin{proof}
 Suppose $X$ is not non-totally imprisoning. Then by Theorem \ref{cor315} there exists a compact set $K \subseteq X$ and a future-inextendible future-directed causal curve $\gamma \colon [0, \infty) \to K$. Note the following two facts:
 \begin{enumerate}
  \item By Lemma \ref{lem312}, because $\gamma$ is inextendible, $\lim_{s \to \infty} \gamma(s)$ does not exist.
  \item By compactness of $K$, for every sequence $(s_i)_i$ in $[0,\infty)$, there exists a subsequence of $\left(\gamma(s_i)\right)_i$ that converges in $K$. 
 \end{enumerate}
 Thus we can find two sequences $(r_i)_i$ and $(s_i)_i$ in $[0,\infty)$ such that $p := \lim_{i \to \infty} \gamma(r_i) \\ \neq \lim_{i \to \infty} \gamma(s_i) =: q$ (in particular, both limits exist).
 
 For these $p, q$, pick $\delta > 0$ small enough so that $B_\delta (p)$ is contained in a weakly causally closed neighborhood and $q \not\in B_\delta (p)$. Assume w.l.o.g.\ that $r_1 < s_1 < r_2 < s_2 \ldots$ and that for all $i \in \nat$ we have $\gamma(r_i) \in B_{\delta/2} (p)$ and $\gamma(s_i) \not\in B_{\delta/2}(p)$. Now define a third sequence $(a_i)_i$ with $r_i < a_i < s_i$ and such that $a_i$ is the value at which $\gamma \vert_{[r_i, s_i]}$ first intersects $\partial B_{\delta/2}(p)$. We then have $r_i < a_i < s_i < r_{i+1}$. By compactness of $\partial B_{\delta/2}(p)$, there exists a subsequence of $(\gamma(a_i))_i$ that converges to a point $q' \neq p$.
 Since $r_i < a_i$, we have $\gamma(r_i) \leq \gamma(a_i)$. Because $B_\delta (p)$ is contained in a weakly causally closed neighborhood, we have $p < q'$. By assumption, $X$ admits a time function $t\colon X \to \real$, for which it holds that $t(p) < t(q')$. On the other hand, since $a_i < r_{i+1}$, we have $\gamma(a_i) \leq \gamma(r_{i+1})$. Thus $t(\gamma(a_i)) \leq t(\gamma(r_{i+1}))$ for all $i \in \nat$ and by continuity of $t$ and $\gamma$ also $t(q') \leq t(p)$. Combining this with the previous inequality, we obtain
 \[
  t(p) < t(q') \leq t(p),
 \]
 which is a contradiction.
\end{proof}

%%%%%%%%%%%%%%%%%%%%%%%%%%%%%%%%%%%%%%%%%% K-CAUSALITY %%%%%%%%%%%%%%%%%%%%%%%%%%%%%%%%%%%%%%

\section{Time functions and {$K$}-causality} \label{Kcausality}

The notion of $K$-causality was first introduced by Sorkin and Woolgar~\cite{SoWo} to study spacetimes with continuous Lorentzian metrics. Among the multiple applications of this concept, we emphasize the work of Minguzzi~\cite{Min2}, who showed that for smooth spacetimes, $K$-causality is equivalent to stable causality. Since Hawking~\cite{Haw} had shown earlier that stable causality is equivalent to the existence of a time function, so is $K$-causality. Minguzzi in \cite{Min2} also gave a direct proof of the equivalence between $K$-causality and the existence of time functions, which is more mathematically rigorous and less dependent on the Lorentzian manifold structure. Since $K$-causality is a purely order-theoretical notion, it can be used verbatim\footnote{In \cite{SoWo} the $K^+$-relation is only required to contain $I^+$.  Our definition with $J^+$ can be traced back to \cite{ACS, Min2}. On spacetimes and approximating Lorentzian pre-length spaces, we have $J^+ \subseteq \overline{I^+}$, hence there it makes no difference.} for Lorentzian pre-length spaces.

\begin{defn}[{\cite[Definitions 8 and 9]{SoWo}}] \label{Kdef}
 Let $(X,d,\tll,\leq,\tau)$ be a Lorentzian pre-length space. The \emph{$K^+$-relation} on $X$ is defined as the (unique) smallest transitive relation that contains $\leq$ and is (topologically) closed. 
 
A Lorentzian pre-length space $X$ is called \emph{$K$-causal}\footnote{In \cite{ACS,KuSa} a Lorentzian pre-length space with this property is called \emph{stably causal} because in the smooth case $K$-causality and stable causality are equivalent \cite{Min}. The definition of stable causality \cite[p.\ 433]{Haw}, however, requires knowledge about causal properties of ``nearby'' Lorentzian metrics. Since (causal) stability of Lorentzian pre-length spaces has not yet been investigated in this sense, we prefer to use the standard term $K$-causal.} if the $K^+$-relation is antisymmetric.
\end{defn}

In this section we establish the equivalence of the existence of time functions and $K$-causality on certain Lorentzian pre-length spaces (see Theorem~\ref{Kthm} below and Theorem~\ref{thm1} formulated for Lorentzian length spaces). This result generalizes the analogous theorem known for smooth spacetimes by Minguzzi~\cite[Theorem 7]{Min2} and the proof is obtained along the same lines.

\begin{thm} \label{Kthm}
 Suppose $\tau$ is a second countable, locally compact Lorentzian pre-length space with limit curves. Then $X$ is $K$-causal if and only if $X$ admits a time function.
\end{thm}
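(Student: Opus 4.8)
The plan is to prove the two implications separately, following Minguzzi's strategy for spacetimes but substituting the abstract limit curve machinery developed in Section~\ref{ssec:limitcurves} for the Lorentzian-manifold arguments.

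For the easy direction, suppose $X$ is $K$-causal. The $K^+$-relation is by Definition~\ref{Kdef} a closed, transitive relation containing $\leq$, and under the $K$-causality hypothesis it is antisymmetric, i.e.\ a closed partial order. Here I would invoke Nachbin's theory of topological ordered spaces: a second countable, locally compact Hausdorff space equipped with a closed partial order admits a continuous utility function, i.e.\ a continuous $t\colon X\to\real$ that is strictly $K^+$-increasing. (Second countability plus local compactness gives a $\sigma$-compact, metrizable, hence normally ordered space; one then builds $t$ as a countable combination of continuous order-preserving functions separating pairs of non-related points, exactly as Minguzzi does in \cite{Min2} and as pointed out in \cite{Min5}.) Since $p<q$ implies $p\le q$ implies $p\, K^+ q$ with $p\ne q$, antisymmetry of $K^+$ forces $t(p)<t(q)$. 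Thus $t$ is a time function.

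For the converse, suppose $X$ admits a time function $t$; I want to show $K^+$ is antisymmetric. First, Lemma~\ref{lem2} gives that $X$ is non-totally imprisoning, so by Theorem~\ref{cor315} no compact set contains an inextendible causal curve; this is the input that makes the limit curve theorems usable. The key claim is that $t$ is non-decreasing along the $K^+$-relation: if $p\,K^+ q$ then $t(p)\le t(q)$. To see this, consider the relation $R:=\{(p,q): t(p)\le t(q)\}$; it is transitive, and it is closed because $t$ is continuous, and it contains $\le$ (as $t$ is a generalized time function, $p\le q$ gives $t(p)\le t(q)$ — strict if $p<q$, but $\le$ always). By minimality of $K^+$ among closed transitive relations containing $\le$, we get $K^+\subseteq R$. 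Now if both $p\,K^+ q$ and $q\,K^+ p$ with $p\ne q$, then $t(p)=t(q)$; to derive a contradiction I need to upgrade this to produce an actual causal curve (or a near-causal configuration) joining $p$ to $q$ and back, along which $t$ must strictly increase somewhere. This is where one uses that $K^+$ is generated from $\le$ by closure and transitivity: an element of $K^+$ can be reached by finite chains of $\le$-related points together with limits of such; passing to limits of the corresponding causal curves via Theorem~\ref{thm37} (using local weak causal closedness and local compactness to stay in a compact set, or $d$-compatibility to control lengths) produces, in the case $p\,K^+ q$, $q\,K^+ p$, $p\ne q$, either a genuine closed causal curve through $p$ — contradicting causality, which holds by Proposition~\ref{cau-nti} since $X$ is non-totally imprisoning — or a sequence of causal curves accumulating at $p$ and $q$ whose limit, an inextendible causal curve trapped in a compact set, contradicts Theorem~\ref{cor315}. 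Either way we reach a contradiction, so $K^+$ is antisymmetric.

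The main obstacle is the converse direction, specifically the step extracting causal-curve information from the purely order-theoretic statement $p\,K^+ q$. In the smooth setting one has a clean description of $K^+$ via finite sequences of alternating $I^+$ and limits; in the pre-length setting I expect to need a careful induction on the ``generation'' at which a pair enters $K^+$, at each stage replacing a pair obtained as a limit $(p_n,q_n)\to(p,q)$ with $p_n\,K^+ q_n$ by causal curves $\gamma_n$ from $p_n$ to $q_n$ (existing once we are at the base level $p_n\le q_n$ via causal path-connectedness, then concatenated up the chain), and applying Theorem~\ref{thm37} or Theorem~\ref{thm314} to obtain a limiting inextendible causal curve. Keeping the curves in a fixed compact set — so that the limit curve theorems apply and so that their limit being inextendible contradicts non-total imprisonment — is the delicate point, and is exactly where second countability and local compactness (rather than properness) are used, as in \cite{Min2}.
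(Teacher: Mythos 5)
Your first direction is fine and is essentially the paper's: $K$-causality makes $K^+$ a closed (partial) order, and Levin's theorem (Theorem~\ref{Levin}, i.e.\ the Nachbin-type utility result you describe) on a second countable, locally compact space yields a continuous function that is strictly increasing on non-diagonal $K^+$-pairs, hence a time function since $\leq\,\subseteq K^+$.

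The converse, however, has a genuine gap at exactly the point you flag as "the delicate step.'' From minimality of $K^+$ you correctly get $K^+\subseteq\{(p,q): t(p)\leq t(q)\}$, so an antisymmetry violation only gives $t(p)=t(q)$; to reach a contradiction you need the \emph{strict} statement that $(p,q)\in K^+$ and $p\neq q$ imply $t(p)<t(q)$ (the paper's Lemma~\ref{lem4}). Your plan to obtain this by "induction on the generation at which a pair enters $K^+$,'' replacing limit pairs by causal curves and applying Theorem~\ref{thm37}, does not work as described: $K^+$ is the smallest closed transitive relation containing $\leq$, and its construction in general requires transfinitely alternating topological closure with transitive closure, so a pair in $K^+\setminus J^+$ need not be a limit of $\leq$-pairs, nor be reachable by finitely many $\leq$-links plus one limit, and there is no a priori sequence of causal curves from (near) $p$ to (near) $q$ to feed into the limit curve theorem. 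Consequently your dichotomy "closed causal curve, or inextendible limit curve imprisoned in a compact set'' is unsubstantiated. The paper (following Minguzzi) avoids any such induction: it defines an auxiliary relation $R^+$ consisting of those $K^+$-pairs $(p,q)$ with either $p\leq q$ or, for every relatively compact open $B\ni p$, some $r\in\partial B$ with $p<r$ and $(r,q)\in K^+$, proves $R^+$ is closed and transitive (this is where causal path-connectedness, non-total imprisonment from Lemma~\ref{lem2}, local compactness and Theorem~\ref{thm37} enter), and concludes $R^+=K^+$ by minimality (Lemma~\ref{lem3}); Lemma~\ref{lem4} is then derived from this escape-to-the-boundary property, and antisymmetry follows. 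To repair your argument you would need to supply this minimality-based characterization (or an equivalent substitute) rather than an induction over how pairs enter $K^+$.
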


Before proving the theorem, we show that time functions can exist more generally also on Lorentzian pre-length spaces that are not $K$-causal if they do not satisfy the limit curve property (Definition~\ref{llslimitcurves}).

\begin{ex}[There exist Lorentzian pre-length spaces that admit a time function but are neither strongly causal nor $K$-causal]\label{example1}

 Let $(X,d)$ be the Euclidean plane with coordinates $(t,x)$. For any pair of points $p_i = (t_i,x_i)$, $i=1,2$, let
 \begin{align*}
  p_1 \tll p_2 &:\iff t_1 < t_2, \\
  p_1 \leq p_2 &:\iff t_1 < t_2 \text{ or } p_1 = p_2,
 \end{align*}
 as depicted in Figure \ref{figex1}, and
 \[
  \tau(p_1,p_2) = t_2 - t_1.
 \]
 This equips $(X,d)$ with the structure of a Lorentzian pre-length space. Clearly, the $t$-coordinate is a time function. 

 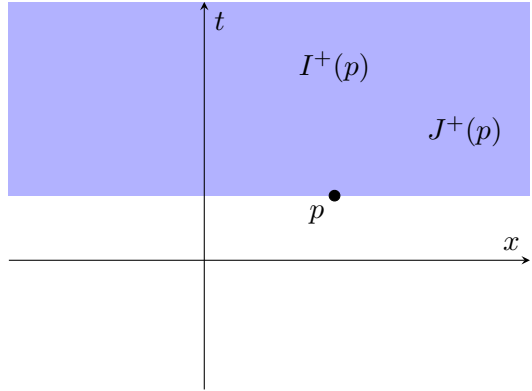
\begin{figure}
 \begin{center}
  \begin{tikzpicture}
   \begin{axis}[xmin=-1.5,xmax=2.5,ymin=-1,ymax=2,axis on top=true,
                 axis x line=middle,
                 axis y line=middle,
                 xlabel={$x$},
                 ylabel={$t$},
                 xtick=\empty,
                 ytick=\empty,
                 axis equal image]
     \path[name path=A] (axis cs: -1.5,0.5) -- (axis cs: 2.5,0.5);
     \path[name path=B] (axis cs: -1.5,2) -- (axis cs: 2.5,2);
     \addplot[blue!30] fill between [of=A and B];
     \node at (axis cs: 1,1.5) {$I^+(p)$};
     \node at (axis cs: 2,1) {$J^+(p)$};
     \filldraw[black] (axis cs: 1,0.5) circle (2pt) node[anchor=north east] {$p$};
    \end{axis}
   \end{tikzpicture}
  \end{center}
  \caption{The sets $I^+(p)$ (blue, without boundary) and $J^+(p) = I^+(p) \cup \{p\}$ for a point $p$ in Example \ref{example1}.} \label{figex1}
\end{figure}
 
 However, this space is not $K$-causal, as any closed relation containing $\leq$ (or $\tll$) must contain the relation $\leq_R$ given by
 \[
  p_1 \leq_R p_2 \iff t_1 \leq t_2.
 \]
 But $\leq_R$ is not antisymmetric, so the $K$-relation will not be antisymmetric either, and our space is therefore not $K$-causal.

 Note that while $K$-causality implies strong causality on smooth spacetimes and locally compact Lorentzian length spaces \cite[Proposition 3.16]{ACS} this need not be the case for Lorentzian pre-length spaces. To see that $X$ is also not strongly causal, recall that a Lorentzian pre-length space $X$ is called \emph{strongly causal} if the Alexandrov topology, generated by
\[
 I(p,q) = \{r \in X \mid p \tll r \tll q \}, \qquad p,q\in X,
\]
agrees with the metric topology \cite[Definitions 2.4 and 2.35]{KuSa}.
 
The Alexandrov topology of the above example is generated by the open horizontal stripes, hence is strictly coarser than the Euclidean topology, and $X$ is therefore not strongly causal.
\end{ex}

\subsection{Proof of Theorem~\ref{Kthm}}

We follow Minguzzi's proof for smooth spacetimes \cite{Min2}. A key element is the following theorem from utility theory, a branch of mathematical economics with resemblances to causality theory.

\begin{thm}[Levin's Theorem {\cite{Lev}}] \label{Levin}
 Let $X$ be a second countable, locally compact Hausdorff topological space and $R$ be a closed preorder on $X$. Then there exists a continuous function $f: X \to \real$ such that
 \[
  (x,y) \in R \implies f(x) \leq f(y),
 \]
 with equality if and only if $x=y$.
\end{thm}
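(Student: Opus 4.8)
The plan is to build $f$ as a weighted sum of a countable family of continuous \emph{increasing} functions (here $g$ is increasing means $(x,y)\in R\implies g(x)\le g(y)$) that \emph{strictly} separates every comparable pair. First note that the displayed conclusion forces $R$ to be antisymmetric: if $(x,y),(y,x)\in R$ with $x\ne y$ then $f(x)\le f(y)\le f(x)$ yields equality without $x=y$. Hence I treat $R$ as a closed partial order (exactly the case relevant to $K$-causality), and I use antisymmetry below. If I can produce continuous increasing $g_k\colon X\to[0,1]$ such that for every pair with $(x,y)\in R$, $x\ne y$, some $g_k$ satisfies $g_k(x)<g_k(y)$, then $f:=\sum_k 2^{-k}g_k$ converges uniformly, is continuous and increasing, and satisfies $(x,y)\in R,\ x\ne y\implies f(x)<f(y)$, which is precisely the claim (the ``equality iff $x=y$'' following at once).

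\emph{Per-pair separation.} Fix $(x,y)\in R$ with $x\ne y$ and consider the principal past $A:=\{z:(z,x)\in R\}$ and the principal future $B:=\{z:(y,z)\in R\}$. Since $R$ is closed, $A$ and $B$ are closed (they are sections of the closed set $R$); by transitivity $A$ is decreasing and $B$ is increasing; and they are disjoint, for $z\in A\cap B$ would give $(z,x),(y,z)\in R$, hence $(y,x)\in R$ by transitivity, contradicting antisymmetry since $(x,y)\in R$ and $x\ne y$. An order-theoretic Urysohn (Nachbin) separation \cite{Na} then furnishes a continuous increasing $g\colon X\to[0,1]$ with $g|_A=0$ and $g|_B=1$; as $x\in A$ and $y\in B$ we get $g(x)=0<1=g(y)$. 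Crucially, the open box $\{g<\tfrac12\}\times\{g>\tfrac12\}\subseteq X\times X$ contains $(x,y)$, and $g$ strictly separates \emph{every} pair lying in it.

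\emph{Countabilization.} Here second countability enters. The comparable set $S:=\{(x,y)\in R: x\ne y\}$ is a subspace of the second countable space $X\times X$, hence Lindel\"of. The boxes from the previous step form an open cover of $S$, so there is a countable subcover, yielding the family $\{g_k\}$: any $(p,q)\in S$ lies in some box, whence $g_k(p)<\tfrac12<g_k(q)$. The weighted sum of the first paragraph then finishes the argument.

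\emph{Main obstacle.} The hard part is the order-Urysohn separation on the \emph{non-compact} space $X$, i.e.\ normal orderedness. Monotonicity and the boundary values come for free by \emph{monotonization}: from any continuous $g_0$ with $g_0|_A=0$, $g_0|_B=1$ (for instance $g_0=\rho(\cdot,A)/(\rho(\cdot,A)+\rho(\cdot,B))$ for a compatible metric $\rho$), the function $g(z):=\sup\{g_0(w):(w,z)\in R\}$ is automatically increasing with $g|_A=0$ and $g|_B=1$. The genuine difficulty is the \emph{continuity} of $g$, which can fail because the near-optimal $w$'s in the supremum may escape every compact set. This is exactly where local compactness, second countability and closedness of $R$ must be leveraged --- concretely via Nachbin's theorem on compact ordered spaces together with a compact exhaustion $K_1\subseteq K_2\subseteq\cdots$ of $X$, gluing the separating functions while preserving global monotonicity and controlling the behavior at infinity. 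I expect this continuity/gluing step to be the technical heart of the proof.
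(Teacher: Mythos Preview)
The paper does not prove this theorem at all; it is quoted as Levin's Theorem with a bare citation to \cite{Lev} and then used as a black box in the proof of Theorem~\ref{Kthm}. So there is no ``paper's own proof'' to compare your proposal against.

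That said, your outline is the standard route to results of this type: reduce to the antisymmetric case, produce for each comparable pair a continuous increasing separator via an order-theoretic Urysohn lemma, extract a countable family by Lindel\"of, and sum. You have also correctly located the genuine difficulty: proving that a second countable, locally compact Hausdorff space equipped with a closed partial order is \emph{normally ordered} in Nachbin's sense (equivalently, that the monotonized function $g(z)=\sup\{g_0(w):(w,z)\in R\}$ can be made continuous). This is precisely the nontrivial content of Levin's theorem, and your sketch stops short of it --- the ``compact exhaustion plus gluing'' idea is plausible but you have not actually carried it out, and the details are delicate (one must control the supremum uniformly near infinity while preserving monotonicity). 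If you intend to include a self-contained proof rather than cite \cite{Lev} or \cite{Na}, this step needs to be written out in full; otherwise, following the paper and simply citing Levin is entirely appropriate here.
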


That $K$-causality implies the existence of a time function is a direct consequence of Levin's Theorem, and is in fact true in an even more general setting than ours, as already pointed out by Minguzzi \cite{Min5}.

It remains to show the converse. The next lemma gives us a more explicit characterization of the $K$-relation. Its proof is not significantly different to its smooth counterpart \cite[Lemma 3]{Min2}, but is included for the sake of clarity.
 
\begin{lem} \label{lem3}
 Let $(X,d,\tll,\leq,\tau)$ be a non-totally imprisoning, locally compact Lorentzian pre-length space with limit curves. If $(p,q) \in K^+\subseteq X \times X$, then either $p \leq q$ or for every relatively compact open set $B$ containing $p$, there exists $r \in \partial B$ such that $p < r$ and $(r,q) \in K^+$.
\end{lem}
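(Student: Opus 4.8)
The plan is to identify the relation described in the statement with the $K^+$-relation itself, using that $K^+$ is by definition the \emph{smallest} closed transitive relation containing $\le$ (Definition~\ref{Kdef}). Concretely, I would introduce
\[
 R := \bigl\{(p,q)\in K^+ \;\big|\; p\le q,\ \text{or for every relatively compact open }B\ni p\text{ there is }r\in\partial B\text{ with }p<r\text{ and }(r,q)\in K^+\bigr\},
\]
which is well posed because, $X$ being a locally compact metric space, every point has arbitrarily small relatively compact open neighbourhoods. By inspection $R\subseteq K^+$ and $\le\subseteq R$, so it suffices to prove that $R$ is transitive and topologically closed; then $K^+\subseteq R$, which is exactly the assertion of the lemma (and in fact $R=K^+$).

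For transitivity, suppose $(p,q),(q,s)\in R$. Since both pairs lie in $K^+$, so does $(p,s)$, and only the displayed alternative must be verified for $(p,s)$. If $(p,q)$ holds via the ``$\forall B$'' alternative, then for any relatively compact open $B\ni p$ the point $r\in\partial B$ that it provides already works, because $(r,q)\in K^+$ and $(q,s)\in K^+$ give $(r,s)\in K^+$. The only case requiring a trick is $p\le q$ together with $(q,s)$ via the ``$\forall B$'' alternative: given a relatively compact open $B\ni p$, if $q\notin B$ I take for $r$ the first point at which a future-directed causal curve from $p$ to $q$ (which exists by causal path-connectedness) meets $\partial B$, so that $p<r\le q$ and hence $(r,s)\in K^+$; if instead $q\in B$, I apply the ``$\forall B$'' alternative for $(q,s)$ \emph{to this very set} $B$, obtaining $r\in\partial B$ with $q<r$ and $(r,s)\in K^+$, and then $p\le q<r$ together with $p\ne r$ (as $p\in B$, $r\in\partial B$) yields $p<r$.

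Closedness is the heart of the matter and the step I expect to be the main obstacle, because the global relation $\le$ need not be closed, so a limit of the curves witnessing $p_n<r_n$ is a priori only known to lie in $K^+$ rather than in $\le$; the remedy is to cut the curves off at $\partial B$ and apply the limit curve theorem on a compact set. So let $(p_n,q_n)\to(p,q)$ with $(p_n,q_n)\in R$; then $(p,q)\in K^+$ by closedness of $K^+$. Assuming $p\not\le q$, fix a relatively compact open $B\ni p$ (so $p_n\in B$ for large $n$); I must produce $r\in\partial B$ with $p<r$ and $(r,q)\in K^+$. Passing to a subsequence, either (a) $p_n<q_n$ for all $n$, with future-directed causal curves $\gamma_n$ from $p_n$ to $q_n$ which I may further assume all leave $B$ --- for otherwise the Limit Curve Theorem~\ref{thm37}, applied in the compact set $\overline B$ (where causal curves have uniformly bounded $d$-length by non-total imprisonment), would yield a possibly constant limit curve from $p$ to $q$ and force $p\le q$; or (b) $p_n\not\le q_n$ for all $n$, so the ``$\forall B$'' alternative for $(p_n,q_n)$ supplies $r_n\in\partial B$ with $p_n<r_n$, $(r_n,q_n)\in K^+$, and future-directed causal curves $\gamma_n$ from $p_n$ to $r_n$, which leave $B$ as well. (If $p_n=q_n$ for infinitely many $n$, then $p=q$, a contradiction.) Let $r'_n\in\partial B$ be the first point of $\gamma_n$ on $\partial B$; then $p_n<r'_n$, the segment of $\gamma_n$ from $p_n$ to $r'_n$ lies in $\overline B$, and $(r'_n,q_n)\in K^+$ --- directly in case (a) since $r'_n\le q_n$, and in case (b) since $r'_n\le r_n$ and $(r_n,q_n)\in K^+$. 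These segments have uniformly bounded $d$-length, hence, after affine reparametrisation onto $[0,1]$, are uniformly Lipschitz; as $\overline B$ is compact and $p_n\to p$, Theorem~\ref{thm37} produces a subsequence converging uniformly to a Lipschitz curve from $p$ to $r:=\lim r'_n\in\partial B$ (the last limit existing along a further subsequence by compactness of $\partial B$). This curve is non-constant, as $p\in B$ while $r\in\partial B$, hence future-directed causal, so $p<r$; and $(r,q)\in K^+$ by closedness of $K^+$. Since $B$ was arbitrary, $(p,q)\in R$, so $R$ is closed and the proof is complete. Apart from this truncation device, everything is routine bookkeeping with the transitivity and closedness of $K^+$ and standard use of the limit curve theorem.
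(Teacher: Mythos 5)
Your proposal is correct and follows essentially the same route as the paper: define the auxiliary relation $R$, use minimality of $K^+$ to reduce the lemma to showing $R$ is transitive and closed, and prove closedness by truncating the connecting causal curves at $\partial B$, bounding their $d$-lengths in $\overline{B}$ via non-total imprisonment, and invoking the limit curve theorem together with closedness and transitivity of $K^+$. The only differences are cosmetic: you spell out the transitivity argument (which the paper delegates to Minguzzi's smooth-case proof) and you merge the paper's two closedness cases by always truncating at the first exit point, which is a mild streamlining rather than a different method.
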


\begin{proof}
 For the purposes of this proof, it will be more convenient to denote relations as subsets of $X \times X$; in particular, $J^+ := \{ (p,q) \in X \times X \mid p \leq q \}$. Consider the relation
 \begin{align*}
  R^+ := \{ (p,q) \in K^+ \mid & \, (p,q) \in J^+ \text{ or for every relatively compact } \\
  &\text{ open set } B  \text{ containing } p \text{ there is an } r \in \partial B \\
  &\text{ such that } p < r \text{ and } (r,q) \in K^+ \}.
 \end{align*}
 Clearly, $J^+ \subseteq R^+ \subseteq K^+$. We will show that $R^+$ is closed and transitive, which then implies $R^+=K^+$, and in turn proves the Lemma. Transitivity can be proven exactly in the same way as in the smooth case, so we refer to \cite[Lemma 3]{Min2}.
 
 To show closedness of $R^+$, consider $(p_n,q_n) \to (p,q)$ with $(p_n,q_n) \in R^+$ for all $n \in \nat$. If $p=q$ then $(p,q) \in J^+ \subseteq R^+$. We can therefore assume that $p \neq q$ and it remains to be shown that $(p,q) \in R^+$ as well. Let $B$ be an open relatively compact neighborhood of $p$. For sufficiently large $n$, $p_n \neq q_n$ and $p_n \in B$. By passing to a subsequence if necessary, we can assume that either $(p_n,q_n) \in J^+$ (case 1) or $(p_n, q_n) \in R^+ \setminus J^+$ (case 2) for all $n \in \nat$:
 \begin{enumerate}
  \item[1.] Suppose $(p_n,q_n) \in J^+$ for all $n \in \nat$. By assumption $X$ is causally path-connected (and $p_n \neq q_n$), thus there exist future-directed causal curves $\gamma_n \colon [0,1] \to X$ from $p_n$ to $q_n$. By passing to a subsequence if necessary, we can assume that $\gamma_n$ either lies entirely in $\overline{B}$ for all $n$, or leaves $\overline{B}$ for all $n$.
  
  If all the $\gamma_n$ lie inside $\overline{B}$, then by non-total imprisonment their lengths (and by linear reparametrization also their Lipschitz constants) are bounded above by a positive constant $C$ independent of $n$. Thus we are in conditions to apply Theorem \ref{thm37}, which shows the existence of a limit causal curve connecting $p$ and $q$, and hence $(p,q) \in J^+ \subseteq R^+$.
  
  If, on the other hand, none of the $\gamma_n$ lie entirely inside $\overline{B}$, then there is a first parameter value $s_n$ at which $\gamma_n$ leaves $B$ (and $\gamma_n(s_n) \in \partial B$ by connectedness). Define a new sequence of curves $\tilde \gamma_n := \gamma_n \vert_{[0,s_n]}$. Linear reparametrization so that $\tilde\gamma_n \colon[0,1]\to X$ together with Theorem~\ref{thm37} shows the existence of a limit causal curve that connects $p$ with a point $r = \lim_{n \to \infty} \gamma_n(s_n) \in \partial B$ (again w.l.o.g.\ by passing to a subsequence). Because $\gamma_n(s_n)\leq q_n$ for all $n$ it follows that $(r,q) \in \overline{J^+} \subseteq K^+$ by closedness of the $K$-relation. Since also $p < r$, we again conclude that $(p,q) \in R^+$.
  
  \item[2.] Suppose $(p_n,q_n) \in R^+ \setminus J^+$ for all $n$. Then there exist points $r_n \in \partial B$ such that $p_n < r_n$ and $(r_n,q_n) \in K^+$. By passing to a subsequence if necessary, we can assume that $r_n \to r \in \partial B$. Arguing as in case 1 (for the sequence $(p_n,r_n)$), either $p < r$, or there exists $r' \in \partial B$ such that $p < r'$ and $(r',r) \in \overline{J^+} \subseteq K^+$. Combining this with the fact that $(r,q) \in K^+$ by closedness of the $K$-relation, it follows that $(p,q) \in R^+$.
 \end{enumerate}
 In both cases we have thus shown that $(p,q) \in R^+$, which concludes the proof.
\end{proof}

The next and final lemma is key, and tells us that time functions are $K$-utilities, in the language of economics. In other words, a time function with respect to the causal relation $\leq$ is automatically also a time function with respect to the $K^+$-relation.

\begin{lem} \label{lem4}
 Let $(X,d,\tll,\leq,\tau)$ be a locally compact Lorentzian pre-length space with limit curves and $t\colon X \to \real$ be a time function. If $(p,q) \in K^+$, then either $p = q$ or $t(p) < t(q)$.
\end{lem}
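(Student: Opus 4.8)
The plan is to follow Minguzzi's argument \cite[Lemma 4]{Min2}, splitting the statement into two steps: first establish the \emph{weak} inequality $t(p) \le t(q)$ for every $(p,q) \in K^+$, and then upgrade it to the strict inequality when $p \neq q$ by invoking the structural Lemma~\ref{lem3}.

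For the first step, I would consider the relation $R := \{(x,y) \in X \times X \mid t(x) \le t(y)\}$. Since $t$ is continuous, $R$ is (topologically) closed; it is reflexive and transitive; and it contains $\leq$, because $p \leq q$ forces either $p = q$ (whence $t(p) = t(q)$) or $p < q$ (whence $t(p) < t(q)$ since $t$ is a time function), so in either case $t(p) \le t(q)$. As $K^+$ is, by Definition~\ref{Kdef}, the smallest closed transitive relation containing $\leq$, it follows that $K^+ \subseteq R$; that is, $(p,q) \in K^+$ implies $t(p) \le t(q)$.

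For the second step, let $(p,q) \in K^+$ with $p \neq q$. Because $X$ admits a time function, Lemma~\ref{lem2} gives that $X$ is non-totally imprisoning, so the hypotheses of Lemma~\ref{lem3} are satisfied. That lemma yields a dichotomy: either $p \leq q$, in which case $p \neq q$ gives $p < q$ and hence $t(p) < t(q)$ directly; or, fixing any relatively compact open set $B$ containing $p$ (such a set exists by local compactness), there is a point $r \in \partial B$ with $p < r$ and $(r,q) \in K^+$. In this latter case, $p < r$ gives $t(p) < t(r)$ by the time function property, while $(r,q) \in K^+$ gives $t(r) \le t(q)$ by the first step; chaining the two inequalities yields $t(p) < t(q)$. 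In all cases we conclude $t(p) < t(q)$, as desired.

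The only real subtlety — and the reason a one-line argument does not work — is that the naive candidate relation $\{(x,y) \mid x = y \text{ or } t(x) < t(y)\}$ is \emph{not} closed, so one cannot simply sandwich $K^+$ inside it and read off strictness. The structural input of Lemma~\ref{lem3} (resting on the limit curve theorems and on non-total imprisonment, which in turn follows from the existence of a time function via Lemma~\ref{lem2}) is precisely what bridges this gap. Since all the genuine work has already been carried out in Lemmas~\ref{lem2} and~\ref{lem3}, the present argument is short.
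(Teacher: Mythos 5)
Your proof is correct and follows essentially the same route as the paper, which simply defers to Minguzzi's argument for \cite[Lemma 4]{Min2} with \cite[Lemmas 2 and 3]{Min2} replaced by Lemmas~\ref{lem2} and~\ref{lem3}: first the weak inequality via the closed transitive relation $\{(x,y)\mid t(x)\le t(y)\}\supseteq K^+$, then strictness via the dichotomy of Lemma~\ref{lem3} together with non-total imprisonment from Lemma~\ref{lem2}. Nothing further is needed.
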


\begin{proof}
 The proof is the same as the one of \cite[Lemma 4]{Min2}, replacing \cite[Lemma 2]{Min2} and \cite[Lemma 3]{Min2} by our Lemmas \ref{lem2} and \ref{lem3}, respectively.
\end{proof}

\begin{proof}[Proof of Theorem \ref{Kthm}]
 That $K$-causal spaces admit time functions is a direct consequence of Levin's Theorem~\ref{Levin}. Conversely, if $X$ admits a a time function, then the $K^+$-relation must be antisymmetric, as otherwise it would contradict Lemma \ref{lem4}.
\end{proof}

\begin{rem}
 It is worth pointing out that, throughout this section, we have not made use of the chronological relation $\tll$, nor of timelike curves. Hence, Theorem \ref{Kthm} is still valid if in Definition~\ref{path-connected} we only require the existence of causal (and not of timelike) curves, or even if $\tll$ is empty.
\end{rem}

%%%%%%%%%%%%%%%%%%%%%%%%%%%%%%%%%%%%%%%%%%%%%%%%%%% VOLUME FUNCTIONS %%%%%%%%%%%%%%%%%%%%%%%%%%%%%%%%%

\section{Volume time functions} \label{volfcts}

In this section we introduce and explicitly construct special types of functions, called \emph{averaged volume functions}, on Lorentzian pre-length spaces that are equipped with probability measures. While the existence of a suitable measure solely depends on the topology (in fact, the metric structure) it is the causal structure that determines whether these functions are time functions. More precisely, we will see that the averaged volume functions are time functions if and only if the underlying Lorentzian pre-length space is causally continuous in Theorem~\ref{ccthm} (see Theorem~\ref{thm2} for the Lorentzian length space version thereof).

Our results generalize a classical theorem of Dieckmann~\cite{Die,Die2} (also stated earlier by Hawking and Sachs \cite{HaSa}, but with an incomplete proof). Volume functions had already been introduced earlier by Geroch \cite[Sec.\ 5]{Ger} to study global hyperbolicity; we will replicate those results in Section \ref{ghsec}. In this section, we follow the approach of Dieckmann, but also make use of an averaging procedure similar to that used by Hawking~\cite{Haw} to study stable causality and time functions. Besides that, our methods in this section are based on order- and measure-theoretical arguments, and we do not need to assume the existence of causal curves.

\subsection{Averaged volume functions}\label{ssec:avgvolfct}

To construct averaged volume functions on a Lorentzian pre-length space $X$, we equip $X$ with a Borel measure $\mu$ satisfying
\begin{enumerate}
 \item $\mu(X)=1$, i.e., $\mu$ is a probability measure, and
 \item $\supp(\mu) = X$, i.e., $\mu$ has full support.
\end{enumerate}
When $X$ is finite the construction of $\mu$ is trivial. Otherwise such a measure exists precisely when $(X,d)$ is a separable metric space (which is automatically satisfied for all compact spaces).

\begin{prop} \label{sep}
 A metric space $(X,d)$ admits a Borel probability measure $\mes$ with $\supp(\mes) = X$ if and only if it is separable (equivalently, second-countable).
\end{prop}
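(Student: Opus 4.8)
The statement is the classical characterization of metric spaces admitting a fully-supported Borel probability measure. One direction is easy; the other requires a construction. Let me sketch both.

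\medskip
\textbf{Easy direction: such a measure forces separability.} Suppose $\mu$ is a Borel probability measure with $\supp(\mu) = X$. The key observation is that $X$ cannot contain an uncountable family of pairwise-disjoint open balls of positive radius. Indeed, if $\{B(x_i, r_i)\}_{i \in \Lambda}$ were such a family with $\Lambda$ uncountable, then since $\supp(\mu) = X$ each ball has $\mu(B(x_i,r_i)) > 0$; grouping the $i$'s by which dyadic range $[1/(n+1), 1/n)$ their measure falls into, some range must contain uncountably many $i$, and summing their measures (which are disjoint, so additive) gives $\mu(X) = \infty$, contradicting $\mu(X) = 1$. Hence for each fixed $n$, any maximal collection of points that are pairwise $\tfrac1n$-separated must be countable (balls of radius $\tfrac{1}{2n}$ around them are disjoint); taking the union over $n$ of such maximal $\tfrac1n$-nets yields a countable dense set. (Equivalently, one argues directly that $X$ is \emph{ccc} and a ccc metric space is separable, but the ball-counting argument is self-contained.) The parenthetical equivalence "separable $\iff$ second-countable" is standard for metric spaces.

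\medskip
\textbf{Hard direction: separability gives such a measure.} Let $\{x_n\}_{n \in \nat}$ be a countable dense subset of $X$ and put
\[
 \mu := \sum_{n=1}^{\infty} 2^{-n} \, \delta_{x_n},
\]
where $\delta_{x_n}$ is the Dirac measure at $x_n$. This is a Borel measure, and $\mu(X) = \sum_{n} 2^{-n} = 1$, so it is a probability measure. For the full-support claim: let $U \subseteq X$ be any nonempty open set; by density of $\{x_n\}$ there is some $x_m \in U$, whence $\mu(U) \geq 2^{-m} > 0$. Thus every nonempty open set has positive measure, which is exactly the condition $\supp(\mu) = X$.

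\medskip
\textbf{Main obstacle.} There is essentially no obstacle; this is a textbook fact and the proof is short. The only point requiring a moment's care is the easy direction, where one must rule out an uncountable dense-in-themselves-free family of disjoint balls — the pigeonhole-on-dyadic-scales argument above handles it cleanly. (One could alternatively cite this as standard, e.g. via the fact that a metric space supporting a finite fully-supported Borel measure is hereditarily Lindelöf, hence second countable; but giving the direct argument keeps the section self-contained, in the spirit of the rest of the paper.)
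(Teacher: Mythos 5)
Your proof is correct, and it is in fact more self-contained than the paper's. The ``separable $\Rightarrow$ measure'' direction is exactly the paper's argument: the same weighted sum of Dirac measures $\sum_n 2^{-n}\delta_{p_n}$ over a countable dense set, with the same verification of total mass and full support. The difference is in the converse: the paper simply cites a reference (Marczewski--Sikorski) for ``full-support probability measure $\Rightarrow$ separable'', whereas you give a direct argument --- any pairwise disjoint family of nonempty open sets must be countable because each carries positive measure and only finitely many sets can have measure above $1/k$ for each $k$, and then maximal $\tfrac1n$-separated nets (which exist by Zorn's lemma and are countable since the balls of radius $\tfrac{1}{2n}$ around their points are disjoint) union to a countable dense set. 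This is sound; the only cosmetic slip is that your dyadic ranges $[1/(n+1),1/n)$ omit the value $1$, but a ball of measure $1$ forces all other disjoint balls to be null, which full support forbids, so the case is vacuous (and in any event an uncountable sum of positive numbers is infinite without any pigeonholing). What your route buys is independence from the cited literature; what the paper's route buys is brevity, since only the ``separable $\Rightarrow$ measure'' direction is actually used in the construction of the averaged volume functions.
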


\begin{proof}
 Suppose $(X,d)$ is separable. Then it contains a countable dense subset $D= \{p_n \mid n \in \nat\}$. Denote by $\delta_n$ the Dirac delta measure centered at $p_n$, and define
 \[
\mu := \sum_{n \in \nat} 2^{-n} \delta_n.
 \]
 The measure $\mes$ has the desired properties since (i) $\mes(X) = \sum_n 2^{-n}=1$ and (ii)  for all open sets $A$, $A \cap D \neq \emptyset$ by denseness and hence $\mu(A) > 0$. The proof of the converse can be found in \cite[p.\ 134]{MaSi}.
 
 Finally, secound countability implies separability, and on metric spaces the two notions in fact are equivalent.
\end{proof}

Let $(X,d,\tll,\leq,\tau)$ be a Lorentzian pre-length space equipped with a Borel probability measure $\mu$ of full support. For $r \in (0,1)$ and $p \in X$, let
\begin{align*}
 \thck^\pm_r(p) &:= \left\{x \in X \mid d\left(x,I^\pm(p)\right) < r \right\}, \\
 \vol^\pm_r(p) &:= \mes\left(\thck^\pm_r(p)\right).
\end{align*}
We call $\thck^\pm_r(p)$ the $r$-thickening of $I^\pm(p)$, as depicted in Figure~\ref{figV}, and $\vol^\pm_r(p)$ its volume.

\begin{figure}%[ht]
 \begin{center}
  \begin{tikzpicture}[declare function={
    func(\x)= (\x<=-0.5) * (-\x-1)   +
     and(\x>-0.5, \x<=0.5) * (-sqrt(0.5-\x^2))     +
                (\x>0.5) * (\x-1);
  }]
   \begin{axis}[xmin=-2,xmax=2,ymin=-1,ymax=2,hide axis,axis equal image,y dir = reverse]
     \addplot[name path=B, domain=-2:2, fill=blue!40, draw=none]{abs(x)};
     \addplot[name path=A, domain=-2:2, samples=100, draw=none]{func(x)};
     \addplot[blue!20] fill between[of=A and B];
     \node at (axis cs: 0,1.25) {$I^-(p)$};
     \node at (axis cs: 1.4,0.85) {$r$};
     \node[style={rectangle, draw=black}] (thck) at (axis cs: -1.6,-0.3) {$\thck_r^-(p)$};
     \draw[->] (thck.south east) -- (axis cs: -0.75,0.25);
     \draw[->] (thck.south east) -- (axis cs: -0.8,1.4);
     \draw[<->] (axis cs:1, 1) -- (axis cs:1.5, 0.5);
     \filldraw[black] (axis cs: 0,0) circle (2pt) node[anchor=south east] {$p$};
    \end{axis}
   \end{tikzpicture}
  \end{center}
  \caption{The sets $I^-(p)$ (dark blue) and $\thck^-_r(p)$ (light and dark blue) for some point $p$ in Minkowski spacetime, with $d$ the Euclidean distance.}
  \label{figV}
\end{figure}
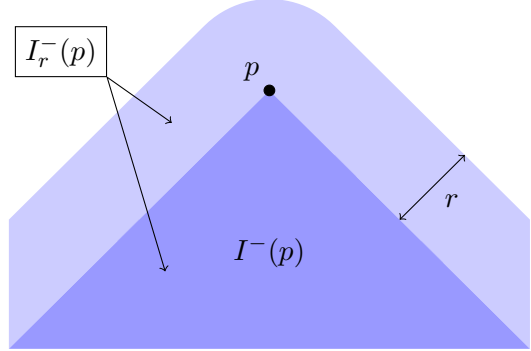

\begin{defn} \label{volfct}
 Let $(X,d,\tll,\leq,\tau)$ be a Lorentzian pre-length space equipped with a Borel probability measure $\mu$ of full support. The \emph{future} ($+$) and \emph{past} ($-$) \emph{averaged volume functions} of $\mu$ are defined by
 \[
  \tf^\pm(p) := \mp \int_0^1 \vol^\pm_r(p) \,dr, \qquad p \in X.
 \]
\end{defn}

Note that the integral exists for all points $p \in X$ because the function $r \mapsto \vol^\pm_r(p)$ is increasing and bounded.

\begin{rem}[Comparison to previous definitions of volume functions]\label{classvolfct}
The classical definition of a volume function by Geroch~\cite[Section 5]{Ger} is simpler and reads $\tf_\mathrm{cl}^\pm(p) = \mp \mes\left(I^\pm(p)\right)$. However, it was discovered by Dieckmann~\cite[Def.\ 1.2]{Die} that in order to show continuity, one has to require that $\mu$ also satisfies the property (iii) $\mu(\partial I^\pm (p)) = 0$. On a smooth spacetime $(M,g)$ one can always construct such an admissible measure $\mes$ from the volume form using a partition of unity and utilize that $\partial I^\pm(p)$ is a hypersurface having zero Lebesgue measure in charts \cite[Prop.\ 1.1]{Die}. Since we do not have a manifold structure and the Lebesgue measure at our disposal, we instead integrate over $r$ to ``average out'' discontinuities, hence the addition of ``averaged'' in the naming of volume functions in Definition~\ref{volfct}. This averaging procedure is inspired by the work of Hawking~\cite{Haw} on stable causality and time functions.
\end{rem}

\begin{rem}[$\mes$-dependence]
It is clear that the above constructions of $I^\pm_r$, $V^\pm_r$, and $t^\pm$ depend crucially on the choice of $d$ and $\mes$. We will, however, see that the \emph{existence\/} of (generalized) time functions is at this point independent of the particular choice of $\mes$ and also of $d$ (as long as the metric is second countable). More precisely, whether the averaged volume functions $t^\pm$ are indeed (continuous) time functions depends only on the causal structure of $X$.
\end{rem}

We end this subsection by proving that $t^\pm$ are \emph{isotone} or \emph{causal} functions (see \cite[Def.\ 1.17]{Min3}), a property that is weaker than being a time function.
 
\begin{lem} \label{nondec}
 Let $X$ be a Lorentzian pre-length space as in Definition~\ref{volfct} and let $p,q \in X$. If $p \leq q$, then $\vol^-_r(p) \leq \vol^-_r(q)$ and $\vol^+_r(p) \geq \vol^+_r(q)$. In particular,
 \[ p \leq q \Longrightarrow t^\pm(p) \leq t^\pm(q). \]
\end{lem}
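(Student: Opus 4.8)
The plan is to reduce everything to a single monotonicity statement about the thickenings $\thck^\pm_r$ and then let the definitions do the rest. First I would fix $p \leq q$ and a radius $r \in (0,1)$, and argue that $\thck^-_r(p) \subseteq \thck^-_r(q)$. Unwinding the definition, a point $x$ lies in $\thck^-_r(p)$ iff $d(x, I^-(p)) < r$, so it suffices to show $I^-(p) \subseteq I^-(q)$; then the distance to the larger set can only be smaller, and the strict inequality $< r$ is preserved. The inclusion $I^-(p) \subseteq I^-(q)$ is exactly the push-up property: if $x \tll p$ and $p \leq q$, then $x \tll q$ by Lemma~\ref{push-up}. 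Symmetrically, for the future thickenings one shows $I^+(q) \subseteq I^+(p)$ (again push-up: $q \tll y$ together with $p \leq q$ gives $p \tll y$), hence $\thck^+_r(q) \subseteq \thck^+_r(p)$.

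Next I would apply monotonicity of the measure $\mes$ to the two inclusions. From $\thck^-_r(p) \subseteq \thck^-_r(q)$ we get $\vol^-_r(p) = \mes(\thck^-_r(p)) \leq \mes(\thck^-_r(q)) = \vol^-_r(q)$, and from $\thck^+_r(q) \subseteq \thck^+_r(p)$ we get $\vol^+_r(p) \geq \vol^+_r(q)$, which are the two pointwise-in-$r$ claims of the lemma. A minor point to note is the edge case $I^\pm(p) = \emptyset$: then $d(x,\emptyset) = \infty \geq r$ for all $x$, so $\thck^\pm_r(p) = \emptyset$ and its volume is $0$, which is still consistent with all the inequalities, so nothing special needs to be done.

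Finally, integrating the inequality $\vol^-_r(p) \leq \vol^-_r(q)$ over $r \in (0,1)$ and using that $\tf^-(p) = +\int_0^1 \vol^-_r(p)\,dr$ (the sign $\mp$ in Definition~\ref{volfct} gives a plus for the past function) yields $\tf^-(p) \leq \tf^-(q)$; integrating $\vol^+_r(p) \geq \vol^+_r(q)$ and noting that $\tf^+ = -\int_0^1 \vol^+_r\,dr$ flips the inequality back, so again $\tf^+(p) \leq \tf^+(q)$. Hence $p \leq q$ implies $\tf^\pm(p) \leq \tf^\pm(q)$.

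There is really no serious obstacle here: the only content is the push-up property, already available as Lemma~\ref{push-up}, plus monotonicity of a measure and of the integral. The one thing to be careful about is bookkeeping of the signs and the $\pm$/$\mp$ conventions in Definition~\ref{volfct}, so that the direction of each inequality is reported correctly for the past ($-$) versus the future ($+$) volume function; everything else is a routine application of the definitions.
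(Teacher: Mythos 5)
Your proposal is correct and follows essentially the same route as the paper: push-up gives $I^-(p) \subseteq I^-(q)$ (and dually $I^+(q) \subseteq I^+(p)$), hence the inclusions of the thickenings, and then monotonicity of $\mu$ and of the integral, with the sign convention of Definition~\ref{volfct} handled as you describe. The extra remark about the empty-future/past edge case is a harmless addition not needed in the paper's argument.
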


\begin{proof}
 By the push-up Lemma~\ref{push-up}, $p \leq q$ implies $I^-(p) \subseteq I^-(q)$, and hence $\thck^-_r(p) \subseteq \thck^-_r(q)$ for all $r \in (0,1)$. The first conclusion thus follows from the monotonicity of $\mu$, and the second one from the monoticity of the integral.
\end{proof}

\subsection{Averaged volume functions as generalized time functions}

Finally, in order to show that $t^\pm$ are generalized time functions we need to apply the standard distinguishing causality condition, or at least our own weaker version thereof.

\begin{defn}[{\cite[p.\ 486]{KrPe}}] \label{disting}
 Let $(X,d,\tll,\leq,\tau)$ be a Lorentzian pre-length space. We say that
 \begin{enumerate}
  \item $X$ is \emph{past-distinguishing} if \[I^-(p) = I^-(q) \Longrightarrow p=q, \qquad  p,q \in X,\]
  \item $X$ is \emph{future-distinguishing} if \[I^+(p) = I^+(q) \Longrightarrow p=q, \qquad p,q \in X.\]
 \end{enumerate}
We call $X$ \emph{distinguishing} if it is both past- and future-distinguishing.
\end{defn}
 
\begin{defn}
We say that $X$ is \emph{causally} (\emph{past}- or \emph{future}-) \emph{distinguishing} if the conditions of Definition~\ref{disting} are only required to hold for all $p,q \in X$ with $p \leq q$.
\end{defn}

Furthermore, we assume that the Lorentzian pre-length spaces are approximating (see Section~\ref{ssec:approx}). This avoids the pathological situation where the future or past of a point could be ``far away'' from the point itself, or empty.

Having equipped our spaces with sufficient causal and topological conditions, we are in a position to establish the well-known classical result about generalized time functions \cite[Prop.\ 2.2]{Die} also for separable Lorentzian pre-length spaces (for which averaged volume functions $t^\pm$ from Definition~\ref{volfct} are well-defined by Proposition~\ref{sep}).

\begin{prop} \label{dist}
 Let $(X,d,\tll,\leq,\tau)$ be a past-(future-)approximating Lo\-rent\-zian pre-length space equipped with a Borel probability measure of full support. Then $X$ is causally past-(future-)distinguishing if and only if $\tf^-$ ($\tf^+$) is a generalized time function i.e., for all $p,q \in X$
 \[
  p < q \Longrightarrow t^\mp(p) < t^\mp (q).
 \]
\end{prop}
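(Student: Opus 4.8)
The statement is an ``if and only if'' between a causality condition (causally past-distinguishing) and the fact that $t^-$ separates strictly causally related points. I would prove each direction by contraposition; the only-if direction is the routine one, and the if direction is the main obstacle.

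\textbf{Direction 1: causally past-distinguishing $\Rightarrow$ $t^-$ is a generalized time function.} By Lemma~\ref{nondec}, $p \leq q$ already gives $t^-(p) \leq t^-(q)$, so it suffices to rule out equality when $p < q$. Suppose $p < q$ but $t^-(p) = t^-(q)$. Since $p \leq q$ gives $I^-(p) \subseteq I^-(q)$ and hence $\thck^-_r(p) \subseteq \thck^-_r(q)$, the integrand satisfies $\vol^-_r(p) \leq \vol^-_r(q)$ pointwise in $r$; equality of the integrals over $(0,1)$ therefore forces $\vol^-_r(p) = \vol^-_r(q)$ for almost every $r$, and then for \emph{all} $r \in (0,1)$ by monotonicity of $r \mapsto \vol^\pm_r(\cdot)$ (left/right continuity at a given $r$ can be obtained from the fact that $\thck^-_r(p) = \bigcup_{r' < r} \thck^-_{r'}(p)$ and $\bigcap_{r' > r}\thck^-_{r'}(p) = \{x : d(x,I^-(p)) \leq r\}$, the latter having the same measure as $\thck^-_r(p)$ for a.e.\ $r$, which is enough to pin down equality for all $r$). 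Now $\thck^-_r(p) \subseteq \thck^-_r(q)$ together with $\mu(\thck^-_r(p)) = \mu(\thck^-_r(q))$ and full support of $\mu$ gives $\thck^-_r(q) \subseteq \overline{\thck^-_r(p)}$; letting $r \to 0$ and using that $I^-(q)$ is open I would conclude $I^-(q) \subseteq \overline{I^-(p)}$, and symmetrically the reverse inclusion is automatic, so $\overline{I^-(p)} = \overline{I^-(q)}$. Since $X$ is past-approximating, $q \in J^-(q) \subseteq \overline{I^-(q)} = \overline{I^-(p)}$, and a short argument (approximate $q$ from the past inside $I^-(q) = $ a set with the same closure as $I^-(p)$, then use openness of $I^-$ and push-up) should upgrade this to $I^-(p) = I^-(q)$. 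Then causal past-distinguishing (applicable since $p \leq q$) yields $p = q$, contradicting $p < q$.

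\textbf{Direction 2: $t^-$ is a generalized time function $\Rightarrow$ causally past-distinguishing.} Again contrapositive: suppose there are $p \neq q$ with $p \leq q$ and $I^-(p) = I^-(q)$. Then $\thck^-_r(p) = \thck^-_r(q)$ for every $r$, so $\vol^-_r(p) = \vol^-_r(q)$ and hence $t^-(p) = t^-(q)$. But $p \leq q$ with $p \neq q$ means $p < q$, so $t^-$ fails to be a generalized time function. This direction is essentially immediate.

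\textbf{Main obstacle.} The delicate point is the passage, in Direction 1, from equality of the \emph{averaged} volumes back to equality (or near-equality) of the sets $I^-(p)$ and $I^-(q)$ themselves --- precisely the difficulty that the averaging over $r$ was introduced to finesse. The key technical facts to nail down are: (a) that $r \mapsto \vol^-_r(p)$ is monotone with only countably many jumps, so a.e.\ equality plus monotonicity gives equality at \emph{all} $r$ where both are continuous, and a limiting argument handles the rest; and (b) that $\mu(\thck^-_r(p)) = \mu(\thck^-_r(q))$ with $\thck^-_r(p) \subseteq \thck^-_r(q)$ and $\supp \mu = X$ forces $\thck^-_r(q) \subseteq \overline{\thck^-_r(p)}$, from which $\overline{I^-(p)} = \overline{I^-(q)}$ follows by intersecting over $r \downarrow 0$. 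Combined with the past-approximating hypothesis (so that $p$ and $q$ each lie in the closure of their own past), this should be enough to force $I^-(p) = I^-(q)$ and invoke the distinguishing hypothesis. The past/future symmetry makes the $t^+$ statement identical after reversing all inequalities.
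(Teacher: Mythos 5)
Your proof is correct, but the nontrivial direction is argued quite differently from the paper. The paper proves ``distinguishing $\Rightarrow$ generalized time function'' \emph{directly}: from causal past-distinguishing and $p<q$ it gets $I^-(p)\subsetneq I^-(q)$, shows $q\notin\overline{I^-(p)}$ (via openness of $I^+(x)$ for $x\in I^-(q)$), then uses normality of the metric space to find an open neighborhood $U\ni q$ with $r_0:=d(U,I^-(p))>0$, and concludes with the quantitative bound $t^-(q)-t^-(p)\geq\int_0^{r_0}\mu\bigl(U\cap I^-(q)\bigr)\,dr>0$, where $\mu(U\cap I^-(q))>0$ by full support and the past-approximating property. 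You instead prove the contrapositive, recovering the chronological pasts from the averaged volumes: $t^-(p)=t^-(q)$ plus monotonicity gives $V^-_r(p)=V^-_r(q)$ a.e., left-continuity of $r\mapsto V^-_r$ (from $\thck^-_r=\bigcup_{r'<r}\thck^-_{r'}$) upgrades this to all $r$, full support forces the open set $\thck^-_r(q)\setminus\overline{\thck^-_r(p)}$ to be empty, and intersecting over $r\downarrow 0$ yields $\overline{I^-(p)}=\overline{I^-(q)}$, which the approximating property and openness of $\tll$ upgrade to $I^-(p)=I^-(q)$. Both routes use exactly the same hypotheses; yours proves a slightly stronger intermediate fact (equality of averaged volumes at causally related points forces equality of chronological pasts), at the cost of the measure-theoretic regularity discussion that the paper's direct estimate avoids. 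One small repair: in your final ``short argument'' you should approximate $q$ by points of $I^-(p)$ (legitimate since you have shown $q\in\overline{I^-(p)}$), not ``from the past inside $I^-(q)$'' — for $x\in I^-(q)$ the open set $I^+(x)$ contains $q$, hence meets $I^-(p)$, and transitivity gives $x\ll p$; as stated, approximants in $I^-(q)$ are not yet known to lie in $I^-(p)$, which is what you are trying to prove. The easy converse direction is identical to the paper's.
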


\begin{proof}
 We prove the past version. If $X$ is causally past-distinguishing, then for all $p,q \in X$ with $p<q$,
 \[
  I^-(p) \subsetneq I^-(q).
 \]
 Clearly, $q \not\in I^-(p)$, because $q \in I^-(p)$ would imply $I^-(q) \subseteq I^-(p) \subsetneq I^-(q)$, a contradiction. We show that also $q \not\in \partial I^-(p)$: For any $x \in I^{-}(q)$ the future $I^+(x)$ is open and thus contains an open set $V$ around $q$. If $q \in \partial I^-(p)$ then there is a point $y \in V \cap I^-(p)$, thus $x \tll y \tll p$ and by transitivity $x \in I^-(p)$. Therefore $I^-(q) \subseteq I^-(p)$, a contradiction.

 Since any metric space is normal, the disjoint closed sets $\{q\}$ and $\overline{I^-(p)}$ can be separated by disjoint open sets. In particular, there exists an open neighborhood $U$ around $q$ such that $r_0 := d(U,I^-(p)) >0$. The intersection $U \cap I^-(q)$ is open, and by the past-approximating property of $X$ at $q$ (see Lemma~\ref{lemapprox}) also non-empty. Thus, since $U \cap I^-(q) \subseteq I_r^-(q) \setminus I_r^-(p)$ for all $r<r_0$,
 \begin{align*}
  \tf^-(q)-\tf^-(p) &= \int_0^1 \mes\left(\thck_r^-(q) \setminus \thck_r^-(p)\right) dr \\
  &\geq \int_0^{\min(r_0,1)} \mes\left(U \cap I^-(q)\right) dr > 0.
 \end{align*}

 To prove the converse, assume that for all $p<q$ we have $\tf^-(q)-\tf^-(p) > 0$. In order to show that $X$ is causally past-distinguishing furthermore assume that $I^-(p) = I^-(q)$ and $p \leq q$. Then $I^-_r(p) = I^-_r(q)$ for all $r \in [0,1]$, and therefore $t^-(p)=t^-(q)$. Hence $p=q$.
\end{proof}

Example~\ref{example1} is a Lorentzian pre-length space that is causally distinguishing but not distinguishing (any two points on the same level set of $t$ have the same past, but are not causally related to one another). One can show that under certain conditions the two notions agree, in particular, for smooth spacetimes.

\begin{prop}\label{distingprop}
 Let $(X,d,\tll,\leq,\tau)$ be a locally compact, causally path-connected, locally weakly causally closed, (past-/future-)approximating Lorent\-zian pre-length space. If $X$ is causally (past-/future-)distinguishing, then it is also (past-/future-)distinguishing.
\end{prop}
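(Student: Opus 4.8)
The plan is to prove the past version (the future version being symmetric). Assume $X$ is causally past-distinguishing, and suppose for contradiction that it is not past-distinguishing, i.e., there exist distinct points $p \neq q$ with $I^-(p) = I^-(q)$. Since $X$ is causally past-distinguishing, whenever two points are causally related and share the same past they must coincide; hence from $I^-(p) = I^-(q)$ and $p \neq q$ we can conclude that $p$ and $q$ are \emph{not} causally related, i.e., $p \not\leq q$ and $q \not\leq p$. The strategy is then to use the past-approximating property together with causal path-connectedness and local weak causal closedness to build a causal curve connecting (a slight perturbation of) $p$ and $q$, deriving the required contradiction.

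The key steps, in order, would be as follows. First, use Lemma~\ref{lemapprox} (past-approximating) to pick a sequence $p_n^- \in I^-(p) = I^-(q)$ with $p_n^- \to p$. For each $n$ we have $p_n^- \tll p$ and $p_n^- \tll q$, so by causal path-connectedness there is a future-directed timelike (hence causal) curve $\gamma_n$ from $p_n^-$ to $q$. Second, I would set up a limit-curve argument: reparametrize the $\gamma_n$ by $d$-arclength (or a fixed interval with uniformly bounded Lipschitz constants, using local weak causal closedness and local compactness to control the relevant lengths near $p$ and $q$), and apply the Limit Curve Theorem~\ref{thm37} — the curves accumulate at $q$, and local compactness furnishes a compact set containing the relevant initial segments. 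Third, extract a limit causal curve; since $p_n^- \to p$ and the endpoints are all $q$, one expects the limit curve to connect $p$ to $q$, giving $p \leq q$. Combined with $I^-(p) = I^-(q)$ and causal past-distinction, this forces $p = q$, the desired contradiction. The converse direction is trivial: past-distinguishing is a formally stronger condition than causally past-distinguishing (it requires the implication for \emph{all} $p,q$, not just causally related ones), so nothing needs to be proved there.

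The main obstacle I anticipate is the limit-curve step: one must ensure the limiting curve is \emph{non-constant} (Theorem~\ref{thm37} only concludes causality of the limit in that case) and that it genuinely runs from $p$ to $q$ rather than degenerating. If $p$ and $q$ are close in $d$, the arclength-parametrized curves $\gamma_n$ could have length shrinking to something small or even to $0$; one must rule this out — e.g., by observing that if infinitely many $\gamma_n$ had $d$-length below a fixed threshold, a limit-curve extraction would still yield either a non-constant causal curve from $p$ to $q$ (done) or force $p_n^- \to q$, contradicting $p_n^- \to p \neq q$. A secondary subtlety is the uniform Lipschitz/length bound needed to invoke Theorem~\ref{thm37}: here local compactness combined with $d$-compatibility (which is \emph{not} assumed in Proposition~\ref{distingprop}'s hypotheses) is unavailable, so one must instead either truncate the curves at the first exit from a fixed relatively compact weakly causally closed neighborhood of $p$, or reparametrize to a common compact parameter interval and argue that the truncated curves still connect a point near $p$ to a boundary point, iterating if necessary — this bookkeeping is routine but is where the care lies.
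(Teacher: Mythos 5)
Your plan shares the paper's opening moves (take $p\neq q$ with $I^-(p)=I^-(q)$, approximate $p$ from the past by $p_n^-\in I^-(p)=I^-(q)$, connect $p_n^-$ to $q$ by causal curves), but the core of your argument --- extracting a global limit causal curve from $p$ to $q$ so as to conclude $p\leq q$ and then invoke causal past-distinction on the pair $(p,q)$ --- cannot be carried out under the stated hypotheses, and this is a genuine gap rather than routine bookkeeping. Theorem~\ref{thm37} needs either a compact set containing the \emph{entire} curves $\gamma_n$ or properness of $d$, plus uniform Lipschitz bounds; here $X$ is only locally compact, $d$-compatibility is not assumed, and the $\gamma_n$ may have unbounded $d$-length and leave every compact set, so no subsequence need converge to a curve reaching $q$. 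Your proposed fallback --- truncate at the first exit from a relatively compact weakly causally closed neighborhood of $p$ and ``iterate if necessary'' --- only ever produces a causal relation $p\leq r$ to some point $r$ on the boundary of a small ball around $p$; there is no mechanism to chain these local pieces all the way to $q$, and you never say how a relation $p\leq r$ with $r\neq q$ yields the contradiction. So as written the proof does not close.

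The missing idea (and the paper's actual route) is that you never need to reach $q$ at all, nor any limit curve theorem: fix $\delta>0$ with $\overline{B_\delta(p)}$ compact, contained in a weakly causally closed neighborhood, and not containing $q$; let $r_n\in\partial B_{\delta/2}(p)$ be the first exit points of $\gamma_n$, and pass to a subsequence $r_n\to r\in\partial B_{\delta/2}(p)$. Closedness of $\leq_U$ applied to the segments from $p_n^-$ to $r_n$ gives $p\leq r$ directly, whence $I^-(p)\subseteq I^-(r)$ by push-up; and openness of $\tll$ together with $r_n\leq q$ gives $I^-(r)\subseteq I^-(q)=I^-(p)$. Thus $I^-(p)=I^-(r)$ with $p\leq r$, and causal past-distinction applied to the pair $(p,r)$ (not $(p,q)$) forces $p=r$, contradicting $d(p,r)=\delta/2$. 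In short: the contradiction comes from a nearby boundary point whose past is sandwiched between those of $p$ and $q$, which is exactly what lets the argument go through with only local compactness and weak causal closedness, avoiding the global limit-curve machinery your proposal depends on.
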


\begin{proof}
 Suppose $X$ is causally past-distinguishing but not past-distinguishing (the future case is analogous). Then there exist two distinct points $p,q \in X$ such that $I^-(p) = I^-(q)$. By Lemma~\ref{lemapprox} there is a sequence $(p_n^-)_n$ that approximates $p$ from the past. Because $p_n^- \in I^-(p) = I^-(q)$ and $X$ is causally path-connected, there exists a sequence of causal curves $(\gamma_n)_n$ connecting $p_n^-$ and $q$. Let $\delta$ be small enough so that $\ball_\delta(p)$ is compact and does not contain $q$. Consider the sequence of points $(r_n)_n$ where $\gamma_n$ first intersects $\partial B_{\delta/2}(p)$. By compactness, we may assume it converges to a point $r \in \partial B_{\delta/2}(p)$. Without loss of generality suppose that $\ball_\delta(p)$ is contained in a weakly causally closed neighborhood, and so furthermore $p \leq r$ and $I^-(p) \subseteq I^-(r)$ by the push-up Lemma~\ref{push-up}. Moreover, by openness of $\tll$ we have $I^-(r) \subseteq \bigcup_n I^-(r_n)$, and since $r_n \tll q$, by transitivity we have $I^-(r_n)\subseteq I^-(q)$. Hence $I^-(p) \subseteq I^-(r) \subseteq I^-(q)$, which together with our initial assumption $I^-(p)=I^-(q)$ implies that $I^-(p) = I^-(r)$. Since $X$ is causally past-distinguishing we thus know that $p=r$, which contradicts the fact that $d(p,r)=\frac{\delta}{2}>0$. Thus $X$ must indeed also be past-distinguishing.
\end{proof}

\subsection{Continuity of averaged volume functions}

We conclude this section by generalizing the equivalence between causal continuity and the continuity of volume functions (shown in the smooth case by Dieckmann~\cite[Proposition 2.5]{Die}) to establish when $t^\pm$ of Definition~\ref{volfct} are time functions.

\begin{defn}
 Let $(X,d,\tll,\leq,\tau)$ be a Lorentzian pre-length space. If for all $p,q \in X$,
 \begin{enumerate}
  \item $I^+(p) \subseteq I^+(q)$ implies $I^-(q) \subseteq I^-(p)$, we say that $X$ is \emph{past reflecting}.
  \item $I^-(p) \subseteq I^-(q)$ implies $I^+(q) \subseteq I^+(p)$, we say that $X$ is \emph{future reflecting}.
 \end{enumerate}
 If $X$ is both past and future reflecting, we say that it is \emph{reflecting}.
\end{defn}

We define causal continuity as in the smooth case. Ak\'e et al.\ showed that this notion of causal continuity implies strong causality \cite[Proposition 3.15]{ACS}, as it does in the smooth case. However, it turns out that the optimal condition for our Theorem \ref{ccthm} is a weaker version thereof. This weaker version is not (trivially) sufficient for the results of Ak\'e et al.\ to still hold. In any case, both definitions are equivalent when the conditions of Proposition~\ref{distingprop} are met (in particular, in the smooth case).

\begin{defn}[{\cite[Definition 3.9]{ACS}}]
 A Lorentzian pre-length space is called \emph{causally continuous} if it is reflecting and distinguishing.
\end{defn}

\begin{defn}
 A Lorentzian pre-length space is called \emph{weakly causally continuous} if it is reflecting and causally distinguishing.
\end{defn}

Using this terminology we can state the main result of this section (see Theorem~\ref{thm2} for the corresponding Lorentzian length space version).

\begin{thm}\label{ccthm}
 Let $X$ be an approximating Lorentzian pre-length space equipped with a Borel probability measure of full support. Then $X$ is weakly causally continuous if and only if the averaged volume functions $t^\pm$ are time functions.
\end{thm}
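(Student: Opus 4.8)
The plan is to separate out continuity as the only real issue, rewrite $\tf^\pm$ in a topologically transparent form, and then match continuity with reflectingness.

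\emph{Reduction to continuity.} By Proposition~\ref{dist}, $\tf^-$ (resp.\ $\tf^+$) is a generalized time function if and only if $X$ is causally past- (resp.\ future-)distinguishing, so $\tf^\pm$ are both generalized time functions if and only if $X$ is causally distinguishing. Since a time function is precisely a continuous generalized time function, and weak causal continuity means ``reflecting and causally distinguishing'', it remains to prove: \emph{if $X$ is causally distinguishing, then $\tf^\pm$ are both continuous if and only if $X$ is reflecting}. The hypothesis ``causally distinguishing'' is legitimately available on both sides, as it follows from $\tf^\pm$ being generalized time functions and is also part of weak causal continuity.

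\emph{Fubini rewrite and automatic semicontinuity.} Since $\thck^\pm_r(p)=\{x:d(x,I^\pm(p))<r\}$ and $\mu(X)=1$, Tonelli's theorem gives $\tf^-(p)=\int_X(1-d(x,I^-(p)))^+\,d\mu(x)$ and $\tf^+(p)=-\int_X(1-d(x,I^+(p)))^+\,d\mu(x)$, where $a^+:=\max(a,0)$ and the distances are finite because $X$ is approximating ($I^\pm(p)\neq\emptyset$). For fixed $x$ the integrand is $1$-Lipschitz in $x$, lies in $[0,1]$, and is monotone under inclusion of $I^\pm(p)$. Because $\tll$ is open and transitive, $I^-$ is inner continuous (if $y\tll p$ and $p_n\to p$ then $p\in I^+(y)$, which is open, so $y\tll p_n$ eventually); hence $p\mapsto d(x,I^-(p))$ is upper semicontinuous and, by Fatou's lemma, $\tf^-$ is lower semicontinuous on $X$, and symmetrically $\tf^+$ is upper semicontinuous, irrespective of any causality condition. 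Thus it remains to show that $\tf^-$ is additionally upper semicontinuous, and $\tf^+$ additionally lower semicontinuous, precisely when $X$ is reflecting.

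\emph{The heart: semicontinuity $\Leftrightarrow$ reflecting.} Using the push-up Lemma~\ref{push-up} and Lemma~\ref{lemapprox} one identifies the outer limit of $I^-$ at $p$ with $\bigcap_{q\gg p}\overline{I^-(q)}$, so $I^-$ is outer continuous at $p$ iff $\bigcap_{q\gg p}\overline{I^-(q)}=\overline{I^-(p)}$. The implication ``$X$ past reflecting $\Rightarrow$ $I^-$ outer continuous everywhere'' is then direct: $w\in\overline{I^-(q)}$ forces $I^-(w)\subseteq I^-(q)$ (openness and transitivity of $\tll$), so a $w$ in the outer limit satisfies $I^-(w)\subseteq\bigcap_{q\gg p}I^-(q)$, and each $x$ in that intersection has $I^+(p)\subseteq I^+(x)$, whence past reflecting gives $I^-(x)\subseteq I^-(p)$ and $x\in\overline{I^-(x)}\subseteq\overline{I^-(p)}$, so that $w\in\overline{I^-(w)}\subseteq\overline{I^-(p)}$. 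Outer continuity of $I^-$ prevents the sets $I^-(p_n)$ from accumulating outside $\overline{I^-(p)}$ as $p_n\to p$, and feeding this into a reverse-Fatou/dominated-convergence estimate for $\int_X(1-d(x,I^-(\cdot)))^+\,d\mu$ — using second countability of $(X,d)$ to control the limits of the continuous integrands — yields upper semicontinuity of $\tf^-$. Conversely, if $X$ fails to be past reflecting, one uses the full support of $\mu$ to produce points $q_n\to p$ along which the integrand, on a fixed ball of positive measure, is bounded away from its value at $p$, so $\tf^-$ is not upper semicontinuous; here causal distinguishing is what turns the order-theoretic failure into a failure of outer continuity of $I^-$ at a genuine exterior point. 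The statements for $\tf^+$ and future reflecting (with $I^+$ and $q\ll p$) are symmetric.

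\emph{Conclusion and main obstacle.} Combining: assuming $X$ causally distinguishing, $\tf^\pm$ are both continuous iff $X$ is both past and future reflecting, i.e.\ reflecting, i.e.\ $X$ is weakly causally continuous, which is the claim. The main obstacle is the middle step: the exact dictionary between the purely order-theoretic reflecting conditions and topological outer continuity of $I^\pm$ (the delicate point being the passage between an open past $I^-$ and its closure $\overline{I^-}$, handled via causal distinguishing), together with running the semicontinuity of $\int_X(1-d(x,I^\pm(\cdot)))^+\,d\mu$ through with no properness assumption on $(X,d)$ — second countability and the full support of $\mu$ must here do the work that compactness of closed balls does in Dieckmann's smooth argument. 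The Fubini rewrite and the automatic semicontinuity are routine, and the reduction to continuity is immediate from Proposition~\ref{dist}.
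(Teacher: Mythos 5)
Your skeleton matches the paper's: reduce via Proposition~\ref{dist} to the claim ``reflecting $\Leftrightarrow$ continuity of $t^\pm$'' (the paper's Lemma~\ref{refl}), and the Fubini rewrite $t^-(p)=\int_X\bigl(1-d(x,I^-(p))\bigr)^+\,d\mu(x)$ together with the Fatou argument giving lower semicontinuity of $t^-$ (upper for $t^+$) for free is correct and a genuinely nice simplification of the easy half. The problem is the step you yourself label the main obstacle: ``past reflecting $\Rightarrow$ upper semicontinuity of $t^-$''. Reverse Fatou requires, for $\mu$-a.e.\ $x$ and every $p_n\to p$, that $\liminf_n d(x,I^-(p_n))\geq \min\{d(x,I^-(p)),1\}$, which (after reducing to future-approximating sequences) amounts to $\sup_{q\gg p}d(x,I^-(q))=d(x,I^-(p))$ a.e. This does \emph{not} follow from the Kuratowski-type outer continuity $\bigcap_{q\gg p}\overline{I^-(q)}\subseteq\overline{I^-(p)}$ that you correctly derive from the reflecting and approximating properties: in a non-proper metric space the near-minimizers $z_q\in I^-(q)$ with $d(x,z_q)$ close to $d(x,I^-(q))$ need not subconverge, so the distance functions can fail to converge to $d(x,I^-(p))$ even though the closed pasts intersect down correctly. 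Saying that ``second countability and the full support of $\mu$ must do the work of properness'' names the missing ingredient but supplies no mechanism; note also that causal distinguishing, which you invoke in this part, plays no role in the paper's Lemma~\ref{refl}, so it cannot be what closes this gap.

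The paper avoids pointwise limits of distance functions altogether: it argues at the level of measures of the thickenings, combining monotonicity (Lemma~\ref{nondec}) with continuity of $\mu$ along the monotone families generated by past- and future-approximating sequences, and -- crucially -- Lemma~\ref{sameae}, which shows $\mu\bigl(\partial \thck^-_r(q)\bigr)=0$ for all but countably many $r$; this countability-in-$r$ statement is the entire point of averaging over $r$ and is what replaces Dieckmann's admissible-measure condition, after which the a.e.-in-$r$ continuity of $r\mapsto V^-_r$ at $q$ is fed into Theorem~\ref{jostthm}. Your proposal contains no substitute for Lemma~\ref{sameae} or for this measure-level argument, so the implication ``reflecting $\Rightarrow$ continuity'' is not established; the converse direction (failure of reflecting $\Rightarrow$ discontinuity) is also only sketched, though there your outline is close to the paper's construction of the positive-measure set $B=I(\tilde p,p)\setminus\overline{\thck^-_{r_0}(q)}$. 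To make your route work you would have to actually prove the a.e.\ convergence of the integrands (which is where properness or some compactness normally enters), or else fall back on the measure-theoretic argument of Lemma~\ref{refl}.
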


Note that also in this section we combine a topological condition (second countable) and a causal condition ((weakly) causally continuous) to characterize the existence of volume time functions.

\begin{proof}
 By Proposition~\ref{dist} the averaged volume functions are generalized time functions if and only if $X$ is causally distinguishing. In Lemma~\ref{refl} below we show that their continuity is characterized by $X$ being reflecting.
\end{proof}
 
First we show that property (iii) discussed in Remark~\ref{classvolfct} holds for almost all thickenings of $\thck^\pm(p)$.
 
\begin{lem} \label{sameae}
 Let $(X,d,\tll,\leq,\tau)$ be a Lorentzian pre-length space equipped with a Borel probability measure $\mes$ of full support. For $p \in X$ consider the functions $r \mapsto \vol^\pm_r(p) = \mu(\thck^\pm_r(p))$ as defined in Section~\ref{ssec:avgvolfct} and
 \[
  r \mapsto \overline{\vol^\pm_r}(p) := \mes\left(\overline{\thck^\pm_r(p)}\right).
 \]
 Then the sets
 \[
  S^\pm(p) := \left\{ r \in (0,1) \mid \overline{\vol^\pm_r}(p) \neq \vol^\pm_r(p) \right\}
 \]
 are countable in $(0,1)$ for all $p \in X$.
\end{lem}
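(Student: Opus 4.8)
The plan is to exploit monotonicity of the maps $r \mapsto \vol^\pm_r(p)$ and $r \mapsto \overline{\vol^\pm_r}(p)$ together with a ``nested annulus'' argument that forces the two quantities to agree for all but countably many radii. First I would fix $p \in X$ and work with the $+$ case (the $-$ case being identical), dropping $p$ from the notation. Observe that both $r \mapsto V_r := \vol^+_r(p)$ and $r \mapsto \overline{V_r} := \overline{\vol^+_r}(p)$ are non-decreasing in $r$, since $\thck^+_r(p) = \{x : d(x, I^+(p)) < r\}$ is increasing in $r$ and taking closures preserves inclusions; moreover $V_r \le \overline{V_r} \le 1$ everywhere. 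Hence each of these monotone bounded functions has at most countably many discontinuity points, and it suffices to show that $\overline{V_r} = V_r$ at every point $r$ where \emph{both} functions are continuous, since the set where either one is discontinuous is already countable.

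The key geometric observation is the inclusion $\overline{\thck^+_r(p)} \subseteq \thck^+_{r'}(p)$ for any $r' > r$. Indeed, if $x \in \overline{\thck^+_r(p)}$ there is a sequence $x_n \to x$ with $d(x_n, I^+(p)) < r$, and by continuity of the distance-to-a-set function $d(\cdot, I^+(p))$ we get $d(x, I^+(p)) \le r < r'$, so $x \in \thck^+_{r'}(p)$. Combining this with the trivial inclusion $\thck^+_r(p) \subseteq \overline{\thck^+_r(p)}$, monotonicity of $\mes$ gives, for all $r' > r$,
\[
 V_r \;\le\; \overline{V_r} \;\le\; V_{r'}.
\]
Now suppose $r \in (0,1)$ is a point of continuity of the function $s \mapsto V_s$. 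Letting $r' \searrow r$ in the inequality $\overline{V_r} \le V_{r'}$ and using right-continuity of $V$ at $r$ yields $\overline{V_r} \le V_r$, and together with $V_r \le \overline{V_r}$ we conclude $\overline{V_r} = V_r$. Therefore $S^+(p) \subseteq \{ r \in (0,1) : V \text{ is discontinuous at } r\}$, which is countable because $V$ is monotone; hence $S^+(p)$ is countable. The argument for $S^-(p)$ is verbatim the same after replacing $I^+(p)$ by $I^-(p)$, and this completes the proof.

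The only mildly delicate point is the passage to the limit $r' \searrow r$: one needs that a monotone function is right-continuous at its continuity points, which is immediate, and that the set of discontinuities of a monotone function on an interval is countable, which is a standard fact. I do not expect any real obstacle here; in particular no assumption on the measure beyond being a finite Borel measure is used, and continuity of the map $x \mapsto d(x, A)$ for any nonempty set $A$ is elementary. If one prefers to avoid even invoking continuity points of $V$, one can argue directly: the inequality $V_r \le \overline{V_r} \le V_{r'}$ for all $r' > r$ shows that whenever $V$ is right-continuous at $r$ we have $\overline{V_r} = V_r$, so $S^+(p)$ is contained in the (countable) set of points where the monotone function $V$ fails to be right-continuous, namely its jump points.
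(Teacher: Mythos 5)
Your proof is correct, but it takes a genuinely different route from the paper's. The paper's argument is a packing count: since $\thck^\pm_r(p)$ is open, $\overline{\vol^\pm_r}(p)-\vol^\pm_r(p)=\mes\bigl(\partial\thck^\pm_r(p)\bigr)$, and the boundaries $\partial\thck^\pm_r(p)$ for distinct values of $r$ are pairwise disjoint (each lies in the level set $\{x\mid d(x,I^\pm(p))=r\}$); since $\mes(X)=1$, at most $n$ radii can have $\mes(\partial\thck^\pm_r(p))>1/n$, and the union over $n\in\nat$ gives countability. You instead use the nesting $\overline{\thck^\pm_r(p)}\subseteq\thck^\pm_{r'}(p)$ for $r'>r$ (via the $1$-Lipschitz distance-to-a-set function) to get $\vol^\pm_r(p)\le\overline{\vol^\pm_r}(p)\le\vol^\pm_{r'}(p)$, and conclude that $S^\pm(p)$ is contained in the set of right-discontinuity points of the bounded monotone function $r\mapsto\vol^\pm_r(p)$, which is countable by the standard fact about monotone functions. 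Both arguments ultimately rest on the same geometric observation (the closure of the $r$-thickening sits inside $\{d(\cdot,I^\pm(p))\le r\}$) and on finiteness of $\mes$; the paper's version is slightly more quantitative (at most $n$ radii with defect exceeding $1/n$), while yours avoids identifying the defect with the topological boundary and delegates the counting to Froda's theorem. One pedantic point: your appeal to continuity of $x\mapsto d(x,I^\pm(p))$ presupposes $I^\pm(p)\neq\emptyset$; if $I^\pm(p)=\emptyset$ all sets involved are empty and the claim is trivial, so the gap is harmless, but a one-line remark would make the proof airtight.
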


\begin{proof}
 Let $p \in X$. By definition and additivity of $\mes$ we have
 \begin{align*}
  \overline{\vol^\pm_r}(p) - \vol^\pm_r(p) &= \mes(\overline{\thck^\pm_r(p)})-\mes(\thck^\pm_r(p)) \\ &= \mes \left( \overline{\thck^\pm_r(p)} \setminus \thck^\pm_r(p) \right) = \mes(\partial \thck^\pm_r(p)).
 \end{align*}
 Hence we can rewrite $S^\pm(p)$ as
 \[
  S^\pm(p) =\left\{ r \in (0,1) \ \middle\vert \ \mes \left(\partial \thck^\pm_r(p)\right)  > 0 \right\}.
 \]
 To see that $S^\pm(p)$ is countable, consider the sets
 \[
  S_n^\pm(p) :=\left\{ r \in (0,1) \ \middle\vert \ \mes \left( \partial \thck^\pm_r(p) \right) > \frac{1}{n} \right\},
 \]
 for $n \in \nat$. We show that $|S_n^\pm(p)| \leq n$. Otherwise there would exist at least $n+1$ distinct $r_i \in S_n^\pm(p)$. Since all $\partial\thck^\pm_{r_i}(p)$ are disjoint, this would imply that
 \[
  \frac{n+1}{n} < \sum_{i=1}^{n+1} \mes(\partial\thck^\pm_{r_i}(p)) \leq \mes(X) = 1,
 \]
 a contradiction. Finally, because $S^\pm(p) = \bigcup_{n\in\nat} S_n^\pm(p)$, we deduce that the sets $S^\pm(p)$ are countable for any $p \in X$.
\end{proof}

In addition to Lemma~\ref{sameae}, the continuity of the averaged volume functions rests on the following general result which is based on Lebesgue's dominated convergence theorem (in \cite{Jost} formulated for $X=\real^d$ but true for all sequential spaces).

\begin{thm}[{\cite[Theorem 16.10]{Jost}}]\label{jostthm}
 Let $(X,d)$ be a metric space, $U \subseteq X$ and $y_0 \in U$. Consider a function $f \colon \real^n \times U \to \real \cup \{ \pm \infty\}$. Assume that
 \begin{enumerate}
   \item for every fixed $y \in U$ the function $x \mapsto f(x,y)$ is integrable,
   \item for almost all $x \in \real^n$ the function $y \mapsto f(x,y)$ is continuous at $y_0$, %and
   \item there exists an integrable function $F \colon \real^n \to \real \cup \{+\infty\}$ with the property that for every $y \in U$,
   \[
    |f(x,y)| \leq F(x)
   \]
   holds almost everywhere on $\real^n$.
  \end{enumerate}
  Then the function
  \[
   g(y) := \int_{\real^n} f(x,y) \, dx
  \]
  is continuous at the point $y_0$.
 \end{thm}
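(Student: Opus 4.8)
The plan is to deduce the statement from Lebesgue's dominated convergence theorem on $(\real^n,\Leb)$ via the sequential characterization of continuity. Since $(X,d)$ is a metric space, it is first countable, so it suffices to show that $g(y_k)\to g(y_0)$ for every sequence $(y_k)_k$ in $U$ with $y_k\to y_0$. Fix such a sequence and write $f_k:=f(\cdot,y_k)\colon\real^n\to\real\cup\{\pm\infty\}$; by hypothesis~(i) each $f_k$ is Lebesgue integrable, hence measurable, and in particular $g(y_k)=\int_{\real^n}f_k\,dx$ is a well-defined real number (and likewise $g(y_0)$).

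Next I would produce a single Lebesgue-null set that handles all the ``almost everywhere'' clauses simultaneously. Let $N_0\subseteq\real^n$ be the null set outside of which $y\mapsto f(x,y)$ is continuous at $y_0$ (hypothesis~(ii)); for each $k\in\nat$ let $N_k\subseteq\real^n$ be the null set outside of which $|f_k(x)|\le F(x)$ (hypothesis~(iii) at $y=y_k$); and let $N_\infty$ be the null set outside of which $|f(x,y_0)|\le F(x)$. Then $N:=N_0\cup N_\infty\cup\bigcup_{k\in\nat}N_k$ is a countable union of null sets, hence null. For every $x\in\real^n\setminus N$ we have, first, $f_k(x)\to f(x,y_0)$ as $k\to\infty$ (because $y\mapsto f(x,y)$ is continuous at $y_0$ and $y_k\to y_0$), and, second, $|f_k(x)|\le F(x)$ for all $k$ together with $|f(x,y_0)|\le F(x)$. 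In particular $f(\cdot,y_0)$ is measurable, being an almost-everywhere pointwise limit of the measurable functions $f_k$.

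Now I would invoke Lebesgue's dominated convergence theorem on $(\real^n,\Leb)$: the sequence $(f_k)_k$ converges to $f(\cdot,y_0)$ almost everywhere and is dominated almost everywhere by the fixed integrable function $F$, so $f(\cdot,y_0)$ is integrable and $g(y_k)=\int_{\real^n}f_k\,dx\to\int_{\real^n}f(\cdot,y_0)\,dx=g(y_0)$. Since the sequence $y_k\to y_0$ was arbitrary, $g$ is continuous at $y_0$. The one step requiring care — and the only place where the metric (equivalently, first-countable) structure on $X$ is genuinely used — is the passage from the per-$y$ ``a.e.'' bound in~(iii) to a single dominating function valid along the whole chosen sequence; this is exactly what the countable union $\bigcup_{k\in\nat}N_k$ accomplishes, and such an argument would not go through for arbitrary nets in place of sequences.
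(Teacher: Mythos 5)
Your argument is correct: it is the standard proof of this parameter-dependent integral theorem via the sequential characterization of continuity in metric spaces and Lebesgue's dominated convergence theorem, including the necessary step of collecting the countably many exceptional null sets from hypotheses (ii) and (iii) along the chosen sequence. The paper itself gives no proof but cites \cite[Theorem 16.10]{Jost}, whose argument is essentially the one you reproduce, so there is nothing to add beyond noting that measurability of $f(\cdot,y_0)$ already follows from hypothesis (i) applied at $y_0$, making that remark superfluous.
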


With this, we can prove the last lemma of this section (compare with \cite[Proposition 1.6]{Die}). Together with Proposition \ref{dist}, it constitutes the proof of Theorem \ref{ccthm}.

\begin{lem} \label{refl}
Let $(X,d,\tll,\leq,\tau)$ be an approximating Lorentzian pre-length space equipped with a Borel probability measure $\mes$ of full support. For any point $q \in X$, the following are equivalent:
 \begin{enumerate}
  \item $\tf^-$ ($\tf^+$) is continuous at $q$.
  \item $I^+(q) \subseteq I^+(p) \implies I(\tilde{p},p) \cap I^-(q) \neq \emptyset$ for all $\tilde{p} \tll p$ \\ ($I^-(q) \subseteq I^-(p) \implies I(p,\tilde{p}) \cap I^-(q) \neq \emptyset$ for all $p \tll \tilde{p}$).
  \item $X$ is past (future) reflecting at $q$.
  \item \ $\bigcap_{q \tll x} I^-(x) \subseteq \overline{I^-(q)}$ \\ ($\bigcap_{x \tll q} I^+(x) \subseteq \overline{I^+(q)}$).
 \end{enumerate}
\end{lem}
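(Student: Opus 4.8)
The plan is to establish the chain of equivalences (i) $\Leftrightarrow$ (iii), (iii) $\Leftrightarrow$ (ii), and (iii) $\Leftrightarrow$ (iv), treating only the past versions (the future ones being symmetric). The heart of the matter is the equivalence (i) $\Leftrightarrow$ (iii), which links the analytic continuity of $\tf^-$ to the purely causal reflectivity condition. First I would rewrite continuity of $\tf^-$ at $q$ in terms of convergence of volumes. Since $\tf^-(p) = \int_0^1 \vol^-_r(p)\,dr$ and, by Lemma~\ref{nondec}, $p \leq q \leq q'$ already forces $\tf^-(p) \leq \tf^-(q) \leq \tf^-(q')$, one reduces continuity to semicontinuity statements: lower semicontinuity of $p \mapsto \thck^-_r(p)$ in the sense that $\liminf_n \vol^-_r(p_n) \geq \vol^-_r(q)$ whenever $p_n \to q$ (this is automatic, since $I^-(q)$ is open, so any compact subset of $\thck^-_r(q)$ lies in $\thck^-_r(p_n)$ for large $n$), and the problematic upper semicontinuity, which is where reflectivity enters. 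The key observation is that for any sequence $p_n \to q$ one has $\limsup_n I^-(p_n) \subseteq \bigcap_{q \tll x} I^-(x)$ (by openness of $\tll$: if $z \tll p_n$ infinitely often and $p_n \to q$, then for $z \tll x$ whenever $q \in I^+(z)$... more carefully, if $z \in I^-(p_n)$ for infinitely many $n$, then $p_n \in I^+(z)$, and one shows the outer limit is controlled by the $I^-(x)$ for $x \gg q$). Combined with item (iv), $\bigcap_{q \tll x} I^-(x) \subseteq \overline{I^-(q)}$, this gives control of the thickenings $\thck^-_r(p_n)$ in terms of $\thck^-_{r'}(q)$ for $r'$ slightly larger than $r$, and then Lemma~\ref{sameae} (which says $\vol^-_{r'}(q)$ has at most countably many jumps as $r'$ varies) lets one push $r' \searrow r$ without loss of mass almost everywhere.

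For the rigorous execution of (i) $\Rightarrow$ (iii) and (iii) $\Rightarrow$ (i) I would invoke Theorem~\ref{jostthm} with $n=1$, $U$ a neighborhood of $q$, and $f(r,p) = \vol^-_r(p)\cdot\mathbf{1}_{(0,1)}(r)$; the dominating function is the constant $1$, and integrability in $r$ is clear by monotonicity. The hypothesis that must be checked is (ii) of that theorem: for almost every $r$, the map $p \mapsto \vol^-_r(p)$ is continuous at $q$. Here I would argue that if $X$ is past reflecting at $q$ then $p \mapsto I^-(p)$ is, in the appropriate Kuratowski sense, continuous at $q$ along causal sequences, hence the thickenings converge in measure for every $r \notin S^-(q)$, the countable exceptional set from Lemma~\ref{sameae}. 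Conversely, if reflectivity fails at $q$ — say $I^-(q) \subseteq I^-(p)$ but $I^+(p) \not\subseteq I^+(q)$ — one produces a point witnessing a definite jump $\vol^-_r(p) - \vol^-_r(q) \geq c > 0$ for all small $r$ (using, as in the proof of Proposition~\ref{dist}, normality of the metric space to separate the relevant point from $\overline{I^-(q)}$, and the approximating property to guarantee the offending future region has positive measure), which integrates to $\tf^-(p) - \tf^-(q) \geq c\,r_0 > 0$, contradicting continuity.

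The remaining equivalences are bookkeeping in causal theory. For (iii) $\Leftrightarrow$ (iv): past reflecting at $q$ says $I^-(q) \subseteq I^-(p) \Rightarrow I^+(p) \subseteq I^+(q)$, equivalently (contrapositive, plus openness of $\tll$ and the chronology/transitivity relations) the common past $\bigcap_{q \tll x} I^-(x)$ cannot escape $\overline{I^-(q)}$; I would unwind this by noting that a point $y$ in the left set but not in the closure of $I^-(q)$ would, after thickening, give a point $p$ with $I^-(q)\subseteq I^-(p)$ yet with $y$ (or a chronological successor of $y$) in $I^+(p)\setminus I^+(q)$. For (ii) $\Leftrightarrow$ (iii): condition (ii) is just the statement that whenever $I^+(q) \subseteq I^+(p)$, the past of $q$ accumulates onto $p$ from "just below" $p$, which is the sequential reformulation of $I^-(p) \subseteq \overline{I^-(q)}$, and by push-up (Lemma~\ref{push-up}) plus the definition this is again past reflectivity at $q$; here I would lean on the characterization via the approximating property (Lemma~\ref{lemapprox}) to realize the intersections $I(\tilde p, p) \cap I^-(q)$. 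The main obstacle, as indicated, is the upper-semicontinuity half of (i) $\Leftrightarrow$ (iii): making precise the claim that reflectivity forces $\limsup_n \thck^-_r(p_n) \subseteq \thck^-_{r}(q)$ up to a null set in $r$, which genuinely needs the averaging over $r$ (one cannot expect it for fixed $r$) together with Lemma~\ref{sameae}; everything else is an exercise in juggling open futures, pasts, and the push-up property.
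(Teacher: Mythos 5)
There are two genuine gaps, both at the analytic core of the lemma. First, your ``conversely'' direction does not contradict continuity. You take a point $p$ witnessing failure of reflectivity and derive $t^-(p)-t^-(q)\geq c\,r_0>0$; but $p$ is an arbitrary point of $X$, not a point near $q$, so a gap between the values at $p$ and $q$ says nothing about continuity of $t^-$ \emph{at} $q$ (it is only a monotonicity-type statement, as in Proposition~\ref{dist}). The missing step, which is exactly the paper's (i)~$\Rightarrow$~(ii), is to transfer the volume excess to points arbitrarily close to $q$: take a sequence $(q_n^+)$ approximating $q$ from the future (Lemma~\ref{lemapprox}); since $q_n^+\in I^+(q)\subseteq I^+(p)$, transitivity gives $I(\tilde p,p)\subseteq I^-(q_n^+)$, so the open set $B=I(\tilde p,p)\setminus\overline{I^-_{r_0}(q)}$ (nonempty of positive $\mu$-measure for some $r_0>0$) lies in $I^-_r(q_n^+)\setminus I^-_r(q)$ for $r<r_0$, whence $t^-(q_n^+)\geq t^-(q)+r_0\,\mu(B)$ for all $n$ while $q_n^+\to q$. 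Moreover you have paired the wrong reflectivity with $t^-$: continuity of $t^-$ at $q$ corresponds to \emph{past} reflectivity at $q$, whose failure reads $I^+(q)\subseteq I^+(p)$ but $I^-(p)\not\subseteq I^-(q)$; the condition you wrote ($I^-(q)\subseteq I^-(p)$, $I^+(p)\not\subseteq I^+(q)$) is the future version, which governs $t^+$, and the same mix-up recurs in your sketch of (iii)~$\Leftrightarrow$~(iv).

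Second, in the direction (iii)/(iv)~$\Rightarrow$~(i) you explicitly leave open what you call the main obstacle, and the route you indicate does not close it: controlling $\limsup_n I^-_r(p_n)$ for \emph{arbitrary} sequences $p_n\to q$ via the Kuratowski upper limit of $I^-(p_n)$ fails because a point can lie in $I^-_r(p_n)$ for all $n$ by being within distance $r$ of ever-changing points $z_n\in I^-(p_n)$, none of which survives into $\bigcap_{q\ll x}I^-(x)$; thickening and intersection do not commute, so the set-level inclusion does not pass to the $r$-thickenings even after averaging in $r$. The paper sidesteps arbitrary sequences entirely: it establishes continuity of $V^-_r$ at $q$ for all $r$ outside the countable set of Lemma~\ref{sameae} by working only with points $p$ in the causal sandwich $I(q^-_{i_0},q^+_{j_0})$, where Lemma~\ref{nondec} gives the two-sided comparison $V^-_r(q^-_{i_0})\leq V^-_r(p)\leq V^-_r(q^+_{j_0})$, and then uses the two approximating sequences, continuity of $\mu$ along the countable union $\bigcup_i I^-_r(q_i^-)=I^-_r(q)$ and the countable intersection $\bigcap_j I^-_r(q_j^+)$, hypothesis (iv), and Lemma~\ref{sameae} to get the two one-sided $\epsilon$-estimates before applying Theorem~\ref{jostthm} (this is also why the continuity statement is proved on a chronological, not metric-ball, neighborhood). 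Your bookkeeping equivalences (ii)~$\Leftrightarrow$~(iii)~$\Leftrightarrow$~(iv) are essentially fine (modulo the past/future mix-up), but as it stands the plan is incomplete at both analytic steps.
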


\begin{proof}
 We show the past versions. Fix $q \in X$.
 
(i) $\Longrightarrow$ (ii) Assume (i) holds but not (ii). Then there exist points $p$ and $\tilde{p}$ such that $\tilde{p} \tll p$ and $I^+(q) \subseteq I^+(p)$ but $I(\tilde{p},p) \cap I^-(q) = \emptyset$. Note that $I(\tilde p , p)$ is open and non-empty, because we can approximate $p$ from the past, and by openness of $\tll$, any past-approximating sequence must enter $I^+(\tilde{p})$. Let $r_0 > 0$ be small enough, then also the set
 \[
  B := \left\{ x \in I(\tilde p , p) \mid d(x, I^-(q)) > r_0 \right\} = I(\tilde p,p) \setminus \overline{I^-_{r_0}(q)}.
 \]
is open and non-empty: If not, then $I(\tilde p,p) \subseteq \overline{I^-_{r_0}(q)}$. But since $I(\tilde p,p)$ is open, this would mean that $I(\tilde{p},p) \cap I^-(q) \neq \emptyset$, contradicting our earlier assumption. Clearly, $B \cap I_{r}^-(q) = \emptyset$ for all $r < r_0$. 

Let $(q_n^+)$ be a sequence approximating $q$ from the future. Since $q_n^+ \in I^+(q) \subseteq I^+(p)$, transitivity of $\tll$ implies that $I(\tilde{p},p) \subseteq I^-(p) \subseteq I^-(q_n^+)$. Then, because
 \[
  \tf^-(q_n^+) - \tf^-(q) = \int_0^1 \mes\left(\thck^-_r(q_n^+) \setminus \thck^-_r(q)\right) dr
 \]
and $B \subseteq \thck^-_r(q_n^+) \setminus \thck^-_r(q)$ for $r<r_0$, it follows that
 \[
  \tf^-(q_n^+) - \tf^-(q) \geq \int_0^{r_0} \mes(B) \, dr > 0,
 \]
 showing discontinuity of $\tf^-$ at $q$, a contradiction to (i).
 
 (ii) $\Longrightarrow$ (iii) Assume (ii) holds but not (iii). If $X$ is not past reflecting at $q$, then there exists a $p \in X$ with $I^+(q) \subseteq I^+(p)$ but $I^-(p) \not\subseteq I^-(q)$. Thus there is a $\tilde{p} \in I^-(p) \setminus I^-(q)$ and by (ii) there exists a point $\hat{p} \in I(\tilde{p},p) \cap I^-(q)$. But then $\tilde{p} \tll \hat{p} \tll q$, and by transitivity $\tilde{p} \tll q$, a contradiction.
 
 (iii) $\Longrightarrow$ (iv) If $\bigcap_{q\tll x} I^-(x) = \emptyset$ the conclusion is trivial. Suppose $p \in \bigcap_{q \tll x} I^-(x)$. Then $x \in I^+(p)$ for all $x \gg q$, i.e., $I^+(q) \subseteq I^+(p)$. By (iii) $X$ is past reflecting at $q$, hence $I^-(p) \subseteq I^-(q)$. Since $X$ is past-approximating, $p \in J^-(p) \subseteq \overline{I^-(p)} \subseteq \overline{I^-(q)}$. Since $p$ was an arbitrary point in the intersection, (iv) follows.
 
 (iv) $\Longrightarrow$ (i) Fix $r \in (0,1)$ and let $\epsilon > 0$. Because $X$ is past-approximating, we can find a sequence $(q^-_i)_i$ that approximates $q$ from the past. By openness of $\tll$, it then follows that
 \[
  \bigcup_{i=1}^\infty \thck^-_r (q^-_i) = \thck^-_r (q),
 \]
 and by standard measure theory \cite[Theorem 1.2.5]{KrPa} there exists $i_0 \in \nat$, such that
 \[
  \mes\left(\thck^-_r (q)\right) - \mes\left(\thck^-_r (q^-_{i_0})\right) < \epsilon.
 \]
 By Lemma \ref{nondec}, we deduce
 \begin{equation} \label{past}
  \vol^-_r(q) - \vol^-_r(p) < \epsilon \quad \text{ for all } \quad p \in I^+(q^-_{i_0}).
 \end{equation}
 Next, consider $(q^+_j)_j$ a sequence approximating $q$ from the future. Assumption (iv) implies that
 \[
  \bigcap_{j=1}^\infty \thck_r^-(q^+_j) \subseteq \overline{\thck_r^-(q)},
 \]
 and hence
 \[
  \mes\left( \bigcap_{j=1}^\infty \thck_r^-(q^+_j) \right) \leq \mes\left( \overline{\thck_r^-(q)}\right).
 \]
 By \cite[Theorem 1.2.5]{KrPa}, for $r$ given, there exists $j_0 \in \nat$ such that
 \[
  \mes\left( \thck_r^-(q^+_{j_0}) \right) - \mes\left( \overline{\thck_r^-(q)}\right) < \epsilon.
 \]
 Then, using Lemma \ref{nondec} we deduce
 \begin{equation} \label{future}
  \vol^-_r(p) - \overline{\vol^-_r}(q) < \epsilon \quad \text{ for all } \quad p \in I^-(q^+_{j_0}).
 \end{equation}
 By Lemma \ref{sameae}, $\overline{\vol^-_r}(q) = {\vol^-_r}(q)$ for all but countably many $r$. Hence for almost all $r\in (0,1)$, we can combine \eqref{past} and \eqref{future} to write
 \[
  \vert \vol_r^-(q) - \vol_r^-(p) \vert < \epsilon \quad \text{ for all } \quad p \in I(q^-_{i_0},q^+_{j_0}).
 \]
 We conclude that almost all functions $V^-_r \colon X \to \real$, $r \in (0,1)$, are continuous at $q$. Thus it follows from Theorem~\ref{jostthm} that%{a standard result in integration theory \cite[Theorem 16.10]{Jost}}%\footnote{We need to extend \cite[Theorem 16.10]{Jost} from $\real^d$ to any metric space, but this is trivial.}} that
 \[
  \tf^-(p) = \int_0^1 \vol^-_r(p) \, dr
 \]
 is continuous at $q$.
\end{proof}

%%%%%%%%%%%%%%%%%%%%%%%%%%%%%% GLOB HYP %%%%%%%%%%%%%%%%%%%%%%%%%%%%%%%%%%%%%%

\section{Global hyperbolicity and Cauchy time functions} \label{ghsec}

The highest step on the causal ladder, namely global hyperbolicity, is fundamental for a number of deep and important results in general relativity, such as the study of the Cauchy problem of the Einstein equations, the singularity theorems, and Lorentzian splitting theorems. This is due to the fact global hyperbolicity is equivalent to the existence of a Cauchy time function, whose level sets in turn are Cauchy surfaces, i.e., domains suitable for specifying initial data for hyperbolic PDEs, and for imposing focusing conditions for geodesics. This characterization of global hyperbolicity was first obtained by Geroch~\cite{Ger} in 1970 and makes use of volume time functions. In the same vein, in this section we characterize global hyperbolicity for Lorentzian pre-length spaces in four different ways by also utilizing our constructions from Section~\ref{volfcts}.

\subsection{Definitions and main result}

The causality conditions we use for Lorentzian pre-length space are defined analogously to the smooth case as follows (cf.\ \cite{Min3}).

\begin{defn}[{\cite[Definition 2.35]{KuSa}}]\label{ghdef}
 A Lorentzian pre-length space $(X,d,\tll, \allowbreak \leq,\tau)$ is called \emph{globally hyperbolic} if it is non-totally imprisoning and the causal diamonds $J(p,q)$ are compact for all $p,q \in X$.
\end{defn}

\begin{defn}
 A time function $t \colon X \to \real$ on a Lorentzian pre-length space $(X,d,\tll,\leq,\tau)$ is called a \emph{Cauchy time function} if for every doubly-inextendible causal curve $\gamma$ we have $\operatorname{Im}(t \circ \gamma) = \real$.
\end{defn}

For smooth and continuous Lorentzian metrics, global hyperbolicity is also characterized by the existence of a Cauchy surface, which is then a topological (even smooth) hypersurface. We extend the definition verbatim, but adopt the name \emph{Cauchy set} to emphasize that we are not in the manifold setting.

\begin{defn} \label{defcahy}
 Let $(X,d,\tll,\leq,\tau)$ be a Lorentzian pre-length space. A \emph{Cauchy set} is a subset $\cahy \subseteq X$ such that every doubly-inextendible causal curve intersects $\cahy$ exactly once.
\end{defn}

Geroch~\cite[Section 4]{Ger} makes use of Leray's notion of global hyperbolicity which is formulated in terms of the topology on the collection of certain curves. We will show that Definition~\ref{ghdef} is equivalent to this notion also for Lorentzian pre-length spaces. To this end, for any two points $p,q \in X$, we consider the set $\cpq(p,q)$ the equivalence class of future-directed causal curves from $p$ to $q$ with continuous, strictly monotonically increasing reparametrizations, equipped with the Hausdorff distance between the images of the curves as subsets in $X$, i.e.,
\[
 d_H(\gamma_1,\gamma_2) = \max \{ \sup_{x \in \gamma_1}d(x,\gamma_2), \sup_{y \in \gamma_2}d(\gamma_1,y) \}
\]
(we write $\gamma_i$ also for the image $\operatorname{Im}(\gamma_i)$, since the parametrization does not matter).

In this section we establish the third, and last, main result of this paper (which via Proposition \ref{LLSareapprox} immediately implies the Lorentzian length version stated in Theorem~\ref{thm3} of the Introduction).

\begin{thm} \label{ghthm}
 Let $(X,d,\tll,\leq,\tau)$ be an approximating Lorentzian pre-length space with limit curves. Suppose, in addition, that $(X,d)$ is second countable and proper. Then the following are equivalent:
 \begin{enumerate}
  \item $X$ is globally hyperbolic,
  \item $X$ is non-totally imprisoning and $\cpq(p,q)$ is compact, for any $p,q \in X$,
  \item $X$ admits a Cauchy time function,
  \item $X$ admits a Cauchy set.
 \end{enumerate}
\end{thm}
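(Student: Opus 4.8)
The strategy is to establish the cycle of implications (i) $\Rightarrow$ (ii) $\Rightarrow$ (iv) $\Rightarrow$ (iii) $\Rightarrow$ (i), leaning on the averaged volume functions from Section~\ref{volfcts} and the limit curve theorems. The first implication, (i) $\Rightarrow$ (ii), is a compactness argument on the space of causal curves: given $p,q\in X$, every future-directed causal curve from $p$ to $q$ lies in the compact diamond $J(p,q)$, so by non-total imprisonment their $d$-lengths are uniformly bounded; after reparametrizing by $d$-arclength on a common interval (or by a fixed affine rescaling to $[0,1]$) they are uniformly Lipschitz, and Theorem~\ref{thm37} produces a uniformly convergent subsequence whose limit is again a causal curve from $p$ to $q$ (it is non-constant since $p\neq q$ when $\mathcal C(p,q)\neq\emptyset$; the case $p=q$ is trivial). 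One checks that uniform convergence of arclength-parametrized curves implies Hausdorff convergence of images, so $\mathcal C(p,q)$ is sequentially compact, hence compact, in the Hausdorff metric.

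For (ii) $\Rightarrow$ (iv), the natural candidate Cauchy set is a level set $\cahy := (t^-)^{-1}(0)$ of the past averaged volume function — but this requires first knowing $t^-$ is a \emph{Cauchy} time function, which is essentially (iii). So instead I would prove (ii) $\Rightarrow$ (iii) first and then read off (iv) as the level sets of the Cauchy time function. To get (ii) $\Rightarrow$ (iii): under (ii) one shows $X$ is causally continuous (in fact globally hyperbolic spacetimes are causally continuous, and the non-smooth analogue should follow since compactness of $\mathcal C(p,q)$ together with non-total imprisonment forces the diamonds $J(p,q)$ to be compact, giving back (i), and global hyperbolicity implies reflectivity and distinguishing by standard arguments adapted from \cite{ACS}); then Theorem~\ref{ccthm} gives that $t^\pm$ are time functions. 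The key extra step is the \emph{Cauchy property}: for a doubly-inextendible causal curve $\gamma\colon\real\to X$ (arclength-parametrized), one must show $t^-\circ\gamma$ is surjective onto $\real$. Since $t^-$ is increasing along $\gamma$, it suffices to show $\lim_{s\to+\infty}t^-(\gamma(s))=1$ and $\lim_{s\to-\infty}t^-(\gamma(s))=0$. The lower limit: if $t^-(\gamma(s))\to c>0$ then the thickenings $\thck^-_r(\gamma(s))$ carry a definite amount of measure uniformly, so $\bigcap_s \thck^-_r(\gamma(s))$ is nonempty for small $r$; picking a point $x$ in the intersection and using properness plus the limit curve theorem for inextendible curves (Theorem~\ref{thm314}) one builds a past-inextendible causal curve from a limit point of $\gamma(s_n)$ landing near $x$, and the compactness of diamonds $J(x,\gamma(s_0))$ traps an inextendible curve in a compact set, contradicting non-total imprisonment via Theorem~\ref{cor315}. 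The upper limit is symmetric using $t^+$ and the identity relating $t^+$ and $t^-$ on a causally continuous space (roughly, $I^-(\gamma(s))$ exhausts the past of the whole curve).

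The implication (iii) $\Rightarrow$ (iv) is immediate: if $t$ is a Cauchy time function then $\cahy:=t^{-1}(0)$ is a Cauchy set, since $t\circ\gamma$ is a continuous strictly increasing surjection $\real\to\real$ for every doubly-inextendible $\gamma$ (strict monotonicity uses that $t$ is a time function and that $\gamma$ is genuinely causal, i.e., non-locally-constant after arclength reparametrization), hence hits $0$ exactly once. For (iv) $\Rightarrow$ (i), assume a Cauchy set $\cahy$ exists. Non-total imprisonment: by Corollary~\ref{decomp1} every point lies on a doubly-inextendible curve, and by Theorem~\ref{cor315} it is enough to rule out inextendible causal curves trapped in a compact set $K$; a trapped doubly-inextendible curve would, by the limit-curve/accumulation argument as in the proof of Lemma~\ref{lem2}, have to accumulate and one produces a \emph{closed} causal curve or a violation — more cleanly, one shows a trapped inextendible curve can be extended to a doubly-inextendible one still in $\overline K$ which then must cross $\cahy$, but a compactness argument produces a second such curve crossing $\cahy$ twice, contradiction. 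Compactness of $J(p,q)$: take a sequence $x_n\in J(p,q)$; by Corollary~\ref{decomp1} each lies on a doubly-inextendible causal curve $\gamma_n$ through $x_n$, which by the Cauchy property meets $\cahy$; using causal path-connectedness one has causal curves $p\le x_n$ and $x_n\le q$, concatenate with the portion of $\gamma_n$ to reach $\cahy$, apply the limit curve theorem (properness gives the needed accumulation) to extract a limit, and conclude $x_n$ subconverges in $J(p,q)$ by closedness of $\le$ coming from local weak causal closedness along the limit curve plus the fact that $\cahy$ is crossed exactly once (this pins down the limit and shows it lies in $J(p,q)$).

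\textbf{Main obstacle.} The hardest part is the Cauchy property of $t^-$, i.e.\ the surjectivity $\ima(t^-\circ\gamma)=\real$ along doubly-inextendible curves, and the symmetric control of the two end-limits. In the smooth case Geroch uses that $I^-(\gamma(s))$ exhausts $I^-(\gamma(\real))$ and that volumes behave well; here the averaging over $r$ and the fact that $\partial I^\pm$ is not measure-zero mean one must work with the thickenings $\thck^\pm_r$ and carefully exchange limits (using Lemma~\ref{sameae} and dominated convergence as in Lemma~\ref{refl}), while simultaneously invoking properness and the limit curve theorem for inextendible curves to derive the imprisonment contradiction when the limit of $t^-\circ\gamma$ is not extremal. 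Coordinating the measure-theoretic estimate with the causal-geometric limit-curve argument — and making sure the point $x$ extracted from $\bigcap_s \thck^-_r(\gamma(s))$ genuinely sees $\gamma(s)$ in its causal future for all large $s$, so that $J(x,\gamma(s_0))$ really does trap an inextendible curve — is the delicate technical core of the theorem.
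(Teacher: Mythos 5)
There is a genuine gap, and it sits exactly where you identify the "delicate technical core": the Cauchy property of the volume functions. You propose to use $t^-$ alone (implicitly post-composed with a rescaling $(0,1)\to\real$) and claim that along a doubly-inextendible causal curve $\gamma$ one has $\lim_{s\to-\infty}t^-(\gamma(s))=0$ and $\lim_{s\to+\infty}t^-(\gamma(s))=1$. The second limit is false, already in Minkowski space with any full-support probability measure: for a uniformly accelerated (Rindler) observer, which is doubly-inextendible, $\bigcup_s I^-(\gamma(s))=I^-(\gamma)$ is a proper open subset of the spacetime (everything behind the asymptotic null hyperplane is missed), so $t^-(\gamma(s))$ converges to a value strictly less than $1$. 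Your heuristic "$I^-(\gamma(s))$ exhausts the past of the whole curve" is true but does not give full measure, which is precisely why Geroch's construction — and the paper's — uses \emph{both} volume functions: the correct statements are $t^-(\gamma(s))\to 0$ at the past end and $t^+(\gamma(s))\to 0$ at the future end (Lemma~\ref{ghvolfcts}, proved by the same imprisonment argument you sketch for the lower limit, using properness, Proposition~\ref{gh-cc} to get $\overline{I^-}=J^-$, and compactness of diamonds), and one then takes $t=\ln(-t^-/t^+)$, which is a time function with $\ima(t\circ\gamma)=\real$ for every doubly-inextendible $\gamma$. Without bringing in $t^+$ in this way, your (ii) $\Rightarrow$ (iii) step does not go through; note also that, as literally written, "$t^-\circ\gamma$ surjective onto $\real$" is impossible for a $[0,1]$-valued function, and $(t^-)^{-1}(0)=\emptyset$ on an approximating space, so that candidate Cauchy set is vacuous.

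A secondary weakness is your (iv) $\Rightarrow$ (i) sketch. Deducing convergence of $x_n$ "by closedness of $\leq$ coming from local weak causal closedness along the limit curve" is not legitimate: local weak causal closedness gives no global closedness of $\leq$, and without causal simplicity you cannot keep limit points inside $J(p,q)$ that way; likewise, in the non-total-imprisonment part, "a compactness argument produces a second curve crossing $\cahy$ twice" is exactly the step that needs work (an imprisoned inextendible curve need not admit a doubly-inextendible extension staying in $\overline K$). The paper handles this by first proving structural properties of Cauchy sets (acausality, closedness of $\cahy$ and of $J^\pm(\cahy)$, Proposition~\ref{cahyprop}), then showing an imprisoned doubly-inextendible curve (obtained from Theorem~\ref{cor315}) accumulates on $\cahy$ and yields a past-inextendible limit curve lying in $J^+(\cahy)\cap J^-(\cahy)=\cahy$, contradicting acausality (Lemma~\ref{nti}), and finally proving compactness of $\cpq(p,\cahy)$ and $\cpq(p,q)$ (Lemmas~\ref{CpS}, \ref{chcpqcompact}), i.e.\ it goes (iv) $\Rightarrow$ (ii) $\Rightarrow$ (i). Your (i) $\Rightarrow$ (ii) and (iii) $\Rightarrow$ (iv) steps are fine (the former is essentially the reverse of the paper's (ii) $\Rightarrow$ (i) and works as you describe), and your choice of cycle is legitimate, but the two gaps above — especially the false $t^-\to1$ claim — must be repaired along the lines of Lemmas~\ref{ghvolfcts} and \ref{nti}--\ref{chcpqcompact}.
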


\begin{rem}[Smooth spacetimes]
 Manifolds are second-countable by definition. Moreover, any differentiable manifold admits a complete Riemannian metric \cite{NoOz}, and hence, by the Hopf--Rinow Theorem, a proper distance. For spacetimes with continuous metrics, however, Theorem~\ref{ghthm} cannot be applied unrestrictedly, since only causally plain $C^0$-spacetimes satisfy the axioms of Lorentzian pre-length spaces \cite[Example 5.2]{KuSa} (but a characterization of global hyperbolicity on \emph{all} $C^0$-spacetimes has been obtained in \cite{Sae}). On the other hand, Theorem~\ref{ghthm} is of course valid well beyond the manifold setting. For instance, by the Hopf--Rinow--Cohn-Vossen Theorem it is sufficient that $(X,d)$ is a complete, locally compact length metric space for it to be proper.
\end{rem}

\begin{rem}[Topology change]
 On a spacetime $(M,g)$, if $\cahy$ is a smooth Cauchy surface, then any other Cauchy surface is diffeomorphic to $\cahy$ \cite{BeSa1} (similarly, if one works with continuous Cauchy surfaces, then they are homeomorphic). The diffeomorphism can be constructed by following the flow of the time-orientation vector field. Further, $M$ is foliated by Cauchy surfaces.

 For Lorentzian pre-length spaces $X$, it is still true that in the setting of Theorem~\ref{ghthm} the level sets of Cauchy time functions yield a decomposition of $X$ as a disjoint union of Cauchy sets. Different Cauchy sets, however, need not be homeomorphic, nor even homotopy equivalent as the next examples show.
\end{rem}

\begin{ex}[Degenerate subsets of Minkowski space]
 Let $(M,\eta)$ be $(n+1)$-dimensional Minkowski spacetime, with coordinates $(t,x) \in \real \times \real^n$ and $n \geq 2$. Define
 \[ X := \{(t,x) \mid t\leq0, x=0 \text{ or } t>0, \vert x \vert =t/2\} \]
 (here $\vert x \vert$ denotes the Euclidean norm of $x\in\real^n$), equipped with the causal and chronological relations induced by $\eta$ and the Euclidean distance. Then $X$ is a globally hyperbolic Lorentzian pre-length space and satisfies the assumptions of Theorem~\ref{ghthm}. The function $f(t,x) = t$ is a Cauchy time function, with some of its level sets being points (if $t\leq0$) and some being $(n-1)$-spheres (if $t>0$).
\end{ex}

\begin{ex}[Degenerate generalized cones]
 If $(X_1,d_1)$, $(X_2,d_2)$ are separable, proper geodesic length spaces one can construct generalized cones $Y_1$ and $Y_2$ in the sense of \cite{AGKS} over them and glue them together at the tip. The resulting space can be equipped with the structure of a Lorentzian length space in the usual way, which is then globally hyperbolic by \cite[Prop.\ 4.10]{AGKS}, and has Cauchy sets homeomorphic to $X_1$, $X_2$ and $\{ \text{tip} \}$.
\end{ex}

\subsection{Properties of Cauchy sets}\label{ssec:cahy}

In what follows we prove several results involving Cauchy sets that are crucial in the proof of Theorem~\ref{ghthm} for the implication (iv) $\Longrightarrow$ (ii), but are also of independent interest. We make use of several results about inextendible causal curves, in particular, about the existence of a maximal extension of causal curves obtained in Section~\ref{ssec:def}.

\begin{prop}[Basic properties of Cauchy sets]\label{cahyprop}
 Let $(X,d,\ll,\leq,\tau)$ be an approximating Lorentzian pre-length space with limit curves such that $(X,d)$ is proper. If $\cahy \subseteq X$ is a Cauchy set, then the following properties hold:
 \begin{enumerate}
  \item $\cahy$ is \emph{acausal}, i.e., distinct points on $\cahy$ are not causally related,
  \item $X = J^-(\cahy) \cup J^+(\cahy)$,
  \item $X = I^-(\cahy) \sqcup \cahy \sqcup I^+(\cahy)$, where $\sqcup$ denotes the disjoint union,
  \item $J^\pm(\cahy) = \overline{I^\pm(\cahy)} = I^\pm(\cahy) \sqcup \cahy$, and $\cahy$ and $J^\pm(\cahy)$ are closed.
 \end{enumerate}
\end{prop}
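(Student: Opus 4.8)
The plan is to base everything on doubly-inextendible causal curves through points, supplied by Corollary~\ref{decomp1} and Proposition~\ref{exisinex}, combined with the push-up Lemma~\ref{push-up} and the approximating property. For (i), suppose $p,q\in\cahy$ with $p<q$. Since $X$ is causally path-connected, there is a future-directed causal curve from $p$ to $q$, and extending it once to the future and once to the past via Proposition~\ref{exisinex} yields a doubly-inextendible causal curve through both $p$ and $q$; as $p\neq q$ this curve meets $\cahy$ more than once, contradicting the definition of a Cauchy set. The same device rules out $s\tll s$ for $s\in\cahy$: a timelike curve from $s$ to $s$ is non-constant, hence of positive $d$-length, so its doubly-inextendible extension meets $\cahy$ at two distinct parameters. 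Thus $\cahy$ is acausal and contains no timelike loops.

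For (ii), given $x\in X$ take a doubly-inextendible causal curve $\gamma$ through $x$ by Corollary~\ref{decomp1}; it meets $\cahy$ at some $\gamma(t_0)$, and writing $x=\gamma(t_1)$, the cases $t_1<t_0$, $t_1=t_0$, $t_1>t_0$ place $x$ in $J^-(\cahy)$, in $\cahy$, or in $J^+(\cahy)$, using reflexivity of $\leq$ and that $\gamma$ is future-directed. To upgrade this to the decomposition in (iii), I would sharpen the choice of curve: since $X$ is approximating, Lemma~\ref{pathapprox} gives both a future- and a past-directed timelike curve issuing from $x$; concatenating them and extending with Proposition~\ref{exisinex} produces a doubly-inextendible causal curve $\gamma$ through $x=\gamma(t_1)$ that is \emph{timelike} on an interval $(t_1-\delta,t_1+\delta)$. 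If $\gamma$ meets $\cahy$ at $\gamma(t_0)$ with $t_0\neq t_1$, then either $t_0$ already lies in that interval (so $\gamma(t_0)\tll x$ directly), or one inserts an intermediate parameter $t'\in(t_1-\delta,t_1)$ and applies push-up to $\gamma(t_0)\leq\gamma(t')\tll x$ (or the past analogue); in every case $x\in I^+(\cahy)$ or $x\in I^-(\cahy)$. Hence $X=I^-(\cahy)\cup\cahy\cup I^+(\cahy)$.

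Disjointness in (iii) then follows from acausality: $x\in I^+(\cahy)\cap I^-(\cahy)$ would give $s_1\tll x\tll s_2$ with $s_i\in\cahy$, hence $s_1\tll s_2$ by transitivity, forcing $s_1=s_2$ and a forbidden timelike loop; $x\in\cahy\cap I^\pm(\cahy)$ is excluded identically. For (iv), if $x\in J^+(\cahy)\cap I^-(\cahy)$ then $s_1\leq x\tll s_2$ gives $s_1\tll s_2$ by push-up, again impossible, so $J^+(\cahy)\subseteq\cahy\sqcup I^+(\cahy)$; the reverse inclusion is clear ($\leq$ reflexive, $\tll\subseteq\leq$), so $J^+(\cahy)=\cahy\sqcup I^+(\cahy)$ and symmetrically for the past. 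Since $I^\mp(\cahy)=\bigcup_{s\in\cahy}I^\mp(s)$ is open by openness of $\tll$, its complement $J^\pm(\cahy)$ is closed, and $\cahy=J^+(\cahy)\cap J^-(\cahy)$ by (iii) is therefore closed as well. Finally $\overline{I^\pm(\cahy)}=I^\pm(\cahy)\cup\cahy$: each $s\in\cahy$ is a limit of points of $I^\pm(s)\subseteq I^\pm(\cahy)$ by the approximating property (Lemma~\ref{lemapprox}), giving $\supseteq$, while $I^\pm(\cahy)\cup\cahy=J^\pm(\cahy)$ is closed, giving $\subseteq$.

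The hard part is (iii), and more precisely the step that converts the merely causal relation $x\leq\gamma(t_0)$ produced by an inextendible curve into a timelike one $x\tll\gamma(t_0)$, which is what forces $x$ into the \emph{open} sets $I^\pm(\cahy)$ rather than only $J^\pm(\cahy)$. This is exactly where the approximating hypothesis is needed (to manufacture timelike directions at $x$), in tandem with Proposition~\ref{exisinex} (to keep the curve doubly-inextendible after grafting those timelike pieces) and push-up; the acausality of $\cahy$ and the exclusion of timelike loops, both consequences of the ``exactly once'' clause, then make the three pieces genuinely disjoint, which in turn drives the closedness statements in (iv).
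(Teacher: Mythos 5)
Your proof is correct and follows essentially the same route as the paper: doubly-inextendible extensions via Proposition~\ref{exisinex} and Corollary~\ref{decomp1} for (i)--(ii), a curve chosen timelike near the given point plus push-up for (iii), and openness of $I^\pm$ together with the approximating property and the decomposition for (iv). The only cosmetic difference is that you establish $J^\pm(\cahy)=\cahy\sqcup I^\pm(\cahy)$ directly from push-up and acausality before the topological conclusions, whereas the paper obtains the same identity by sandwiching $J^\pm(\cahy)$ between $I^\pm(\cahy)\sqcup\cahy$ and $\overline{I^\pm(\cahy)}$ using the approximating property.
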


\begin{proof}
 (i) Suppose $p < q$ and $p,q \in \cahy$. Then by causal path-connectedness, there exists a causal curve $\gamma \colon [a,b] \to X$ from $p$ to $q$. By Proposition~\ref{exisinex} there exists a maximal doubly-inextendible extension $\lambda$ of $\gamma$, which therefore intersects $\cahy$ more than once, a contradiction to $\cahy$ being a Cauchy set.
 
 (ii) Let $p \in X$. By Corollary~\ref{decomp1}, there exists a doubly-inextendible causal curve $\gamma$ through $p$. Since $\cahy$ is a Cauchy set, it intersects $\gamma$ exactly once, say at $\gamma(0)$. Hence $p \leq \gamma(0)$ or $\gamma(0) \leq p$.
 
 (iii) Let $p \in X$. By Lemma~\ref{pathapprox}, we can assume that the curve $\gamma$ in (ii) is timelike in a neighborhood of $p$. Then, if $p \leq \gamma(0)$, we have either $p = \gamma(0) \in \cahy$ or $p \tll \gamma(-\epsilon) < \gamma(0)$ for some $\epsilon > 0$, which then by the push-up Lemma~\ref{push-up} implies $p \tll \gamma(0)$. Applying the same argument in the case $\gamma(0) \leq p$, we conclude that $p \in I^-(\cahy) \cup \cahy \cup I^+(\cahy)$.
 
 It remains to be shown that the union is disjoint. Since $\cahy$ is acausal by (i) and the push-up property it is clear that $\cahy$ and $I^\pm(\cahy)$ are disjoint. Moreover, $I^+(\cahy)$ and $I^-(\cahy)$ are disjoint since otherwise, there exists a causal curve that intersects $\cahy$ at two different points or is closed timelike (if the points coincide), and hence contradicts $\cahy$ being Cauchy.
 
 (iv) The sets $I^\pm(\cahy) = \bigcup_{p \in \cahy} I^\pm(p)$ are unions of open sets and hence open. By (iii) is $\cahy = X \setminus (I^-(\cahy) \sqcup I^+(\cahy))$ the complement of open sets and thus closed. By the same argument is $X \setminus I^\mp(\cahy) = I^\pm(\cahy) \sqcup \cahy$ closed, and hence
 \[
 \overline{I^\pm(\cahy)} \subseteq \overline{I^\pm(\cahy) \sqcup \cahy} = I^\pm(\cahy) \sqcup \cahy \subseteq J^\pm(\cahy) \subseteq \overline{I^\pm(\cahy)},
 \]
 where the last inclusion is due to the assumption that $X$ is approximating. Thus $I^\pm(\cahy) \sqcup \cahy = J^\pm(\cahy) = \overline{I^\pm(\cahy)}$, and hence closed.
\end{proof}

Closedness of Cauchy sets, in particular, will be key in establishing (iv) $\Longrightarrow$ (ii) of Theorem~\ref{ghthm}. In the remaining lemmas of this subsection we prove this implication in steps.

\begin{lem} \label{nti}
 Let $(X,d,\tll,\leq,\tau)$ be an approximating Lorentzian pre-length space with limit curves. Suppose, in addition, that $(X,d)$ is proper. If $X$ contains a Cauchy set $\cahy$, then $X$ is non-totally imprisoning.
\end{lem}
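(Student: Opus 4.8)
The plan is to argue by contradiction, using the characterizations of non-total imprisonment from Theorem~\ref{cor315}. If $X$ were totally imprisoning, then there would be a compact set $K\subseteq X$ and a doubly-inextendible causal curve contained in $K$; parametrizing it by $d$-arclength (the remark after Definition~\ref{defcauscurv}) and invoking Lemma~\ref{lem312} at both ends, we may take it to be of the form $\gamma\colon\real\to K$. Since $\cahy$ is a Cauchy set, $\gamma$ meets $\cahy$ exactly once, so after a shift we may assume $\gamma(0)\in\cahy$ and $\gamma(t)\notin\cahy$ for all $t\neq 0$. Then for $t>0$ we have $\gamma(t)\in J^+(\gamma(0))\subseteq J^+(\cahy)$, and since $\gamma(t)\notin\cahy$, Proposition~\ref{cahyprop}(iv) gives $\gamma(t)\in I^+(\cahy)$; symmetrically $\gamma(t)\in I^-(\cahy)$ for $t<0$.

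The key idea is to \emph{push the crossing point to infinity}. I would consider the re-centered curves $\gamma_n(s):=\gamma(s+2n)$ for $s\in[-n,n]$; these are future-directed causal, $d$-arclength parametrized, contained in $K$, and each half $\gamma_n|_{[0,n]}$, $\gamma_n|_{[-n,0]}$ has $d$-length $n\to\infty$. Applying the limit curve theorems to the future half and (after reversing orientation) the past half separately — exactly as in the proof of Theorem~\ref{cor315}, using compactness of $K$ to extract a convergent subsequence of the basepoints $\gamma_n(0)=\gamma(2n)$ — and gluing the two limit curves yields a doubly-inextendible causal curve $\gamma_\infty\colon\real\to K$ which is the local uniform limit of a subsequence of $(\gamma_n)_n$. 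The point of this construction is that for each fixed $s$ we have $s+2n>0$ for all large $n$, hence $\gamma_n(s)\in I^+(\cahy)$; since $J^+(\cahy)=\overline{I^+(\cahy)}$ is closed (Proposition~\ref{cahyprop}(iv)), this forces $\gamma_\infty(s)\in J^+(\cahy)$ for \emph{every} $s\in\real$.

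With this in hand the contradiction is immediate. Being doubly-inextendible, $\gamma_\infty$ meets $\cahy$ exactly once, say at $\gamma_\infty(s_0)$. Picking any $s<s_0$ we get $\gamma_\infty(s)\in J^-(\gamma_\infty(s_0))\subseteq J^-(\cahy)$ and $\gamma_\infty(s)\notin\cahy$, so $\gamma_\infty(s)\in I^-(\cahy)$ by Proposition~\ref{cahyprop}(iv); but we also have $\gamma_\infty(s)\in J^+(\cahy)=I^+(\cahy)\sqcup\cahy$, so $\gamma_\infty(s)\in I^+(\cahy)$. This contradicts the disjointness $X=I^-(\cahy)\sqcup\cahy\sqcup I^+(\cahy)$ of Proposition~\ref{cahyprop}(iii). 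Hence $X$ is non-totally imprisoning.

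I expect the main obstacle to be the construction of the doubly-inextendible limit curve $\gamma_\infty$ and the verification that it is genuinely doubly-inextendible: this requires applying Theorem~\ref{thm314} to both halves and joining them, which is the same technical point already dealt with in the proof of Theorem~\ref{cor315}; once the inclusion $\gamma_\infty(\real)\subseteq J^+(\cahy)$ is established, the rest is routine. A secondary point to keep straight is that ``intersects $\cahy$ exactly once'' must be read at the level of parameter values, so that $\gamma(t)\notin\cahy$ for all $t\neq 0$ — this is precisely what allows us to place each $\gamma(t)$ ($t\neq 0$) in the \emph{open} set $I^+(\cahy)$ or $I^-(\cahy)$ rather than merely in $J^\pm(\cahy)$.
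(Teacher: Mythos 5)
Your argument is correct and uses the same toolkit as the paper (Theorems~\ref{cor315} and \ref{thm314}, Proposition~\ref{cahyprop}), but the route is genuinely different. The paper keeps the trapped curve $\gamma$ fixed and studies its future ray: either the ray stays a positive distance from $\cahy$, in which case the construction from Theorem~\ref{cor315} produces a doubly-inextendible causal curve missing $\cahy$; or it accumulates at some $p\in\cahy$, in which case a single application of the past version of Theorem~\ref{thm314} to the backwards segments based at $\gamma(s_n)\to p$ gives a past-inextendible limit curve trapped in $J^+(\cahy)\cap J^-(\cahy)=\cahy$, contradicting acausality (Proposition~\ref{cahyprop}(i)). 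You instead translate $\gamma$ so that its unique crossing recedes to parameter $-\infty$, rerun the two-sided splitting-and-gluing construction of Theorem~\ref{cor315} on the translates to obtain a doubly-inextendible limit curve contained in $\overline{I^+(\cahy)}=J^+(\cahy)$, and contradict the disjoint decomposition of Proposition~\ref{cahyprop}(iii) at its forced crossing of $\cahy$. What your version buys is that it avoids the paper's dichotomy about whether the future ray approaches $\cahy$; what it costs is repeating the full two-limit-curve gluing, which the paper only needs to invoke in one branch, and the paper's contradiction lands on acausality rather than on disjointness. Your closing caveat about reading ``intersects exactly once'' at the level of parameter values is harmless either way: the paper itself uses the image reading $\gamma\cap\cahy=\{\gamma(0)\}$, and if the curve revisited that single point, any intermediate point distinct from it would lie in $I^+(\cahy)\cap I^-(\cahy)$, which Proposition~\ref{cahyprop} already rules out, so your argument goes through under either reading.
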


\begin{proof}
 Suppose $X$ is not non-totally imprisoning. Then by Theorem~\ref{cor315}, there is a compact set $K \subseteq X$ and a doubly-inextendible future-directed causal curve $\gamma \colon \real \to K$. By Definition \ref{defcahy} we know that, without loss of generality, $\gamma \cap \cahy = \{\gamma(0)\}$.
 
 First we show that $\tilde{\gamma} := \gamma \vert_{[1,\infty)}$ must come arbitrarily close to $\cahy$, that is, there exist parameter values $s_n$ such that $d(\tilde{\gamma}(s_n),\cahy) < 1/n$, for all $n \in \nat$. Suppose this is not the case: then there exists a $\delta$ such that $d(\tilde{\gamma}(s),\cahy) > \delta$ for all $s \in [1,\infty)$, and hence $\tilde{\gamma}$ is contained in the compact set $K \setminus \cahy_\delta$, where $\cahy_\delta := \{ x \in X \mid d(x,\cahy) < \delta \}$. But then, since $\tilde\gamma$ is future-inextendible, by the proof of Theorem~\ref{cor315}, there also exists a doubly-inextendible causal curve contained in $K \setminus \cahy_\delta$, in contradiction to $\cahy$ being a Cauchy set. Thus we conclude that there must exist a sequence $(s_n)_n$ such that $d(\tilde{\gamma}(s_n),\cahy) \to 0$. Moreover, since $\cahy$ is closed by Proposition~\ref{cahyprop} (iv), $K \cap \cahy$ is compact, and hence $\left(\tilde{\gamma}(s_n)\right)_n$ converges, up to a subsequence, to some point $p \in \cahy$. This implies that $s_n \to \infty$, because if $s_\infty = \lim s_n \in [1,\infty)$, then $\gamma(s_\infty) \in \cahy$, which is not allowed since already $\gamma(0) \in \cahy$ and $\cahy$ is a Cauchy set. 

 Having established the existence of a sequence $s_n \to \infty$ in $\real$ such that $\gamma(s_n) \to p \in \cahy$, we may consider the sequence of past-directed causal curves $\lambda_n : [0,s_n] \to X$ given by
 \[
  \lambda_n(s) = \gamma (s_n - s).
 \]
 By Lemma \ref{lem312}, $\gamma$ has infinite arclength, and thus $L^d(\lambda_n) \to \infty$. Since $\lambda_n(0) \to p$, we can apply Theorem \ref{thm314} to find a past-directed limit curve $\lambda : [0,\infty) \to X$ with $\lambda(0) = p$, hence $\lambda([0,\infty)) \subseteq J^-(p) \subseteq J^-(\cahy)$. On the other hand, we have that $\gamma([0,\infty)) \subseteq J^+(\gamma(0)) \subseteq J^+(\cahy)$, therefore $\lambda_n([0,s_n]) \subseteq J^+(\cahy)$. By Proposition~\ref{cahyprop}(iv), $J^+(\cahy)$ is closed, so as a limit curve, $\lambda$ is contained in it. But then $\lambda([0,\infty)) \subseteq J^-(\cahy) \cap J^+(\cahy)$. This is a contradiction since, by Proposition~\ref{cahyprop} (iv), $J^-(\cahy) \cap J^+(\cahy) = \cahy$, and by Proposition~\ref{cahyprop} (i), Cauchy surfaces are acausal.
\end{proof}

Finally, the last two lemmas prove compactness of $\cpq(p,q)$. They are adapted from Geroch's original proof for smooth spacetimes \cite{Ger}. Similarly to $\cpq(p,q)$, we denote by $\cpq(p,\cahy)$ the space of future-directed causal curves from a point $p \in X$ to a Cauchy set $\cahy$, equipped with the Hausdorff distance.

\begin{lem} \label{CpS}
 Let $(X,d,\tll,\leq,\tau)$ be an approximating Lorentzian pre-length space with limit curves. Suppose, in addition, that $(X,d)$ is proper. Let $\cahy$ be a Cauchy surface in $X$. Then for all $p \in X$, the space $\cpq(p,\cahy)$ is compact.
\end{lem}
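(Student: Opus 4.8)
The plan is to show that $\cpq(p,\cahy)$ is sequentially compact, which suffices since the Hausdorff distance makes it a metric space. So I take an arbitrary sequence $(\gamma_n)_n$ of future-directed causal curves from $p$ to $\cahy$, parametrize each by $d$-arclength as $\gamma_n\colon[0,L_n]\to X$ (allowed by the remark following Definition~\ref{defcauscurv}), and aim to extract a subsequence converging in the Hausdorff distance to some $\gamma\in\cpq(p,\cahy)$. First I would dispose of the trivial case $p\in\cahy$ (then every curve is the constant-image point $\{p\}$, vacuously) and assume $p\notin\cahy$, so by Proposition~\ref{cahyprop}(iii) either $p\in I^-(\cahy)$ or $p\in I^+(\cahy)$; since $\gamma_n$ is future-directed from $p$ and hits $\cahy$, necessarily $p\in I^-(\cahy)$ (using acausality and push-up to rule out $p\in I^+(\cahy)$).

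The crucial preliminary is a uniform bound on the lengths $L_n$. Here I would argue by contradiction using the limit curve theorem for inextendible curves: if $L_n\to\infty$ along a subsequence, then since $d$ is proper and all $\gamma_n(0)=p$, Theorem~\ref{thm314} produces a future-inextendible causal limit curve $\gamma_\infty\colon[0,\infty)\to X$ starting at $p$, obtained as a locally uniform limit of a subsequence $(\gamma_{n_k})_k$. I then extend $\gamma_\infty$ backward to a doubly-inextendible curve $\lambda$ through $p$ via Corollary~\ref{decomp1} (or Proposition~\ref{exisinex}), so $\lambda$ meets $\cahy$ exactly once. But I claim $\gamma_\infty$ cannot meet $\cahy$: each $\gamma_{n_k}$ meets $\cahy$ only at its endpoint $\gamma_{n_k}(L_{n_k})$, and locally uniform convergence on $[0,\infty)$ means that on any fixed $[0,T]$ the curves $\gamma_{n_k}|_{[0,T]}$ converge uniformly to $\gamma_\infty|_{[0,T]}$, while for large $k$ the endpoint $L_{n_k}>T$, so $\gamma_{n_k}([0,T])$ is disjoint from $\cahy$; since $\cahy$ is closed (Proposition~\ref{cahyprop}(iv)), the limit $\gamma_\infty([0,T])$ also avoids $\cahy$ — but wait, this needs care because the $\gamma_{n_k}|_{[0,T]}$ might re-enter a neighborhood of $\cahy$. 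More robustly: $p\in I^-(\cahy)$, so by Proposition~\ref{cahyprop}(iv), $p\in I^-(\cahy)\subseteq X\setminus J^+(\cahy)$ which is open; actually the clean argument is that $\gamma_\infty([0,\infty))\subseteq\overline{\bigcup_k\gamma_{n_k}([0,L_{n_k}))}\subseteq\overline{X\setminus J^+(\cahy)}=X\setminus I^+(\cahy)$ by Proposition~\ref{cahyprop}(iii)–(iv), and since $\gamma_\infty$ starts at $p\notin\cahy$ with $p\in I^-(\cahy)$, and $\gamma_\infty$ stays in $I^-(\cahy)\sqcup\cahy$, if it ever reached $\cahy$ it would have to stay on $\cahy$ afterward by acausality — impossible for a non-locally-constant causal curve. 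So $\lambda$ meets $\cahy$ in its past part but $\gamma_\infty$ never does, contradicting nothing by itself; the real contradiction is: $\lambda$ is doubly-inextendible and must meet $\cahy$ exactly once, which it does in the backward extension, fine — so I instead derive the contradiction from the fact that $\gamma_\infty$ being future-inextendible and entirely in $I^-(\cahy)$ forces, via extending it forward past where it would hit $\cahy$... Let me restructure: the clean contradiction is that a future-inextendible causal curve contained in $I^-(\cahy)$ cannot exist when combined with the $\gamma_n$'s actually reaching $\cahy$ — more precisely, pick any point $x\in\cahy$ with $\gamma_n(L_n)\to x$ along a subsequence (possible by properness: the endpoints lie in a bounded set since... actually they needn't be bounded). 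This is the main obstacle and I address it below.

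Granting the length bound $L_n\le C$, I pass to a subsequence so that the uniformly Lipschitz (indeed $1$-Lipschitz) curves $\gamma_n\colon[0,C]\to X$ (reparametrized by stretching $[0,L_n]$ to $[0,C]$, keeping Lipschitz constant $\le 1$) accumulate at $p$ at parameter $0$, apply Theorem~\ref{thm37} to get a uniform limit $\gamma\colon[0,C]\to X$, which is future-directed causal if non-constant. Since all $\gamma_n$ start at $p$ and end at a point $\gamma_n(C)\in\cahy$ with $d(p,\gamma_n(C))\le C$, properness gives that the endpoints lie in a compact ball, so a further subsequence has $\gamma_n(C)\to q$; since $\cahy$ is closed, $q\in\cahy$, and $q\neq p$ because $p\notin\cahy$, so $\gamma$ is non-constant and hence a future-directed causal curve from $p$ to $q\in\cahy$. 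Uniform convergence of parametrized curves implies Hausdorff convergence of their images, so $\gamma_n\to\gamma$ in $\cpq(p,\cahy)$. The only remaining point is that $\gamma$ meets $\cahy$ \emph{only} at $q$: but by Proposition~\ref{cahyprop}(iii) and acausality, a causal curve from $p\in I^-(\cahy)$ to $q\in\cahy$ lies in $I^-(\cahy)$ until it reaches $\cahy$, and once on $\cahy$ it cannot move (acausality), so $q$ is the unique intersection — thus $\gamma\in\cpq(p,\cahy)$, completing sequential compactness.

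The main obstacle is the length bound $L_n\le C$; with it, everything else is a routine application of Theorem~\ref{thm37} and the basic properties of Cauchy sets. The bound itself is where all the structure (properness, closedness of $\cahy$ via Proposition~\ref{cahyprop}, the decomposition $X=I^-(\cahy)\sqcup\cahy\sqcup I^+(\cahy)$, and the limit curve theorem for inextendible curves) must be combined, and the delicate step there is showing that the would-be inextendible limit curve $\gamma_\infty$ stays inside $I^-(\cahy)$ forever and hence, upon extending it doubly-inextendibly, produces a doubly-inextendible causal curve that fails to meet $\cahy$ at all (its backward extension meets $\cahy$? no — $p\in I^-(\cahy)$ and the backward extension goes further into the past of $p$, still in $I^-(\cahy)$; so the whole doubly-inextendible extension avoids $\cahy$), directly contradicting Definition~\ref{defcahy}. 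Hence no such subsequence with $L_n\to\infty$ exists, the $L_n$ are bounded, and the proof goes through.
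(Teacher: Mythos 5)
Your proposal is correct and follows essentially the same route as the paper: split on whether the $d$-lengths of the curves are bounded, handle the bounded case with Theorem~\ref{thm37} plus closedness of $\cahy$ from Proposition~\ref{cahyprop}, and in the unbounded case use Theorem~\ref{thm314} and Proposition~\ref{exisinex} together with the Cauchy-set properties to reach a contradiction. The only difference is cosmetic: the paper notes each $\gamma_n$ lies in the closed set $J^-(\cahy)$, lets the doubly-inextendible extension meet $\cahy$ and then contradicts acausality because the future tail is trapped in $J^-(\cahy)\cap J^+(\cahy)=\cahy$, whereas you argue (with the same acausality and push-up facts) that the extension misses $\cahy$ altogether, contradicting Definition~\ref{defcahy} — same ingredients, same conclusion.
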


\begin{proof}
 Suppose $\cpq(p,\cahy) \neq \emptyset$, otherwise the statement is trivial. In particular, this means that $p \in J^-(\cahy)$. The set $\cpq(p,\cahy)$ together with the Hausdorff distance $d_H$ is a metric space, and hence it remains to prove sequential compactness. Let $(\gamma_n)_n$ be a sequence in $\cpq(p,\cahy)$. We distinguish between two cases:
 \begin{enumerate}
  \item[1.] There exists a constant $C>0$ such that $L^d(\gamma_n) < C$ for all $n \in \nat$. Assume that all $\gamma_n$ are defined on $[0,1]$. Then by Theorem \ref{thm37}, a subsequence of $(\gamma_n)_n$ converges uniformly (and thus also with respect to $d_H$) to a future-directed causal limit curve $\gamma \colon [0,1] \to X$. Since $\gamma_n(0) = p$, also $\gamma(0) = p$. Moreover, $\gamma_n(1) \in \cahy$ for all $n \in \nat$, and $\cahy$ is closed by Proposition~\ref{cahyprop} (iv), thus also $\gamma(1) \in \cahy$. Hence $\gamma \in \cpq(p,\cahy)$.
  \item[2.] On the other hand, assume that $L^d(\gamma_n) \to \infty$ and all $\gamma_n$ are parametrized by $d$-arclength. By Theorem~\ref{thm314} a subsequence of $(\gamma_n)_n$ converges locally uniformly to a future-inextendible causal limit curve $\gamma \colon [0,\infty) \to X$. Since $\gamma_n \subseteq J^-(\cahy)$, and by Proposition~\ref{cahyprop}(iv), $J^-(\cahy)$ is closed, we have $\gamma \subseteq J^-(\cahy)$. By Proposition~\ref{exisinex}, there exists a doubly-inextendible extension $\tilde\gamma \colon \real \to X$ of $\gamma$, which by transitivity of $\leq$ is also contained in $J^-(\cahy)$. Then $\tilde\gamma$ must intersect $\cahy$, say at $\tilde\gamma (s_0)$. This implies that for all $s > s_0$, $\gamma(s) \in J^-(\cahy) \cap J^+(\cahy) = \cahy$. Moreover, $\gamma\vert_{(s_0,\infty)}$ cannot be constant, because then by Lemma~\ref{lem312}, $\gamma$ would be future-extendible. Since by Proposition~\ref{cahyprop} (i) $\cahy$ is acausal, we arrive at a contradiction. \qedhere
 \end{enumerate}
\end{proof}

\begin{lem} \label{chcpqcompact}
 Let $(X,d,\tll,\leq,\tau)$ be an approximating Lorentzian pre-length space with limit curves. Suppose, in addition, that $(X,d)$ is proper. If $X$ contains a Cauchy set $\cahy$, then $\cpq(p,q)$ is compact for all $p,q \in X$.
\end{lem}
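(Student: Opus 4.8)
The plan is to prove that $(\cpq(p,q),d_H)$ is sequentially compact. If $\cpq(p,q)=\emptyset$ there is nothing to show, and if $p=q$ then also $\cpq(p,q)=\emptyset$, since a closed causal curve would violate causality of $X$ (which holds by Lemma~\ref{nti} together with Proposition~\ref{cau-nti}). So I would assume $p<q$ and $\cpq(p,q)\neq\emptyset$.

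The heart of the matter is to produce a \emph{single} compact set $K=K_{p,q}\subseteq X$ that contains the image of every $\gamma\in\cpq(p,q)$. Granting this, the rest is quick: if $(\gamma_n)_n\subseteq\cpq(p,q)$ then all $\gamma_n$ lie in $K$, so non-total imprisonment (Lemma~\ref{nti}) provides $C>0$ with $L^d(\gamma_n)\leq C$ for all $n$; reparametrizing each $\gamma_n$ on $[0,1]$ proportionally to $d$-arclength makes the family uniformly $C$-Lipschitz, and Theorem~\ref{thm37} then extracts a subsequence converging uniformly -- hence also in $d_H$ -- to a Lipschitz curve $\gamma\colon[0,1]\to X$ with $\gamma(0)=p$, $\gamma(1)=q$. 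Since $p\neq q$, this $\gamma$ is non-constant, hence future-directed causal by Theorem~\ref{thm37}, so $\gamma\in\cpq(p,q)$ and $\gamma_{n_k}\to\gamma$ in $\cpq(p,q)$.

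To construct $K$, I would use the Cauchy set $\cahy$ to cut curves into two halves. First, for any $x\in X$ with $\cpq(x,\cahy)\neq\emptyset$, the set $A_x:=\bigcup\{\operatorname{Im}\sigma\mid\sigma\in\cpq(x,\cahy)\}$ has compact closure: by Lemma~\ref{CpS}, $\cpq(x,\cahy)$ is $d_H$-compact and thus $d_H$-bounded, so fixing $\sigma_0\in\cpq(x,\cahy)$ (whose image is compact and contains $x$), every $\sigma\in\cpq(x,\cahy)$ obeys $\sup_{y\in\operatorname{Im}\sigma}d(y,\operatorname{Im}\sigma_0)\leq d_H(\sigma,\sigma_0)$, which bounds $d(x,\cdot)$ uniformly on $A_x$; as $(X,d)$ is proper, $\overline{A_x}$ is compact. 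By the time-dual of Lemma~\ref{CpS} the same holds for $B_y:=\bigcup\{\operatorname{Im}\sigma\mid\sigma\in\cpq(\cahy,y)\}$, the space of future-directed causal curves from $\cahy$ to $y$. Now fix $\gamma\in\cpq(p,q)$, regard it as $\gamma\colon[0,1]\to X$, and extend it via Proposition~\ref{exisinex} (first to the future, then to the past) to a doubly-inextendible causal curve $\tilde\lambda$ with $\tilde\lambda|_{[0,1]}=\gamma$. As $\cahy$ is a Cauchy set, $\tilde\lambda$ meets $\cahy$ exactly once, at $\tilde\lambda(s_0)$. If $s_0\geq1$, then $\tilde\lambda|_{[0,s_0]}\in\cpq(p,\cahy)$ and $\operatorname{Im}\gamma\subseteq A_p$; if $s_0\leq0$, then $\tilde\lambda|_{[s_0,1]}\in\cpq(\cahy,q)$ and $\operatorname{Im}\gamma\subseteq B_q$; if $0<s_0<1$, then $\gamma$ itself crosses $\cahy$ at $\gamma(s_0)$ and splitting there gives $\operatorname{Im}\gamma\subseteq A_p\cup B_q$. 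In all cases $\operatorname{Im}\gamma\subseteq K:=\overline{A_p}\cup\overline{B_q}$, which is compact and independent of $\gamma$ (with $A_p$, resp.\ $B_q$, understood to be empty when the corresponding curve space is, which is consistent with the case distinction).

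I expect the main obstacle to be the bookkeeping in this last step: verifying that the doubly-inextendible extension genuinely restricts to $\gamma$, that the pieces cut off at $\cahy$ are \emph{bona fide} non-degenerate causal curves with the correct endpoints (so that they really belong to $\cpq(p,\cahy)$ or $\cpq(\cahy,q)$), and handling the borderline positions of $s_0$ together with the cases in which $p$ or $q$ already lies on $\cahy$. All of this rests on the acausality of $\cahy$ and the closedness of $\cahy$ and $J^\pm(\cahy)$ (Proposition~\ref{cahyprop}(i),(iv)) and on the absence of closed causal curves in $X$.
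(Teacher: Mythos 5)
Your argument is correct, but it takes a genuinely different route from the paper's. The paper decomposes according to the position of $p$ and $q$ in $X = I^-(\cahy) \sqcup \cahy \sqcup I^+(\cahy)$ (Proposition~\ref{cahyprop}) and then works with the sequence $(\gamma_n)_n$ itself: depending on the case it views the $\gamma_n$ directly as elements of $\cpq(\cahy,q)$, prepends a fixed timelike curve from $\cahy$ to $p$, or extends each $\gamma_n$ via Proposition~\ref{exisinex}, splits it at its unique intersection with $\cahy$, applies Lemma~\ref{CpS} (and its time dual) to the two halves, and rejoins the limits. You instead use Lemma~\ref{CpS} only once, and only for its soft consequence that $\cpq(p,\cahy)$ and $\cpq(\cahy,q)$ are $d_H$-bounded, which together with properness yields a single compact container $K \supseteq \operatorname{Im}\gamma$ for every $\gamma \in \cpq(p,q)$; then one application of non-total imprisonment (Lemma~\ref{nti}) plus the basic limit curve theorem~\ref{thm37} finishes the proof. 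What your approach buys is modularity: the limit-curve extraction happens once, there is no need to argue that a Hausdorff limit of concatenations is the concatenation of the limits (the delicate point in the paper's case 2) nor that the $\cahy$-crossing of the extension lies on $\gamma_n$ itself (the paper's case 3) -- in your setup the crossing may lie off $\gamma$, and then the whole of $\gamma$ simply sits in one of the two container sets. It also isolates a statement of independent interest, namely that all causal curves between $p$ and $q$ lie in one compact set, which is close in spirit to the implication (iv)~$\Longrightarrow$~(i). The price is the extra (routine) bookkeeping you flag yourself -- degenerate pieces when $p$ or $q$ lies on $\cahy$, the borderline values of $s_0$, and the convention $A_p=\emptyset$ when $\cpq(p,\cahy)=\emptyset$ -- all of which are handled by the acausality and closedness statements of Proposition~\ref{cahyprop}, exactly as you indicate, and a slightly heavier reliance on properness, which is in any case a standing hypothesis here and is already needed for Lemma~\ref{CpS} and Proposition~\ref{exisinex}.
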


\begin{proof}
 By Lemma~\ref{nti}, if $X$ contains a Cauchy set  $\cahy$, then $X$ is non-totally imprisoning and hence causal. Thus we may assume without loss of generality that $p < q$ so that $\cpq(p,q)$ is nontrivial. We prove sequential compactness, with $(\gamma_n)_n$ always denoting a sequence in $\cpq(p,q)$. In view of Proposition~\ref{cahyprop} (i) and (iii), we can distinguish three cases:
 \begin{itemize}
  \item[1.] $p \in \cahy$ and $q \in I^+(\cahy)$ (or analogously, $q \in \cahy$ and $p \in I^-(\cahy)$): Then $(\gamma_n)_n$ can be seen as a sequence in $\cpq(\cahy,q)$. By Lemma \ref{CpS}, there exists a limit curve $\gamma$ in $\cpq(\cahy,q)$. But since $\gamma_n$ starts at $p$ for all $n$, also $\gamma$ must start at $p$, hence it is an element of $\cpq(p,q)$.
  \item[2.] $p,q \in I^+(\cahy)$ (or analogously, $p,q \in I^-(\cahy)$): There exists a future-directed timelike curve $\lambda$ from $\cahy$ to $p$. Construct a new sequence $(\tilde\gamma_n)_n$ by concatenating $\lambda$ with $\gamma_n$. Then $(\tilde\gamma_n)_n$ is a sequence in $\cpq(\cahy,q)$ and thus has a causal limit curve $\tilde\gamma$ by Lemma~\ref{CpS}. Because of how the sequence was constructed, $\tilde\gamma$ must be the concatenation of $\lambda$ with a causal curve $\gamma \in \cpq(p,q)$ which is the Hausdorff limit of $(\gamma_n)_n$.
  \item[3.] $p \in I^-(\cahy)$ and $q \in I^+(\cahy)$: By Proposition~\ref{exisinex} we can extend each $\gamma_n$, and the maximal extension must intersect $\cahy$ exactly once, say at $\gamma_n(0)$. Because of Proposition~\ref{cahyprop} (iv), $\gamma_n(0)$ cannot lie to the past of $p$ nor to the future of $q$, hence it must in fact lie on $\gamma_n$. Consider the sequence $(\bar\gamma_n)_n$ where $\bar\gamma_n$ is the restriction of $\gamma_n$ from $p$ to $\gamma_n(0)$. By Lemma~\ref{CpS} a subsequence $(\bar\gamma_{n_k})_k$ converges to a limit curve $\bar\gamma$ in $\cpq(p,\cahy)$. Similarly, consider the sequence $(\tilde\gamma_{n_k})_k$ of the restrictions of the original curves $\gamma_{n_k}$ from $\gamma_{n_k}(0)$ to $q$. By Lemma~\ref{CpS}, we may assume it converges to a limit curve $\tilde\gamma$ in $\cpq(\cahy,q)$. Since by construction the endpoints of $\bar\gamma$ and $\tilde\gamma$ agree on $\cahy$, we can join them to obtain a limit causal curve of (a subsequence of) the original sequence $(\gamma_n)_n$ from $p$ to $q$. \qedhere
 \end{itemize}
\end{proof}

\subsection{Proof of Theorem~\ref{ghthm}}

We prove Theorem~\ref{ghthm} in several steps. The implications (ii) $\Longrightarrow$ (i) and (iii) $\Longrightarrow$ (iv) are straightforward. The most involved step (iv) $\Longrightarrow$ (ii) follows from our results in Section~\ref{ssec:cahy} about properties of Cauchy sets. Finally, for the implication (i) $\Longrightarrow$ (iii) we show that the averaged volume functions of Section~\ref{volfcts} have additional properties on globally hyperbolic Lorentzian pre-length spaces.

\begin{proof}[Proof of Theorem~\ref{ghthm}]
 (ii) $\Longrightarrow$ (i) 
 By (ii), $X$ is already non-totally imprisoning, and thus it remains to be shown that the causal diamonds $J(p,q)$ are compact for all $p\leq q$. If $p = q$, then $J(p,q) = \{p\}$ because non-total imprisonment implies causality. Suppose that $p<q$. Let $(x_n)_n$ be any sequence in $J(p,q)$. By causal path-connectedness of $X$, every $x_n$ lies on a causal curve $\gamma_n \colon [0,1] \to X$ from $p$ to $q$. By (ii), the space of curves $\cpq(p,q)$ is compact, and hence a subsequence $(\gamma_{n_k})_k$ of $(\gamma_n)_n$ converges to a causal curve $\gamma \in \cpq(p,q)$ in the Hausdorff sense. In particular, for the corresponding points, $d(x_{n_k},\gamma) \to 0$ as $k \to \infty$. Since $\gamma$ itself is compact, a subsequence of $(x_{n_k})_k$ must converge to a point on $\gamma$. Hence $J(p,q)$ is compact.
 
 (iii) $\Longrightarrow$ (iv) Suppose $t$ is a Cauchy time function. Then the level sets of $t$ are Cauchy sets: For $s \in \real$ consider the preimage $\cahy = t^{-1}(\{ s \})$. Since $t$ is a time function, any future-directed causal curve $\gamma$ intersects $\cahy$ at most once. If $\gamma$ is furthermore doubly-inextendible, and since $t$ is Cauchy, $\ima(t \circ \gamma) = \real$ and thus $\cahy \cap \gamma \neq \emptyset$. Hence doubly-inextendible causal curves intersect $\cahy$ at exactly one point.
 
 (iv) $\Longrightarrow$ (ii) Suppose $X$ admits a Cauchy set. By Lemma~\ref{nti}, $X$ is non-totally imprisoning. By Lemma~\ref{chcpqcompact}, the set of future-directed causal curves $\cpq(p,q)$ between $p$ and $q$ is compact for any $p,q \in X$.
 
 (i) $\Longrightarrow$ (iii)
 Let $t^\pm$ be given by Definition~\ref{volfct} and define
 \[ 
  t := \ln\left(-\frac{t^-}{t^+}\right).
 \]
 By (i), $X$ is globally is hyperbolic, and thus it follows immediately from Lemma~\ref{ghvolfcts} (see below) that $t$ is a Cauchy time function. Note that here we are using the assumption of second countability in order to equip $(X,d)$ with a Borel probability measure of full support (see Proposition \ref{sep}).
\end{proof}

We finish the remaining parts of the proof of (i) $\Longrightarrow$ (iii). By definition, $I^\pm(p) \subseteq J^\pm(p)$ for all points $p$ in a Lorentzian pre-length space $X$. If $X$ is approximating then furthermore $J^\pm(p) \subseteq \overline{I^\pm(p)}$, and thus $\overline{I^\pm(p)} = \overline{J^\pm(p)}$. If $X$ is globally hyperbolic, we can say even more, which will allow us to apply our results of Section~\ref{volfcts}.

\begin{prop} \label{gh-cc}
 Let $(X,d,\tll,\leq,\tau)$ be a causally path-connected, approximating Lorentzian pre-length space.
 If $X$ is globally hyperbolic, then $X$ is \emph{causally simple\/}, meaning that $J^\pm(p)$ is closed and thus $J^\pm(p) = \overline{I^\pm(p)}$ for all $p \in X$. Moreover, if $X$ is causally simple, then $X$ is causally continuous.
\end{prop}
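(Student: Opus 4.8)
The plan is to prove the two implications separately, in each case reducing everything to the approximating hypothesis — which gives $I^\pm(p)\subseteq J^\pm(p)\subseteq\overline{I^\pm(p)}$ for all $p$, hence the identity $\overline{I^\pm(p)}=\overline{J^\pm(p)}$ — together with the push-up Lemma~\ref{push-up}. No further structure (curves, limit curves, properness) will be needed.

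For the first statement I would first observe that a globally hyperbolic space is in particular non-totally imprisoning, so by Proposition~\ref{cau-nti} (using causal path-connectedness) it is causal; it then remains to show that $J^+(p)$ and $J^-(p)$ are closed for every $p$. I would prove closedness of $J^+(p)$ directly: given $q_n\in J^+(p)$ with $q_n\to q$, the approximating property guarantees $I^+(q)\neq\emptyset$ (since $q\in J^+(q)\subseteq\overline{I^+(q)}$), so I may pick a point $q'$ with $q\tll q'$. Then $I^-(q')$ is an open neighbourhood of $q$, so $q_n\tll q'$, and hence $q_n\leq q'$, for all large $n$; together with $p\leq q_n$ this gives $q_n\in J(p,q')$ eventually. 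Since $J(p,q')$ is compact by global hyperbolicity it is closed, so $q\in J(p,q')\subseteq J^+(p)$. The argument for $J^-(p)$ is symmetric, using the past-approximating property to produce a point $q'\tll q$. Closedness of $J^\pm(p)$ combined with the approximating inclusions then yields $J^\pm(p)=\overline{I^\pm(p)}$, so $X$ is causally simple.

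For the second statement, assume $X$ is causally simple, i.e.\ causal with each $J^\pm(p)$ closed, so that $J^\pm(p)=\overline{I^\pm(p)}$ as before; I must show $X$ is reflecting and distinguishing. For past reflecting: if $I^+(p)\subseteq I^+(q)$, taking closures gives $J^+(p)=\overline{I^+(p)}\subseteq\overline{I^+(q)}=J^+(q)$; since $p\in J^+(p)\subseteq J^+(q)$ we get $q\leq p$, and then for any $x\in I^-(q)$ we have $x\tll q\leq p$, so $x\in I^-(p)$ by push-up, i.e.\ $I^-(q)\subseteq I^-(p)$. Future reflecting is symmetric. For distinguishing: if $I^-(p)=I^-(q)$, then $J^-(p)=\overline{I^-(p)}=\overline{I^-(q)}=J^-(q)$, whence $p\leq q$ and $q\leq p$, forcing $p=q$ by causality; likewise from $I^+(p)=I^+(q)$. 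Hence $X$ is reflecting and distinguishing, i.e.\ causally continuous.

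The bulk of the argument is routine once the approximating identity $\overline{I^\pm(p)}=J^\pm(p)$ and push-up are in hand. The one step that needs a little care is the closedness argument in the first part: one cannot simply extract a limit curve (no such hypotheses are assumed here), so instead one must manufacture the auxiliary point $q'$ with $q\tll q'$ in order to confine the tail of $(q_n)$ inside a \emph{compact} causal diamond $J(p,q')$ — and this is precisely the place where the approximating hypothesis is indispensable. I also note that causality, which is needed for the distinguishing property, is supplied for free in the first statement (via non-total imprisonment and Proposition~\ref{cau-nti}) and is part of causal simplicity in the second.
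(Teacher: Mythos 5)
Your proof is correct and is essentially the argument that the paper imports from Ak\'e et al.\ (their Propositions 3.13--3.14), with the approximating property playing exactly the role the paper assigns to it in place of the localizability/sequence lemma: an auxiliary point $q'$ with $q \ll q'$ (available since $I^+(q)\neq\emptyset$ by approximation) traps the tail of the sequence in a compact diamond $J(p,q')$, and the identities $J^\pm=\overline{I^\pm}$ together with push-up yield reflectivity and distinguishing. Your explicit inclusion of causality in the notion of causal simplicity --- supplied in the first implication via non-total imprisonment and Proposition~\ref{cau-nti} --- is the correct reading and is genuinely needed for the distinguishing step (without it the second implication would fail, e.g.\ on a totally vicious example where all $J^\pm(p)=X$).
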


\begin{proof}
 These statements were shown by Ak\'e et al.\ for Lorentzian length spaces \cite[Propositions 3.13 \& 3.14]{ACS}. The same proof goes through for our assumptions, because the assumption of localizability is only needed to invoke \cite[Sequence Lemma 2.25]{ACS} which in our case is replaced by Lemma~\ref{lemapprox}.
\end{proof}

We also need the following result.

\begin{lem} \label{ghvolfcts}
 Let $(X,d,\tll,\leq,\tau)$ be an approximating Lorentzian pre-length space with limit curves. Suppose, in addition, that $(X,d)$ is a proper metric space equipped with a Borel probability measure of full support and corresponding averaged volume functions $t^+ \colon X \to [-\infty,0]$ and $t^- \colon X \to [0,\infty]$ (see Definition~\ref{volfct}). If $X$ is globally hyperbolic, then $t^\pm$ are time functions. Moreover, for every doubly-inextendible future-directed causal curve $\gamma \colon (a,b) \to X$ it holds that
  \[
   \lim_{s \to b} t^+(\gamma(s)) = \lim_{s \to a} t^-(\gamma(s)) = 0.
  \]
\end{lem}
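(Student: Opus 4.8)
The plan is to split the statement into two parts: first, that $t^\pm$ are time functions on a globally hyperbolic space, and second, the limiting behavior along doubly-inextendible causal curves. For the first part, I would invoke Proposition~\ref{gh-cc}: global hyperbolicity implies causal simplicity, which implies causal continuity, which implies weak causal continuity. Then Theorem~\ref{ccthm} directly gives that $t^\pm$ are time functions. (One should note the minor point that Theorem~\ref{ccthm} requires second countability for the measure to exist, but that is part of the hypotheses here since we are given a Borel probability measure of full support.) That disposes of the first claim with essentially no work.

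For the limiting statement, I would concentrate on $\lim_{s\to b} t^+(\gamma(s)) = 0$; the other limit is symmetric. Since $t^+$ is a time function, $t^+\circ\gamma$ is strictly increasing and bounded above by $0$, so the limit $\ell := \lim_{s\to b} t^+(\gamma(s))$ exists in $[-\infty,0]$ and we must show $\ell = 0$, i.e.\ that $V^+_r(\gamma(s)) \to 0$ for (almost) every $r$. Suppose for contradiction that $\ell < 0$. Then there is some $r_0 \in (0,1)$ and some $\delta > 0$ such that $\mu(I^+_r(\gamma(s))) \geq \delta$ for all $s$ close to $b$ and all $r$ in a set of positive measure. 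Because $\mu$ has full support, this forces $I^+_r(\gamma(s))$ — and hence $I^+(\gamma(s))$ — to contain points bounded away from $\gamma(s)$, uniformly in $s$. The key geometric input is that $\gamma$ is \emph{future-inextendible}: by Lemma~\ref{lem312}, reparametrizing by arclength, $\gamma$ has infinite $d$-length and ``runs off to infinity'' (if $X$ were imprisoning we would still be fine, but global hyperbolicity gives non-total imprisonment). I would then use the nested structure $I^+(\gamma(s_1)) \supseteq I^+(\gamma(s_2))$ for $s_1 < s_2$ (push-up, Lemma~\ref{push-up}), so that $\bigcap_s I^+(\gamma(s)) = \bigcap_s \overline{I^+(\gamma(s))} = \bigcap_s J^+(\gamma(s))$ using causal simplicity and the approximating property. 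The crux is to show this intersection is empty: if $q$ belonged to it, then $\gamma(s) \leq q$ for all $s$, so $\gamma$ would be a future-inextendible causal curve contained in $J^-(q)$; but pick any $p$ with $p \tll \gamma(a)$ (approximating), then $\gamma \subseteq J(p,q)$, which is compact by global hyperbolicity — contradicting that a compact set cannot contain a future-inextendible causal curve (Theorem~\ref{cor315}). Hence $\bigcap_s I^+(\gamma(s)) = \emptyset$, and by continuity of measure from above (using $\mu(X) = 1 < \infty$), $\mu(I^+(\gamma(s))) \to 0$, and similarly $\mu(I^+_r(\gamma(s))) \to 0$ for each fixed $r$, so by dominated convergence $t^+(\gamma(s)) \to 0$.

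The main obstacle is making the measure-theoretic continuity-from-above argument work at the level of the thickenings $I^+_r(p)$ rather than $I^+(p)$: one needs that $\bigcap_s I^+_r(\gamma(s))$ is either empty or controlled. I would handle this by observing $I^+_r(\gamma(s)) = \{x : d(x, I^+(\gamma(s))) < r\}$ and that the sets $I^+(\gamma(s))$ are nested decreasing, so $\bigcap_s I^+_r(\gamma(s)) \subseteq \{x : d(x, \bigcap_s \overline{I^+(\gamma(s))}) \leq r\}$; since the inner intersection is empty (equivalently, $J^+(\gamma(s))$ are nested closed sets with empty intersection, and by properness their ``diameters shrink'' appropriately — or more carefully, for each fixed $r$ the sets $\overline{I^+_r(\gamma(s))}$ have empty intersection because a point $x$ in all of them would lie within distance $r$ of every $J^+(\gamma(s))$, forcing, by properness, a convergent sequence of points $y_s \in J^+(\gamma(s))$ with limit $y$, and then $\gamma(s) \leq y_s$ together with the fact that for $s$ large $\gamma(s)$ escapes every compact set gives a contradiction with $J(p, y_{s_0})$ compact). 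Then continuity of the finite measure $\mu$ from above yields $V^+_r(\gamma(s)) \to 0$ for each $r$, and the integral $t^+(\gamma(s)) = -\int_0^1 V^+_r(\gamma(s))\,dr \to 0$ by dominated convergence (the integrand is bounded by $1$). The symmetric argument at the past end $s \to a$ completes the proof.
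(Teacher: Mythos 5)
Your proposal is correct and follows essentially the same route as the paper: Proposition~\ref{gh-cc} plus Theorem~\ref{ccthm} for the time-function claim, then a contradiction argument that exchanges limit and integral, uses continuity from above of the finite measure on the nested thickenings, uses properness to extract a limit point lying in every $J^{\pm}(\gamma(s))$ (via causal simplicity, $\overline{I^{\pm}}=J^{\pm}$), and imprisons the inextendible curve in a compact causal diamond, contradicting Theorem~\ref{cor315} --- the paper simply runs this at the past end rather than the future end. Only cosmetic repairs are needed: a doubly-inextendible curve on $(a,b)$ has no point $\gamma(a)$, so fix some $s_0$ and trap the future-inextendible tail $\gamma\vert_{[s_0,b)}$ in $J(\gamma(s_0),q)$, and the phrase ``for $s$ large $\gamma(s)$ escapes every compact set'' is not what non-total imprisonment gives and should be replaced by the imprisonment contradiction you already invoke.
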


\begin{proof}
 We follow \cite[Satz II.20]{Die2}. Suppose $X$ is a globally hyperbolic, approximating Lorentzian pre-length space with limit curves. Then by Proposition~\ref{gh-cc} $X$ is causally continuous, and thus by Theorem~\ref{ccthm} $t^\pm$ are time functions.
 
 To show the second part of the statement, assume for contradiction that $\gamma\colon (a,b] \to X$ is a future-directed past-inextendible causal curve with
 \begin{equation} \label{t-g0}
  \lim_{s \to a} t^-(\gamma(s)) > 0.
 \end{equation}
 On the other hand, by standard measure theory \cite[Thm.\ 1.2.5]{KrPa} and Lebesgue's dominated convergence theorem,
 \begin{align*}
  \lim_{s \to a} t^-(\gamma(s)) &= \lim_{s \to a} \int_0^1 \mu\left(I^-_r(\gamma(s))\right) dr \\
  &= \int_0^1 \lim_{s \to a} \mu\left(I^-_r(\gamma(s))\right) dr \\
  &= \int_0^1 \mu\left(\bigcap_{s>a} I^-_r(\gamma(s)) \right) dr.
 \end{align*}
 Assumption~\eqref{t-g0} implies that there exists an $r \in (0,1)$ and a point \[p \in \bigcap_{s>a} I^-_r(\gamma(s)).\] This means that for every $s>a$, there exists a $q \in I^-(\gamma(s))$ such that $d(q,p) < r$. In particular, we can find a sequence $q_n \in I^-\left(\gamma(a + 1/n)\right)$ such that $d(q_n,p) < r$ for all $n \in \nat$. Because $d$ is proper, the closed ball of radius $r$ around $p$ is compact, and hence $q_n$ converges to a limit point $q$ (up to a subsequence). Now for a given $s > a$, we choose $n_0$ such that $s > a + 1/n_0$. Then, by the push-up Lemma~\ref{push-up}, $q_n \in I^-(\gamma(s))$ for all $n \geq n_0$. Because of this, $q \in \overline{I^-(\gamma(s))} = J^-(\gamma(s))$, where the last equality follows by global hyperbolicity and Proposition \ref{gh-cc}. Since $s>a$ was arbitrary, we have that $\gamma \subseteq J^+(q)$. In particular, $q \leq \gamma(b)$, and by global hyperbolicity, $J(q,\gamma(b))$ is compact. But then the curve $\gamma$ is imprisoned in $J(q,\gamma(b))$, in contradiction to global hyperbolicity.
\end{proof}

This concludes our characterization of global hyperbolicity with the existence of Cauchy time functions (and Cauchy sets) in the setting of Lorentzian pre-length spaces.

\section*{Acknowledgements}

LGH would like to thank Didier Solis for communication related to their work~\cite{ACS}. The authors are also grateful to Ettore Minguzzi for feedback on the introduction, and to the anonymous referees for constructive comments. AB's research is supported by the Dutch Research Council (NWO), Project number VI.Veni.192.208.
\medskip

\noindent \textbf{Declarations.} The authors state that there is no conflict of interest. No data was collected or analysed as part of this project. This version of the article has been accepted for publication after peer review, but is not the Version of Record and does not reflect post-acceptance improvements, or any corrections. The Version of Record is available online at: \href{http://dx.doi.org/10.1007/s00023-024-01461-y}{http://dx.doi.org/10.1007/s00023-024-01461-y}

\bibliographystyle{abbrv}
\bibliography{bbib}

\bigskip

Annegret Burtscher and Leonardo Garc\'ia-Heveling

Department of Mathematics

Institute for Mathematics, Astrophysics and Particle Physics (IMAPP)

Radboud University Nijmegen

Postbus 9010

6500 GL Nijmegen

The Netherlands

Email: \texttt{burtscher@math.ru.nl}

\bigskip

Leonardo Garc\'ia-Heveling (current address)

Fachbereich Mathematik

Universit\"at Hamburg

Bundesstra{\ss}e 55

20146 Hamburg

Germany

Email: \texttt{leonardo.garcia@uni-hamburg.de}

\end{document}